\newcommand{\filmstar}{\filledstar}
\newcommand{\cev}[1]{\reflectbox{\ensuremath{\vec{\reflectbox{\ensuremath{#1}}}}}}
\newcommand{\hh}{{\hspace{.3mm}}}
\newcommand{\pdot}{{\boldsymbol{\cdot}}}
\newtheorem{theorem}{Theorem}[section]
\newtheorem{lemma}[theorem]{Lemma}
\newtheorem{proposition}[theorem]{Proposition}
\newtheorem{corollary}[theorem]{Corollary}
\theoremstyle{definition}
\newtheorem{definition}[theorem]{Definition}
\newtheorem{example}[theorem]{Example}
\theoremstyle{remark}
\newcommand{\be}{\begin{equation}}
	\newcommand{\ee}{\end{equation}}
\newcommand{\dangle}{\rangle}
\newcommand{\ba}{\begin{array}}
	\newcommand{\ea}{\end{array}}
\newcommand{\beq}{\begin{eqnarray}}
	\newcommand{\eeq}{\end{eqnarray}}
\newtheorem{lm}{lemma}
\newtheorem{thee}{theorem}
\newtheorem{proo}{proposition}
\newtheorem{co}{corollary}
\newtheorem{rem}{remark}
\newtheorem{deff}{definition}
\newcommand{\bd}{\begin{deff}}
	\newcommand{\ed}{\end{deff}}
\newcommand{\bl}{\begin{lm}}
	\newcommand{\el}{\end{lm}}
\newcommand{\bp}{\begin{proo}}
	\newcommand{\ep}{\end{proo}}
\newcommand{\bt}{\begin{thee}}
	\newcommand{\et}{\end{thee}}
\newcommand{\bc}{\begin{co}}
	\newcommand{\ec}{\end{co}}
\newcommand{\brm}{\begin{rem}}
	\newcommand{\erm}{\end{rem}}
\def\frak{\mathfrak}
\newcommand{\newc}{\newcommand}
\renewcommand{\exp}{\operatorname{exp}}
\let\ccdot\cdot
\def\cdot{\hbox to 2.5pt{\hss$\ccdot$\hss}}
\newc{\aR}{\mbox{\boldmath{$ R$}}}
\newc{\aS}{\mbox{\boldmath{$ S$}}}
\newc{\aT}{\mbox{\boldmath{$ T$}}}
\newc{\aW}{\mbox{\boldmath{$ W$}}}
\newc{\aD}{\mbox{\boldmath{$ D$}}\hspace{-.2mm}}
\newc{\aK}{\mbox{\boldmath{$ K$}}}
\newc{\aL}{\mbox{\boldmath{$ L$}}}
\newcommand{\Rho}{P}
\newcommand{\nn}[1]{(\ref{#1})}
\newcommand{\X}{\mbox{\boldmath{$ X$}}}
\newc{\obstrn}[2]{B^{#1}_{#2}}
\newcommand{\rpl}                         % +) or <+
{\mbox{$
		\begin{picture}(12.7,8)(-.5,-1)
			\put(0,0.2){$+$}
			\put(4.2,2.8){\oval(8,8)[r]}
		\end{picture}$}}
\newcommand{\lpl}                         % (+ or +>
{\mbox{$
		\begin{picture}(12.7,8)(-.5,-1)
			\put(2,0.2){$+$}
			\put(6.2,2.8){\oval(8,8)[l]}
		\end{picture}$}}
\newc{\tensor}[1]{#1}
\newc{\Mvariable}[1]{\mbox{#1}}
\newc{\down}[1]{{}_{#1}}
\newc{\up}[1]{{}^{#1}}
\newc{\JulyStrut}{\rule{0mm}{6mm}}
\newc{\midtenPan}{\mbox{\sf S}}
\newc{\midten}{\mbox{\sf T}}
\newc{\midtenEi}{\mbox{\sf U}}
\newc{\ATen}{\mbox{\sf E}}
\newc{\BTen}{\mbox{\sf F}}
\newc{\CTen}{\mbox{\sf G}}
\def\sideremark#1{\ifvmode\leavevmode\fi\vadjust{\vbox to0pt{\vss% the remark
			\hbox to 0pt{\hskip\hsize\hskip1em%                          will appear only
				\vbox{\hsize3cm\tiny\raggedright\pretolerance10000%          on the side
					\noindent #1\hfill}\hss}\vbox to8pt{\vfil}\vss}}}%
\numberwithin{equation}{section}
\newcommand\sss{\scriptscriptstyle}
\renewcommand\X{{\mathscr X}}
\begin{document}

	\renewcommand{\today}{}
	\title{
	Discrete Dynamics and Supergeometry
%	\\ \\ 
%	Classical Measurement Theory, Supermanifolds\\ and Discrete Dynamics
	}
	\author{Subhobrata Chatterjee${}^\diamondsuit$,   Andrew Waldron${}^\spadesuit$ \&
	Cem Yet\.{i}\c{s}m\.{i}\c{s}o\u{g}lu${}^\clubsuit$
	}
	
	\address{${}^{\diamondsuit, \spadesuit}$Department of Mathematics\\
		University of California\\
		Davis, CA95616, USA} \email{sbhchatterjee,wally@math.ucdavis.edu}
	
	\address{${}^{\clubsuit}$Department of Mathematics\\
		\.{I}stanbul Technical University\\
		\.{I}stanbul, 34469, Turkey} \email{yetismisoglu@itu.edu.tr}

	\vspace{10pt}

	\vspace{10pt}
	
	\renewcommand{\arraystretch}{1}

	\begin{abstract} 
	
We formulate a geometric measurement theory of dynamical classical systems possessing both continuous and discrete degrees of freedom.
The approach is covariant with respect to choices of clocks and canonically incorporates laboratories. The latter are embedded symplectic submanifolds
of an odd-dimensional symplectic structure.
When suitably defined,  symplectic geometry in odd dimensions is exactly the structure needed for covariance.
A fundamentally probabilistic
viewpoint allows classical  supergeometries to describe discrete dynamics. We solve the problem of how to construct probabilistic measures on supermanifolds given a (possibly odd dimensional)
supersymplectic structure.
This relies on a superanalog of the Hodge star for differential forms
and a description of probabilities by convex cones. We also show how stochastic processes such as Markov chains
can be described by supergeometry.

		\vspace{2cm}
		\noindent
		%\begin{keywords}
		{\sf \tiny Keywords: Symplectic geometry, supergeometry, classical mechanics, measurement theory, discrete systems, Markov processes}
		%\end{keywords}
	\end{abstract}

	\maketitle
	
	\pagestyle{myheadings} \markboth{Chatterjee, Waldron \& Yet\'{i}\c{s}m\'{i}\c{s}o\u{g}lu}{Measurement Theory}
	
	\tableofcontents
		
	\newpage
	
	\section{Introduction}
Many physical systems involve both continuous and discrete degrees of freedom. Standard classical mechanics typically treats continuous systems.
Our aim is to give a unified treatment of
dynamical systems with both discrete and continuous degrees of freedom.
We focus on models whose dynamics is locally determined. The key to our approach is to view measurement as fundamentally probabilistic.
Dynamics is often used to describe time evolution, so we also ensure that our approach is covariant with respect to choices of clocks, {\it \`a la} Einstein's theory of general relativity. This requires a a natural generalization of symplectic geometry to odd dimensional manifolds.

\smallskip

Just as symplectic geometry describes continuous classical mechanical systems, we claim that symplectic supergeometry describes dynamical systems with continuous and discrete degrees of freedom. This neccesitates a probabilistic approach.

\smallskip

In quantum physics,  action principles
such as
$$
S=\int \big(\lambda_i(z,\theta) \hh dz^i + \ell_a(z,\theta)\hh d\theta^a\big) \, ,
$$
where $z$ and $\theta$ are respectively ``Bose''
and ``Fermi'' coordinates, 
are employed to describe the quantum mechanics of particles with intrinsic spin (see for example~\cite{brink1976local,casalbuoni1976classical,Berezin:1976eg,Galvao:1980cu,bastianelli2006path}). Almost fifty years ago, Berezin and Marinov attempted  to ascribe a
physical interpretation to a classical 
theory described by such an action, but could not formulate a good notion of positive definite probability distributions~\cite{Berezin:1976eg} (see~\cite{Barducci} for another early attempt). We shall solve this problem.

\smallskip

Our treatment begins with the standard formulation of measurement theory in classical mechanics, based on ideas dating back to Boltzmann, Maxwell and Gibbs and later formalized by Koopman and von Neumann~\cite{koopman1931hamiltonian, neumann1932operatorenmethode}.  In this probabilistic approach
 the state of a system is described in terms of probability densities which  are dual to observables corresponding to the quantities being measured. More general than the notion of a probability density is a ``probability cone'' describing the space of states as a suitable subset of the dual to the space of observables~\cite{MR0229429,MR0229430,MR0249064,MR0242429,MR0297221, MR0342092,schwarz2020geometric,MR3752196}. 
Expectations of observables are given by the natural pairing between elements of these two spaces. This picture is described in Section~\ref{sec2}.

At first sight, the probabilistic picture seems very different to introductory approaches to classical mechanics involving idealized particles and their trajectories 
 in some ``configuration space'' manifold $Z$. However, the manifold~$Z$ still plays an important {\it r\^ole}, but one is now interested in a bundle of probability cones over~$Z$. We outline this theory in Section~\ref{sec2}. Most of the ideas given there are not novel (see for example~\cite{MR2492178,MR2433906,MR2222127, arnol2013mathematical, ArnGiv,busch2016quantum}). A key new concept however  is the notion of a laboratory. This is required to describe classical systems independently of any particular choice of clocks \cite{herczeg2018contact}. Broadly speaking, a laboratory is an instantaneous snapshot 
of the system that is described by a hypersurface in a configuration space that includes directions corresponding to time evolution. 

In the probabilistic picture the notion of time evolution is fundamentally stochastic. However, we wish to achieve covariance with respect to choices of time.
% in  elementary  classical mechanics.
Therefore, rather than working with an even dimensional symplectic manifold as the mathematical model for phase-space, and 
a Hamiltonian function whose Hamiltonian vector field generates dynamics in terms of parameterized phase-space paths, 
 a unified description of dynamics is captured by odd symplectic manifolds:  In odd dimensions, a symplectic manifold describes generalized positions, momenta and time and may thus be viewed as a ``phase-spacetime''. The required data in this setting is minimal, {\it videlicet} a maximal rank closed two form 
whose kernel defines unparameterized paths. Some authors include the data of a volume form~\cite{he2010odd, lin2013lefschetz} in the definition of an odd symplectic manifold. Cosymplectic structures add the data of a closed one-form which in turn defines a volume form~\cite{cappelletti2013survey}. 
We drop any requirement of a volume form in our treatment of odd symplectic geometry; see Section~\ref{sec3}. Phase-spacetime hypersurfaces naturally  serve as laboratories whenever they are symplectic in the standard sense. 
Evolution of probabilistic states is  governed by the Liouville equation.
In Section~\ref{sec4} we give the clock independent/time covariantized analog
of the Liouville equation.

In any stochastic approach to dynamics it is important to ensure that dynamics evolves probabilities into probabilities, or in other words preserves positivity. More generally, any evolution of states must be confined to the probability cone. There is a simple mechanism guaranteeing this property namely re-express states as a positive square of a ``state function'' with respect to a suitable product; see again Section~\ref{sec4}.
This mechanism is in fact employed in the Born interpretation of quantum mechanics, see~\cite{schwarz2020geometric}. This manuever 
is  critical to our supergeometric description of discrete-plus-continuous dynamical systems.

Discrete dynamical systems can be described using supergeometry. Accomplished supergeometers can safely skip the first half of the Section~\ref{sec6} exposition which handles standard notions such as sheaves of exterior algebras, superfunctions, supervectors, super differential forms and super integral calculus \cite{manin1997introduction, witten2012notes,Deligne}. A key notion is the correspondence between (split) supermanifolds and vector bundles provided by the structure theorems of Batchelor~\cite{batchelor1979structure} and Koszul~\cite{koszul1994connections}. 
We also review symplectic supergeometry~\cite{Schwarz1989, Rothstein1990,Schwarz1993,schwarz1994superanalogs} and introduce its odd dimensional counterpart. The latter is necessary for time covariance of discrete-plus-continuous dynamics.

To construct probability cones on symplectic supergeometries both a measure and product are required. The former is for pairing states and observables, and the product defines positive squares and in turn a cone.
The product is constructed via a super Moyal product \cite{curtright2013concise, woit2017quantum}  which implements a super-analog of the standard Hodge pairing on exterior algebras equipped with a metric.
This is described at the end of Section~\ref{sec6}, where our probabilistic treatment of symplectic supergeometry is also given.

Having developed a probability theory on symplectic supergeometries, it is not difficult to construct instantaneous (super) laboratories and study evolution via a super Liouville equation. This is presented in Section~\ref{sec7}. The paper concludes with a simple example.

\section{States, Laboratories and Measurements} \label{sec2}
	A measurement typically amounts to a set of numerical outputs characterizing the state of a system.
	These  ought include errors reflecting uncertainties  associated to measuring devices. Indeed, in practice one usually can only  hope to have a statistical description of a physical system. In that light,   states of a physical system should be viewed as probability distributions each of which 
	assigns a positive, normalizable, measure on some space of physically observable events. 

%Say $\delta$ is the delta of Kronecker

\smallskip
	
More generally, 	
	a set of positive measures 
may be treated as  a {\it  {convex} cone} ${\mathcal C}\,\slashed{\ni} \, 0$ in a topological vector space ${\mathcal V}$, meaning some subset of non-zero vectors that is  closed under addition and multiplication by positive scalars; we shall denote this by~${\mathcal C}<{\mathcal V}$.
When  ${\mathcal V}$ is equipped with a norm~$||\hh\pdot\hh||$, this can be used to  normalize states, and in that case we call $({\mathcal C}<{\mathcal V},||\hh\pdot\hh||)$ a {\it probability cone}.

An {\it observable} is any linear functional~$X:{\mathcal V} \to {\mathbb R}$ (say). The {\it expectation} of the observable $X$ in the {\it state} $\Psi\in{\mathcal C}$ is then 
$$
\langle X\rangle_\Psi = \frac{X(\Psi)}{||\Psi||}\, .
$$

\begin{example}\label{frogsRus}
Let 
%$M$ be a point and 
${\mathcal V}$ be $\mathbb{R}^2$ equipped with the standard $L^1$-norm
and ${\mathcal C}$
 be the (closed) first quadrant $\mathbb{R}^2_+$ with the origin  $(0,0)$ removed. A normalized state $\Psi=(p_{\tt 0} , p_{\tt 1})
 \in \mathcal{C}$ 
 is thus  a pair of probabilities summing to unity.
The expectation of the observable~$X=(x_{\tt 0} \: x_{\tt 1})$
is
\begin{equation*}
		\langle X\rangle_\Psi
			=x_{\tt 0}p_{\tt 0}+x_{\tt 1}p_{\tt 1} \, .
\end{equation*}
In a probability context, the observable $X$  is called a random variable.
\hspace*{\fill}{$\blacksquare$}

\end{example}

\medskip

\noindent
The first quadrant in the above  
is an example of a convex polyhedral cone.
Its one-dimensional corners define vectors,
the expectations of    whose canonical duals (the covectors $\begin{pmatrix} 1\!&\!0\end{pmatrix}$ and $\begin{pmatrix} 0\!&
\!1\end{pmatrix}$)
define probabilities ($p_{\tt 0}$ and  $p_{\tt 1}$) in the standard sense. Later we will encounter more general probability cones (see in particular Example~\ref{conifer}).

\medskip
Classical mechanics is most often formulated as a theory of trajectories in some configuration/phase-space, the latter of which we will model by some manifold $Z$. As intimated earlier, from a probabilistic viewpoint 
one should supplant  trajectories by suitable functions of  $Z$
and somehow view these 
as either states or observables.
Hence we make the following definition for states of a classical dynamical system.

\begin{definition}\label{mylozengeisblack}
Let $C \to \mathcal{S}Z\to Z$ be a subbundle of some  vector bundle $V\to\mathcal{V}Z \to Z$ over a manifold $Z$, whose fibers are a cone~$C \subset V$ (and
%$V$ is a normed topological vector space and 
$0\in C$). We shall call non-zero sections of the bundle ${\mathcal S}Z$ \textit{pre-states}. When  the section space $\Gamma({\mathcal V}Z)$ is equipped with a norm function, normalizable pre-states are termed {\it states}. The set of all states is then a normed space and is termed  
 the {\it state space}.\hfill$\blacklozenge$\end{definition}

%REALLY NEED TWO NORMS, ONE ON STATE SPACE, AND ONE TO DECIDE THC LOCAL CONES!!!
%
%HIERACHY--SECTION SPACE > NORMALIZABLE SECTION SPACE > STATE SPACE.

In the above definition, we have included the zero vector in $C$ so that sections of~${\mathcal S}Z$ with compact support are allowed. The (normalizable) section space $\Gamma({\mathcal S}Z)$ plays the {\it r\^ole} of the probability cone~${\mathcal C}$ discussed above, while $\Gamma({\mathcal V}Z)$ corresponds to the topological vector space ${\mathcal V}$.
Supposing~$V$ is a finite dimensional vector space and that the  bundle of cones $\mathcal{S}Z\to Z$ is a subbundle of some vector bundle $V\to\mathcal{V}Z \to Z$, we then have a natural dual vector bundle $V^* \to \mathcal{V}^*\!Z \to Z$, and may 
  make the following definition.
	\begin{definition}
		Sections of the dual bundle $V^* \to \mathcal{V}^*\!Z \to Z$ are called \textit{preobservables}. If~$Z$ is equipped with a measure $\mu$, when a preobservable $\chi$ obeys
		\begin{equation*}\label{finite}
		\int_Z \mu\hh \chi(\Psi)<\infty
		\end{equation*}
for all states $\Psi$, then $\int_Z \mu \hh \chi(\pdot)$ is  termed  an {\it observable}.
	\hfill$\blacklozenge$\end{definition}
\noindent	Given the data of a measure, ``nice''
 preobservables determine  observables.
When such is available, we shall often  blur the distinction between observables and preobservables.

\begin{example}\label{evenexp}
Let $(M^{2n},\omega)$ be the pair of a compact, even-dimensional manifold and a non-degenerate $2$-form $\omega$. Then we may take $Z=M^{2n}$, $V={\mathbb R}$,  $C={\mathbb R}_{\geq 0}$ and ${
\mathcal S} Z= M\times {\mathbb R}_{\geq 0}$.
The state space  is the space of positive but not everywhere vanishing functions  $\Gamma({\mathcal S}Z)\backslash 0$
with  norm $||\hh\pdot\hh || = \int \omega^{\wedge n} |\hh\pdot\hh |$.
With respect to the measure $\mu := \omega^{\wedge n}$,  any  smooth function~$\chi$ of $M^{2n}$ is a preobservable
because we may define a linear functional on states~$\Psi$ by the product action
$$
\chi(\Psi)=\chi \Psi\, .
$$
Note that this relies on the algebra of functions on $M^{2n}$.
%Moreover, 
%the volume form $\omega^{\wedge n}=:\mu$ 
%gives a measure, 
The expectation of the observable  $X=\int \mu \hh \chi(\pdot)$ is
$$
\langle X \dangle_{\Psi,\mu} = \frac{\int_{M^{2n}}\omega^{\wedge n} \chi\Psi   }{\int_{M^{2n}}\omega^{\wedge n}   \Psi   }\, .
$$
Blurring the distinction between preobservables and observables, we may view the latter as the set of smooth functions on $M^{2n}$.
\hspace*{\fill}{$\blacksquare$}
\end{example}

%	Given sections $X \in \Gamma(\mathcal{V}^*)$ and $s \in \Gamma(\mathcal{S})$, then $X(s)$ defines a function on $M$. Moreover using the norm $||\ \pdot \ ||$ of $V$, we canonically get another function $||s||$ on $M$. As the readouts of a measurement is usually a number, we need one final ingredient, a laboratory, that will associate the expectation of an observable relative to a state. 

Observe that if ${\mathcal V}Z$ is a {\it normed bundle}, meaning that there is a smooth map 
$$
|\,\pdot\, | :\Gamma({\mathcal V}Z) \to C_+^\infty Z
$$	
taking values in positive functions such that $|\, \pdot\,|$ gives a fiberwise norm,
then 
given a measure on $Z$, we obtain the $L^1$-norm
$$
||\hh\pdot\hh || = \int_Z \mu\,  |\,\pdot\, |
$$
on $\Gamma({\mathcal V}Z)$. This mechanism was used in the above example. 
In some contexts a normed bundle is termed a Banach bundle.

As we are henceforth primarily concerned with probability cones constructed as section spaces of vector bundles  ${\mathcal V}Z$, there arise situations where measurements may not access all of $Z$. It may even be that only some submanifold $\Sigma\hookrightarrow Z$ is endowed with norms and measures. Along $\Sigma$ we naturally get a vector bundle ${\mathcal V}\Sigma:= {\mathcal V}Z|_\Sigma$ by restriction 
and in the same way a bundle of cones ${\mathcal S}\Sigma$.
This leads to the notion of a laboratory:

	\begin{definition}
		Any triple $(\Sigma,\mu,||\hh \pdot\hh||),$ where $\Sigma \hookrightarrow Z$ is any submanifold equipped with a measure $\mu$ and a norm $||\,\pdot\, ||$ on $\Gamma({\mathcal V} \Sigma)$, is called a \textit{laboratory}.
	\hfill$\blacklozenge$\end{definition}
\noindent	We can now define {\it laboratory observables} $X=\int_\Sigma \mu \chi(\pdot)$ in terms of  preobservables $\chi\in \Gamma({\mathcal V}^*\!Z)$ subject to  $\int_\Sigma \mu\hh \chi(\Psi)<\infty$
for all $\Psi\in \Gamma({\mathcal S}Z)$ that are normalizable along $\Sigma$.
Then the expectation of $X$ 
as measured by a laboratory $(\Sigma,\mu,||\, \pdot\, ||)$ with respect to some state $\Psi \in \Gamma({\mathcal S}Z)$ is given by
	\begin{equation*}
	\langle
	X\rangle_{\Psi,(\Sigma,\mu,||\, \pdot\, ||)}:= \frac{\int_{\Sigma}
	\mu \hh \chi(\Psi)}{ ||\Psi||}
\, .	\end{equation*}
%	
%	The definitions that we employ are quite general and can be applied in many different settings as the following examples exhibit.
%	
%
%	\noindent Unlike the previous example, our next example involves dynamics. 

\smallskip
In some cases laboratories may inherit measures and norms from the underlying manifold $Z$ when this has sufficiently much data, as is the case in Example~\ref{evenexp}.
\begin{example}
Continuing from Example~\ref{evenexp}, 
let $\Sigma^{2k}$ be any compact submanifold
along which the pullback $\omega^*$ of $\omega$ is non-degenerate (so that $(\Sigma,\omega^*{} ^{\wedge k})$ is itself symplectic).
Then $\big(\Sigma^{2k},\omega^{*\wedge k},\int_{\Sigma^{2k}} \omega^{*\wedge k} |\hh\pdot\hh |\big)$ is a laboratory.\hfill$\blacksquare$
\end{example}

The following is an example where $Z$ itself is not equipped with norms or measures.
	\begin{example}
	Carrying on from Example~	\ref{frogsRus} take $V={\mathbb R}^2\supset \mathbb{R}^2_+=C $
	and consider the manifold~$Z={\mathbb R}\ni t$.
Choosing laboratories $(\Sigma=\{t\},\mu=1,
||\, \pdot\, ||=\sum_{\{{\tt 0},{\tt 1}\}}|\, \pdot \, |
)$, the observable $X=\big(x_{\tt 0}(t)\: \: x_{\tt 1}(t) \big)$ 
has expectation
$$
\langle X\rangle_{\Psi,(\Sigma,\mu, ||\, \pdot\, ||)}
		=
		\frac{x_{\tt 0}(t)\pi_{\tt 0}(t)+x_{\tt 1}(t)\pi_{\tt 1}(t)}{\pi_{\tt 0}(t)+ \pi_{\tt 1}(t) }\, ,	$$
with respect to the state $\Psi=\big(\pi_{\tt 0}(t), \pi_{\tt 1}(t) \big)$ in these laboratories.
\hfill$\blacksquare$
	\end{example}
\noindent	
Viewing the coordinate $t$  as  time, the above example includes dynamics.
In this sense, $	(\pi_{\tt 0}(t), \pi_{\tt 1}(t) )$  is a ``dynamical state''.

\section{Dynamical Systems}
	
%	\subsection{Dynamical phase-space} 

\label{sec3}

A dynamical system is the data of a configuration space  with  evolution. For continuous systems this is typically modeled by a smooth manifold equipped with a parameterized path through each point. For the case of particles moving in space, an important insight of Einstein was that parameterized paths could be replaced by unparameterized ones in a spacetime manifold. 
This perspective both allowed time to be treated covariantly and made manifest the geometry of space and time.
In classical mechanics, dynamics can be described by parameterized paths in  phase-space. But again, it is often propitious, for both reasons of covariance and geometry, to consider unparameterized paths in a phase-spacetime. In both cases, one is (at least locally)  dealing with a manifold foliated by paths. 
\smallskip

Rather than  working with a system of paths, it  is useful to describe physics modeled upon a manifold by local systems of equations. In very many situations the data of a suitable  two-form suffices for this.

	\begin{definition}\label{nothings}
		A \textit{symplectic manifold} is a pair $(Z,\omega)$ where $Z$ is a smooth manifold and $\omega$ is a maximally non-degenerate closed two-form.
		\hfill$\blacklozenge$\end{definition}
			\noindent
			We also need a notion of normalizability for functions on a symplectic manifold~$Z$.
When~$Z$ has even dimension $2n$, we say that\
$\Psi\in C^\infty Z$ is {\it normalizable} if $\int_Z \omega^{\wedge n}|\Psi|<\infty$.
For $Z$ of odd dimension $2n+1$, we call 
	$\Psi\in C^\infty Z$ is {\it normalizable} if $\int_\Sigma \omega^{*\wedge n}|\Psi|<\infty$ for all hypersurfaces $\Sigma\hookrightarrow Z$ along which the pullback $\omega^*$ is non-degenerate.	
	
	\medskip		
 A foliation by paths yields a line subbundle of the tangent bundle. At each point this is determined by the span of the tangent vector to the path determined by the foliation. 	
Thus to  describe dynamics, we would like to assign a distinguished line to each tangent space of  $(Z,\omega)$. 	\begin{definition}
		A {\it dynamical phase-space} is a pair, denoted $(Z,\omega,L)$,  given by a symplectic manifold 
$(Z,\omega)$ and a line subbundle $L \subset TZ$ subject to the compatibility condition that in a neighborhood of every point in $Z$, there is some 
non-vanishing $\rho\in \Gamma(L)$ such that 
$$
\mathcal{L}_{\rho} \omega =0\, .$$	
When $L$ is equipped with a global ray subbundle $R\subset L$,  the dynamical phase-space $(Z,\omega,L)$ is termed  {\it directed}.
	\hfill$\blacklozenge$\end{definition}
An even-dimensional  symplectic manifold equipped with a suitably generic Hamiltonian determines, at least locally, a dynamical phase-space through the tangent lines to  integral curves of the corresponding Hamiltonian vector field.	
	
	\smallskip
	
	Directed dynamical phase-spaces enjoy an everywhere forward notion of time evolution.  Moreover their corresponding line bundles $L$ are trivial, meaning~$L=M\times {\mathbb R}$. Often we will not be concerned with the distinction between the directed and non-directed cases, since our main aim is to establish local theories of measurement. 
	
\smallskip	
	
A vector $\rho$ such that ${\mathcal L}_{\rho}\omega =0$ for some line subbundle $L\subset TZ$ is called {\it symplectic}, and flows generated from such are termed \textit{symplectomorphisms}.
In Definition~\ref{nothings} we quite deliberately did not specify the dimension parity of $Z$.
When this is even, we relabel~$Z$ by~$M$ and call~$(M,\omega)$ a {\it phase-space} (oftentimes this is taken as the definition of a symplectic manifold), while the odd case is termed a {\it phase-spacetime}. The manifold~$M$ describes generalized positions and momenta, while $Z$ in addition covariantly incorporates time.

	\begin{example} \label{oddsympdynamics}
		Let  $H(p,q,t)$ 
		be a time dependent Hamiltonian on the phase-space $M=({\mathbb R}^2\ni(p,q),dp\wedge dq=:\bar \omega)$. This data can be rephrased as a phase-spacetime $Z=(M\times {\mathbb R}\ni t,\omega)$ by viewing $H$ as a function of $Z$ and setting
		$$
		\omega:=\Big(dp  +\frac{\partial H}{\partial q}dt\Big)\wedge \Big( dq -   \frac{\partial H}{\partial p}dt\Big)
%=dp\wedge dq - dH\wedge dt
\, .
		$$
The 	line bundle $L\subset TZ$ is given by the span of the vector field
$$
\rho = \frac{\partial}{\partial t}+\frac{\partial H}{\partial p} \frac{\partial}{\partial q}-\frac{\partial H}{\partial q} \frac{\partial}{\partial p}\, .
$$	
Notice that $\omega(\rho,\pdot)=0$, so clearly
 ${\mathcal L}_\rho \omega =0$.
Of course,  the above symplectic form $\omega$ is the pullback of the symplectic form $\Omega:=dp\wedge dq - dE \wedge dt$  on ${\mathbb R}^4\ni(p,q,E,t)$  
 to the  ``on-shell'' hypersurface $E=H(p,q,t)$.
 \hfill$\blacksquare$
	\end{example}	
 
 	In fact any phase-space $(M,\omega)$ can be treated as a section of the trivial line bundle $Z=M\times\mathbb{R}$ equipped with the symplectic form $\pi^*\omega$ where $\pi:Z\to M$ denotes the canonical projection. The pair $(Z,\pi^*\omega)$ is a phase-spacetime as is any contact manifold~\cite{herczeg2018contact,Corradini}. Not all phase-spacetimes can be obtained so easily.
	
\medskip	

Three symplectic manifolds appear  in Example~\ref{oddsympdynamics}:
$
({\mathbb R}^4, \Omega)$,
$({\mathbb R}^3, \omega)$
and $
({\mathbb R}^2, \bar\omega)
$.
To describe dynamics, the first requires the choice of some hypersurface ($E=H(p,q,t)$ say) along which evolution is described in terms of unparameterized paths. The second is in some sense optimal, since 
the symplectic form $\omega$ determines unparameterized paths describing evolution. The third 
describes dynamics in terms of parameterized paths for which a choice of Hamiltonian is required.
		Indeed, in odd dimensions, 
the data of a symplectic form is primal in the following sense.	
	\begin{proposition}\label{odd-dynamical}
		Every phase-spacetime 
		$(Z,\omega)$  canonically defines a dynamical phase-space. 
	\end{proposition}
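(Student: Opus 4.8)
The plan is to build the line subbundle $L$ directly as the kernel of $\omega$, which requires no auxiliary data and is therefore manifestly canonical. Write $\dim Z = 2n+1$. Because $\omega_z$ is skew-symmetric its rank is even, so the maximal possible rank is $2n$; maximal non-degeneracy means precisely that $\omega_z$ attains rank $2n$ at every $z\in Z$, whence
$$
L_z := \{v\in T_zZ : \omega_z(v,\pdot)=0\}
$$
is exactly one-dimensional at each point. The first step is to check that these fibers form a smooth line subbundle: the form $\omega$ induces a bundle morphism $TZ\to T^*Z$ of constant rank $2n$, and the kernel of a constant-rank morphism is a smooth subbundle, here of fiber dimension $1$. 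This $L$ depends only on $\omega$, and one checks it agrees with the line displayed in Example~\ref{oddsympdynamics}.

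It remains to verify the compatibility condition from the definition of a dynamical phase-space, namely that near every point some non-vanishing $\rho\in\Gamma(L)$ satisfies $\mathcal{L}_\rho\omega=0$. Since $L$ is a line bundle, non-vanishing local sections exist on any trivializing neighborhood, and I claim every such $\rho$ works. The key step is Cartan's magic formula,
$$
\mathcal{L}_\rho\omega = \iota_\rho\, d\omega + d(\iota_\rho\omega)\, .
$$
By construction $\rho$ spans $\ker\omega$, so $\iota_\rho\omega=\omega(\rho,\pdot)=0$ annihilates the second term, while closedness of $\omega$ gives $d\omega=0$, annihilating the first. Hence $\mathcal{L}_\rho\omega=0$ for \emph{every} local section of $L$, which is stronger than the definition demands.

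I anticipate no real obstacle: once the subbundle is in place the verification is a two-line application of Cartan's formula together with $d\omega=0$. The only point needing care is the constant-rank claim for $L$. Here maximal non-degeneracy is essential, since it pins the rank of $\omega$ at $2n$ everywhere; were the rank allowed to drop at special points, the kernel dimension would jump up there and $\ker\omega$ would fail to be a subbundle. Under the stated hypothesis the rank is constant, $L=\ker\omega$ is a genuine smooth line subbundle, and $(Z,\omega,L)$ is the desired dynamical phase-space.
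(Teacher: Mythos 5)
Your proof is correct and follows essentially the same route as the paper's: take $L$ to be the (everywhere one-dimensional, by maximal non-degeneracy) kernel of the map $v\mapsto\iota_v\omega$, and then observe via Cartan's magic formula that $\mathcal{L}_\rho\omega=d\iota_\rho\omega+\iota_\rho d\omega=0$ for any local section $\rho$ of $L$, using $\iota_\rho\omega=0$ and $d\omega=0$. Your added remark that the constant-rank hypothesis is what makes $\ker\omega$ a genuine smooth subbundle is a slightly more explicit justification of a point the paper treats in passing, but the argument is the same.
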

	
	\begin{proof}
		Let $\rho$ denote any non-zero vector field solving
		\begin{equation}
			\iota_{\rho}\omega=0\, . 
			\label{froggy}
		\end{equation}
		Because $\omega$ is a maximal rank two-form, it defines a linear map $\Gamma(TZ)\to \Gamma(T^*Z)$ whose kernel is everywhere one-dimensional. By our definition, the vector $\rho$ 
need not be {\it everywhere} non-vanishing and therefore
such a vector can be constructed on any patch in $Z$. Next,   using  Cartan's magic formula, we compute
		\begin{equation*}
			\mathcal{L}_{\rho}\omega=d\iota_{\rho}\omega+\iota_{\rho}d\omega=0 \, .
		\end{equation*}
		The term $\iota_{\rho}d\omega$ vanishes in the above because the symplectic form is closed. Thus patch-wise, and hence everywhere,
 $\operatorname{span}(\rho)$ 	defines 
		 a line subbundle $L$ with the desired~$\mathcal{L}_{\rho} \omega =0$ property.
	\end{proof}

Our primary focus is now on phase-spacetimes. In particular we need to set up measurement theory and dynamics for states.
	
\section{Phase-spacetime Measurements}	
\label{sec4}
	
	As discussed earlier, the data of a phase-spacetime $(Z,\omega)$ encodes the dynamics of a system.
	Therefore one might expect that knowledge of the state of a system at one instant of ``generalized time'' should determine the state for all times. In other words, a state on a phase-spacetime should encode the information of the system's state at all times.
	
\begin{definition}
Let $(Z,\omega)$ be a phase-spacetime.
Then any normalizable function $0$~$<$~$\Psi\in C^\infty Z$
subject to 
$$
{\mathcal L}_\rho \Psi =0\, ,
$$
for all vector fields $\rho$ obeying $\omega(\rho,\pdot\hh)=0$, is termed a {\it dynamical state}.
\hfill$\blacklozenge$\end{definition}

%\noindent XXXXXXXXXXXXXXXXXXXX
%	
%	
%\begin{remark}	
%	Usually, experimenters can  perform only finite time and size measurements. Therefore, it often suffices to consider only local dynamics. Moreover, since $\mathcal{L}_{\rho}\omega=0$, areas and even-dimensional hypervolumes as measured by $\omega^{\wedge 2k}$ are preserved under evolution.  
%\end{remark}

	%The dynamics on $M$ is encoded in the integral curves of the flow generated by a choice of the symplectic vector field $\rho$. Since $\mathcal{L}_{\rho}\omega=0$, areas and even dimensional hypervolumes as measured by $\omega^{\wedge 2k}$ are preserved by the system's evolution. The existence of a ray subbundle guarantees global future directed evolution. Since usually experimenters can perform only finite time and size measurements, it ought suffice to consider only local dynamics for part of the phase-space. \\
	
	%The requirement that the flow preserves the symplectic structure generalizes the Liouville's theorem: Since $\mathcal{L}_{\rho}\omega=0$, areas and even dimensional hypervolumes as measured by $\omega^{\wedge 2k}$ are preserved by the system's evolution. 

	%%%%%%%%%%%%%%%%%%%%%%%%%%%%%%%%%%%%%%%%%%%%%%%%%%%%%%%%%%%%%%%%%%%%%%%%%%%%
	%%%%%%%%%%%%%%%%%%%%%%%%%%%%%%%%%%%%%%%%%%%%%%%%%%%%%%%%%%%%%%%%%%%%%%%%%%%%
	
%	In the standard formulation of classical mechanics, one studies states as probability measures $\omega^{\wedge n} \psi$ on a phase-space by pairing a normalized distribution~$\psi$ with the volume form $\omega^{\wedge n}$. 
%	

Next we discuss how dynamical states can be measured. 
The symplectic form $\omega$ on a phase-spacetime does not define a volume form, so expectations of observables cannot be defined as was done in Example~\ref{evenexp}. This is a feature not a bug, since experiments are often viewed as taking place at   some instant in ``time''. 

\begin{definition}
Let $(Z,\omega)$ be a phase-spacetime.
Then any hypersurface $\Sigma$ such that the pullback of $\omega$ is non-degenerate is termed an {\it instantaneous laboratory}.
\hfill$\blacklozenge$\end{definition}

An instantaneous laboratory gives a generalized notion of  ``when'' one plans to measure a dynamical state. 
Observables encode what quantity is being measured. This ought be specified independently of any particular choice of laboratory. 
We note that one can also consider more general laboratories given by even dimensional submanifolds along which $\omega$ is non-degenerate. This is useful when one plans to measure only some subset of observable quantities, such as is the case in statistical mechanics.
The ideas presented below generalize straightforwardly to these cases. 

In a phase-spacetime context we will deliberately blur the distinction  between preobservables and observables, and hence make the follow definition.
\begin{definition}
Let $(Z^{2n+1},\omega)$ be a phase-spacetime. Then any function~$X\in C^\infty Z$
such that $\int_{\Sigma} \omega^{*\wedge n}\Psi X<\infty$ for all instantaneous laboratories $\Sigma$ and dynamical states $\Psi$,
 is called an {\it observable}.
\hfill$\blacklozenge$\end{definition}

The data of a dynamical state and an instantaneous laboratory suffices to compute expectations of observables at any ``generalized instant in time''.

\begin{definition}
Let $(Z^{2n+1},\omega)$ be a phase-spacetime, $\Psi$ be a dynamical state and $\Sigma$ an instantaneous laboratory. Then the {\it expectation of $X$ as measured by the laboratory $\Sigma$ with respect to the state $\Psi$} is defined to be
$$
\langle X\dangle_{\Psi,\Sigma}:=
			\frac{\int_{\Sigma} \omega^{*\wedge n} \, X\Psi }{\int_{\Sigma} \omega^{*\wedge n}\,  \Psi} \, .
$$
	
\hfill$\blacklozenge$\end{definition}

\begin{example}[Massless relativistic particle]
We can describe a photon moving in the spacetime
$$
ds^2 = -dt^2 + h_{ij}(x,t) dx^i dx^j
$$
as follows. For $(Z,\omega)$ we take $Z={\mathbb R}^7_\star\ni(0\neq p_i,x^i,t)$ and
$$
\omega =\Big(dp_i - p^k \Gamma_k{}^j{}_i p_j\hh
\frac{dt}E
\Big) \wedge
\Big( d x^i -  p^i\hh
\frac{dt}{E}
\Big)\, .
$$
In the above indices are raised and lowered with the spatial metric $h_{ij}$, $\Gamma_k{}^j{}_i$ are the corresponding Christoffel symbols, and the photon energy
$$
E:=\sqrt{p_i h^{ij} p_j}\, .
$$
For simplicity let us make the simple spatial metric ansatz
$$
h_{ij}(x,t) = f(t)^2\, \delta_{ij}\, ,
$$
which for monotonic increasing $f(t)$ describes an expanding universe.
Let us consider an instantaneous laboratory $\Sigma=\{(p_i,x^i,0)\}$.
The measure $\mu$ obtained by pulling back~$\omega$ is given  by
$$
\frac1{3!}\hh \mu = \bigwedge_{i=1..3}dp_i \wedge dx^i\, .
$$
Now suppose that the state of the system in the initial laboratory is given by some non-negative function $\psi(p,x)$ that has unit  $L^1$-norm  with respect to $\mu$.

Now, it is easy to check that the vector 
$$
\rho = f(t) \left( \frac{\partial}{\partial t}+ \hat p^{\hh i} \frac{\partial}{\partial x^i} \right) \, ,
$$
is symplectic,
where $\hat p$ is a unit vector with respect to the spatial Euclidean metric which we from now on use to raise and lower indices.
Because $f$ is monotonic, we may introduce a new variable~$\tau$ such that $d\tau/dt=f^{-1}$.
It is easy to solve ${\mathcal L}_\rho \Psi=0$ with this initial data, giving
$$
\Psi =\psi(p, x - \hat p \tau)\, .
$$
The expectation of an observable $X(p,x,\tau)$ with respect to a constant $\tau=T$ laboratory is then given by
$$
\langle X\rangle_{\Psi,\tau=T}=\int \big(\prod_{i=1..3} dp_i dx^i\big) X(p,x,T)
\psi(p, x - \hat p \hh T)\, .
$$
In the case where $\psi$ is a unit mass function strongly peaked around a point $(p,x)=(P,X)$, one has
$$
\langle p_i\rangle_{\Psi,\tau=T}\approx P_i \:\mbox{ and }
\langle x^i\rangle_{\Psi,\tau=T}\approx \hat P^i T\, .
$$
%\vspace{-2cm}
\hfill{$\blacksquare$}
%\vspace{2cm}
\end{example}

Observe that in the above example, 
positivity of the initial state $\psi(p,x)$ implies the same for the dynamical state $\Psi$ everywhere in $Z$ so that indeed $\Psi\in \Gamma({\mathcal S}Z)$. This follows generally because the  dynamical state is obtained by Lie dragging the initial state along the direction defined by the line bundle $L$.

Now suppose we are given a one-parameter family of laboratories $\Sigma_\tau$ on $(Z^{2n+1},\omega)$. Because these are necessarily  transverse to the line bundle $L$, we may at least locally realize these as the exponential map of some section of $\rho\in \Gamma(L)$, {\it id est}
$$
\Sigma_\tau = \exp(\tau \rho)(\Sigma)\, ,
$$
where $\Sigma:=\Sigma_0$.
This yields an evolution law for expectations
$$
\frac{d( \langle X\dangle
_{\Psi,\Sigma_\tau})}{d\tau}\Big|_{\tau=0}=\frac{\int_\Sigma \omega^{*\wedge n} ({\mathcal L_\rho X) \Psi}}{\int_\Sigma \omega^{*\wedge n} \Psi}\, .
$$
The above result is easily established remembering that $\rho$ preserves both $\omega$ and $\psi$.

%	REVAMP OR KILL!!
%	On a dynamical phase-spacetime $(M,\omega,L)$, the expectation of an observable $X$ with respect to a state $\psi$ as measured by a laboratory $(\Sigma,\Omega_{\Sigma})$ becomes a dynamical quantity 
%	\begin{equation*}
%		\overline{X_{\Sigma}(\tau)}= \frac{\int_{\Sigma} \phi_{-\tau}^*(\Omega_{\Sigma} X) \psi}{\int_{\Sigma} \phi_{-\tau}^*(\Omega_{\Sigma}) \psi} \, ,
%	\end{equation*} 
%	where $\phi_{\tau}:M\to M$ denotes the one-parameter family of symplectomorphisms generated by $\rho$, and $\tau$ is the evolution parameter. While the distribution $\psi$ obeys the Liouville equation $\mathcal{L}_{\rho}\psi=0$, the same is generically not true of the observable or the laboratory measure $\Omega_{\Sigma}$. In the special case that the laboratory measure $\Omega_{\Sigma}$ is given in terms of the symplectic form by $\Omega_{\Sigma}=\omega^{\wedge k}$ for some $1\leq k\leq n$, symplectomorphisms preserve the state~$\Psi=\psi \hh \Omega_{\Sigma}$ and the dynamical expectation is then determined by the flow of the observable $X$,
%	\begin{equation*}
%		\overline{X_{\Sigma}(\tau)}= \frac{\int_{\Sigma} \omega^{\wedge k} \phi_{-\tau}^*(X) \psi}{\int_{\Sigma} \omega^{\wedge k} \psi} \, .
%	\end{equation*} 
%	Moreover the rate of change of the expectation
%	\begin{equation*}	
%		\frac{d}{dt}	\overline{X_{\Sigma}(\tau)}= \frac{\int_{\Sigma} \omega^{\wedge k} \mathcal{L}_{\rho}X \psi}{\int_{\Sigma} \omega^{\wedge k} \psi} \, ,
%	\end{equation*}
%	is governed by the infinitesimal change of the observable $X$. 

We next ask the silly question, why was the dynamical state $\Psi$ a positive function? The silly answer is that $\Psi$ must be a square
\begin{equation}\label{silly}
\Psi = \Phi^2\, ,
\end{equation}
since this is manifestly positive.
In more general models involving states described by  sections of more complicated vector bundles, this question and its answer are no longer so trivial, but can be addressed via generalized ``squares''.

\begin{definition}
Let $(Z,\omega)$ be a phase-spacetime. The {\it fundamental state space} is the set of smooth functions $\Phi\neq 0$ whose squares are normalizable, and that are subject to
\begin{equation}\label{Schrodinger}
{\mathcal L}_\rho \Phi =0
\end{equation}
for all vector fields $\rho$ obeying $\omega(\rho,\pdot\hh)=0$. Such a function  is termed a {\it state function}.
\hfill$\blacklozenge$\end{definition}

\begin{lemma}
The square of any state function is a dynamical state.
\end{lemma}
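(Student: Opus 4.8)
The plan is to unwind the definitions and verify the two defining properties of a dynamical state for $\Psi = \Phi^2$, namely positivity together with normalizability, and the Liouville-type condition $\mathcal{L}_\rho \Psi = 0$. Since $\Phi$ is a smooth nonzero function, $\Psi = \Phi^2$ is smooth and strictly positive wherever $\Phi \neq 0$, and is nonnegative everywhere; the requirement $0 < \Psi$ in the definition of a dynamical state should be read in the sense that $\Psi$ is a nonvanishing nonnegative function, which holds since $\Phi \neq 0$. Normalizability of $\Psi$ is immediate because the defining condition on a state function is precisely that its square be normalizable, so there is nothing further to check there.

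The substantive point is the invariance condition. First I would fix an arbitrary vector field $\rho$ obeying $\omega(\rho, \pdot\hh) = 0$; by the definition of a state function, $\Phi$ satisfies $\mathcal{L}_\rho \Phi = 0$ for every such $\rho$. Since $\mathcal{L}_\rho$ acts on smooth functions as the directional derivative $\rho(\cdot)$, it obeys the Leibniz rule, and I would compute
\begin{equation*}
\mathcal{L}_\rho \Psi = \mathcal{L}_\rho(\Phi^2) = 2\hh \Phi\, \mathcal{L}_\rho \Phi = 0\, .
\end{equation*}
This holds for every $\rho$ in the kernel of $\omega$, which is exactly the condition required of a dynamical state.

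Assembling these observations, $\Psi = \Phi^2$ is a smooth, nonnegative, nonvanishing, normalizable function annihilated by $\mathcal{L}_\rho$ for all symplectic $\rho$ in the kernel of $\omega$, hence a dynamical state. There is no genuine obstacle here: the only thing to be slightly careful about is the interpretation of the strict positivity clause $0 < \Psi$ in the definition of a dynamical state, which I would address by noting that $\Psi$ vanishes exactly where $\Phi$ does and that $\Phi$ is required to be nowhere identically zero as a nonzero function, so $\Psi$ qualifies as a nonvanishing state in the same sense as the square \eqref{silly} of the positive dynamical state discussed above. The entire argument is a one-line Leibniz computation wrapped in a definition check.
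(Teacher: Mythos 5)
Your proof is correct and follows essentially the same route as the paper: the paper's proof consists precisely of the one-line Leibniz-rule computation $\mathcal{L}_\rho(\Phi^2)=2\Phi\,\mathcal{L}_\rho\Phi=0$, leaving positivity and normalizability implicit in the definition of a state function. Your additional remarks on how to read the positivity clause are a reasonable elaboration but not a departure from the paper's argument.
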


\begin{proof}
Let $\Phi$ be a state function.
We only need establish that ${\mathcal L}_{\rho} \Phi^2 =0$
for any vector field $\rho$ obeying $\omega(\rho,\pdot\hh)=0$. This is a direct consequence of the Leibniz rule for Lie derivatives.
\end{proof}

Finally, observe that when expressed in terms of state functions, expectations of observables take a form highly reminiscent of that for quantum mechanical systems:
$$
\langle X\dangle_{\Psi,\Sigma}:=
			\frac{\int_{\Sigma} \omega^{*\wedge n} \, \Phi X\Phi }{\int_{\Sigma} \omega^{*\wedge n}\,  \Phi^2} \, .
$$ 
State functions are in this sense classical analogs of quantum mechanical wave functions and Equation~\nn{Schrodinger} is the analog of the Schr\"odinger equation.

		\section{Super Geometry} \label{sec6}
	We are primarily interested in discrete dynamical systems associated to symplectic supermanifolds. %\cite{schwarz1994superanalogs}. 
The space of functions of $m$ Grassmann variables is a vector space of dimension $2^m$. This suggests studying independent coin flips or Bernoulli random variables. Indeed consider an $m$-bit computer. In a perfect world the state of this computer could be described by an $m$-element list of ones and zeros:
	$$
	\underbrace{{\tt 110010010000111111011}\cdots {\tt 001110001101}}_{m \ \rm terms}
	$$
	However in principle, even classical measurements of discrete systems are error-prone, so in general we only know a list of $2^m$ probabilities.
Moreover, observables are then random variables assigning a real number (say) to each of $2^m$ configurations of ${\tt 0}$'s and~${\tt 1}$'s. This data can encoded by a superfield:  	
	Introducing Grassmann variables~$\{\theta_{\tt 1},\ldots,\theta_{\tt m}\}$, the single superfield
	$$
	X(\theta) = x_{{\tt 000}\cdots{\tt 00}} + x_{{\tt 100}\cdots{\tt 00}} \theta_{\tt 1} + \cdots 
	+x_{{\tt 101100}\cdots{\tt 00}} \theta_{\tt 1} \theta_{\tt 3} \theta_{\tt 4} + \cdots
	+
	x_{{\tt 11}\cdots{\tt  11}}
	\theta_{\tt 1} \theta_{\tt 2} \cdots \theta_{\tt m-1}\theta_{\tt m}
	$$
	succinctly encodes elements $( x_{{\tt 000}\cdots{\tt 00}},\ldots ,  x_{{\tt 111}\cdots{\tt 11}})$ of the set $[0,1]^{2^m}$.	The question of how to write down dynamical states for dynamical systems that include discrete degrees of freedom is the main goal of the remainder of this article. 
For the simplest case of a two state discrete system, for an observable $X(\theta) = 	\frac{x_{{\tt 0}} + x_{{\tt 1}}}2 +\frac{x_{{\tt 0}} - x_{{\tt 1}}}2\hh \theta$, a state might be given as
$$
\Psi = p_{\tt 0} (1+\theta) + p_{\tt 1} (1-\theta)\, ,
$$
where the probabilities $p_{\tt 0}$ and $p_{\tt 1}$ sum to unity (the combinations of Grassmann variables appearing here are to do with the construction of pure states).
The expectation $\langle X\dangle_{\Psi}$ of~$X$ with respect to the state $\Psi$ should then be
$$
p_{\tt 0} x_{\tt 0} +  p_{\tt 1}  x_{\tt 1}
\stackrel?=\frac{\int_{\mathcal M}  X\Psi }{\int_{\mathcal M}   \Psi}\, .
$$
The natural question is whether this expectation can be expressed as a suitable integration over a symplectic supermanifold ${\mathcal M}$. We would also like to develop the corresponding notions of instantaneous laboratories and dynamics. Thus we now describe pertinent aspects of supermanifold geometry;  a more detailed account can be found in~\cite{manin1997introduction, rogers2007supermanifolds}. Many supergeometers can safely proceed directly to Section~\ref{supergeometerslandhere}.

	\subsection{Supermanifolds}\label{showkeys} An $(n,m)$ \textit{supermanifold} $\mathcal{M}$ is a pair $(M,\mathcal{A})$ where~$M$ is a smooth, dimension $n$ manifold termed the \textit{body}, and $\mathcal{A}$ is a $\mathbb{Z}_2$-graded commutative sheaf of functions on $M$; $m$ is called the \textit{odd-dimension} of~$\mathcal{M}$. By definition, there is a surjective map $\epsilon: \mathcal{A} \to C^{\infty}M$ compatible with restrictions to open sets. That is, for each open set~$U \subseteq M$, we have a surjective homomorphism $\epsilon_{U}: \mathcal{A}U \to C^{\infty}U$, where $\mathcal{A}U:=\mathcal{A}|_{U}$. Moreover~$\mathcal{A}$ satisfies a local triviality condition: given an open cover $\{U_{\alpha}\}$ of~$M$, there are local isomorphisms \begin{equation}\label{exterior}\mathcal{A}U_{\alpha} \cong C^{\infty}U_{\alpha} \otimes \Lambda \mathbb{R}^m\, .\end{equation} In physics, the sheaf of functions $\mathcal{A}$ corresponds to the space of superfields. 
	Hence we will slightly abuse notations and also use ${\mathcal A}$ to denote the space of superfunctions $\Gamma(M,{\mathcal A})$.
	These form a supercommutative algebra and the output of the map $\epsilon$ is called the body of a superfield. 
	
	Note that if $M$ is any smooth manifold, then the pair $(M, C^{\infty}M)$ defines a supermanifold with a trivial $\mathbb{Z}_2$-grading. A less trivial example is $(\mathbb{R},C^{\infty}\mathbb{R})$ where $C^{\infty}\mathbb{R}=C_+^{\infty}\mathbb{R}\oplus C_-^{\infty}\mathbb{R}$ and $\pm$ denote even and odd functions, respectively.  Another important example is the exterior bundle $\Lambda M$ for any~$M$.  	
	
	A supermanifold locally looks like the trivialization of a vector bundle. We call this the local model and denote it by $\mathbb{R}^{n|m}:=(\mathbb{R}^n,C^{\infty}\mathbb{R}^n \otimes \Lambda\mathbb{R}^m)$. Of course $\mathbb{R}^{n|m}$ is itself a supermanifold. The sheaf of functions on~$\mathbb{R}^{n|m}$ is a section space of the vector bundle $\Lambda\mathbb{R}^m\to\mathbb{R}^n \times \Lambda\mathbb{R}^m\to  \mathbb{R}^n$, {\it id est}
	$$C^{\infty}\mathbb{R}^n \otimes \Lambda\mathbb{R}^m= \Gamma(  \mathbb{R}^n \times \Lambda\mathbb{R}^m )\, .$$ 
	These can be encoded by ``functions'' $F(x,\theta)$ of 
	 $n$ bosonic/even coordinates~$\{x^i\}$ and $m$ fermi\-onic/odd/Grassmann ``coordinates'' $\{\theta^a\}$, respectively, where the latter are subject to the relation 
	\begin{equation*}
		\theta^a\theta^b+\theta^b\theta^a=0\, ,
	\end{equation*} 
	and
	$$
	F(x,\theta) := f(x) + f_a(x) \theta^a + f_{ab}(x) \theta^a \theta^b + \cdots + f(x)_{a_1\cdots a_m} \theta^{a_1}\cdots \theta^{a_m}\, .
	$$
The coefficients above are real-valued functions of ${\mathbb R}^n$.

\smallskip
	One often wants to consider \textit{supermanifold morphisms} which are defined as follows: Let $\mathcal{M}=(M,\mathcal{A})$ and  $\mathcal{N}=(N,\mathcal{B})$ be supermanifolds. A supermanifold morphism
	$$\Phi:\mathcal{M} \to \mathcal{N}$$
	 is a pair of maps $(f,\varphi)$ where $f:M \to N$ is a manifold morphism and $$\varphi : \mathcal{B}\to f_*\mathcal{A}$$ is a morphism of sheaves over $N$. The map $f_*$ is constructed by considering preimages of open sets in $N$, and is often termed the sheaf pushforward~\cite{manin1997introduction}.\\
	
	 On a supermanifold, there is a canonical, globally-defined \textit{body map} between sheaves $$i:(M,C^{\infty}M) \to \mathcal{M}\, ,$$ given by the pair $(\text{id}_M,\epsilon_{M})$. The body map   pulls back superfields and superdifferential forms (the latter will be defined below) on $\mathcal{M}$ to the body manifold $M$. The body of a superfield $F\in \mathcal{A}M$ is denoted by $F_0:=\epsilon_M(F)$ (and similarly for forms).

We are particularly interested in the case where we can describe supergeometry using vector bundles as this melds well
with our previous discussion of probabilistic states, see in  particular Definition~\ref{mylozengeisblack}.
For that we need a map projecting fibers to the base. However a  projection $$\pi:\mathcal{M} \to(M,C^{\infty}M)\: \mbox{ such that  }\: \pi\circ i =\text{id}_M$$ may not, in general, exist globally~\cite{manin1997introduction}. When it does, $\mathcal{M}$ is called a \textit{split supermanifold}. A seminal theorem due to  Batchelor guarantees that real supermanifolds are always split~\cite{batchelor1979structure}; this relies on partitions of unity for the base manifold $M$.

	A split supermanifold~$\mathcal{M}$ is isomorphic to an  exterior bundle $$\pi: \mathcal{E}M = M\ltimes \Lambda V^* \to M$$ associated to a vector bundle $$\mathcal{V}M:=M\ltimes V\to M\, .$$ Points in $\mathcal{V}M$ may be loosely thought of as  ``superpoints'' in $\mathcal{M}$, while the sections of $\mathcal{E}M$ correspond to superfields on $\mathcal{M}$. It is possible to define a precise notion of superpoints using the so-called ``functor of points''~\cite{vaquie2021sheaves}, but this will not play an important {\it r\^ole} here.
		Because of our interest in vector bundles,  in what follows always work with split supermanifolds.  \\
	
	The isomorphism between a  split supermanifold $\mathcal{M}$ and a vector bundle~$(\mathcal{E}M,\pi)$  cannot be realized canonically; a given choice of such is termed a \textit{splitting}. We will label  splittings by their corresponding projection $\pi$. Certain supermanifold (auto)morph\-isms~$\mathcal{M}\to \mathcal{M}$  neither preserve the projection $\pi$ nor correspond to bundle automorphisms of $\mathcal{E}M$. Moreover,  the vector bundle $\mathcal{E}M$ has an
	 additional $\mathbb{Z}_{m+1}$-grading 
	because it is an associated exterior bundle of $\mathcal{V}M$. This grading is not preserved by such morphisms. 
While a supermanifold $\mathcal{M}$ itself need not have a canonical~$\mathbb{Z}_{m+1}$-grading, there is a canonical (splitting independent) descending filtration. This is described below, but first we develop the vector bundle picture and discuss superdiffeomorphisms.

\smallskip
	
	The body $M$, defined as the quotient of the total space~$\mathcal{E}M$ by the projection~$\pi$, is independent of any splitting. Because we want to work independently of the choice of a splitting, we could instead consider the following quotient 
	\begin{equation*}
		\mathcal{M}:=\Bigg( \bigsqcup_{ \{ \Phi: \mathcal{M}\to \mathcal{M} \} }(\mathcal{E}M,\pi\circ\Phi) \Bigg) \Bigg/ \sim\, ,
		%\pi_{\varphi} = \varphi \curvearrowright \pi_0 \newline \varphi \hh \in \hh \text{SDiff}(M)} 
\end{equation*}
and take this as a  working definition of a supermanifold. In the above, the quotient is by the equivalence relation $(\mathcal{E}M,\pi_1) \sim (\mathcal{E}M,\pi_2)$ if and only if~$\pi_1=\pi_2\circ\Phi$ for any supermanifold automorphism $\Phi$ on $\mathcal{M}$. In particular we focus on the case that $\mathcal{M}$ is a smooth supermanifold, and quite generally  assume that all structures are smooth. In this case, we term such supermanifold automorphisms, \textit{superdiffeomorphisms} of $\mathcal{M}$.
%, that is, an automorphisms of the local model $\mathbb{R}^{n|m}$ denoted by $GL(n|m)=\text{Aut}(\mathbb{R}^{n|m})$. 
Note that by construction $\mathcal{M}$ is itself a vector bundle with base~$M$ and projection obtained by suitably quotienting a disjoint union of projections. \\%We are interested in phase-spaces that are split supermanifolds. \\

For computations, we may always choose a particular representative~$(\mathcal{V}M,\pi)$, or equivalently $(\mathcal{E}M,\pi)$; observe this does not rely on a particular choice of local coordinates. On the other hand, on a trivializing chart with coordinates $x^i$ on $M$, let us denote the fiber coordinates of~$\mathcal{V}M$ and~$\mathcal{E}M$ by $y^a$ and $\theta^a$, respectively. Then, on intersecting trivializing covers, the respective  bundle automorphisms take the form 
\begin{equation}
	\tilde{x}^i=\tilde{x}^i(x)\, , \quad \tilde{y}^a=g^a_{\ b}(x)y^b\, , \quad \tilde{\theta}^a=g^a_{\ b}(x)\theta^b\, , \quad g \in GL(m,\mathbb{R})\, . \label{glue}	
\end{equation}
Here, the fiber coordinates of~$\mathcal{V}M$ are Grassmann even, whereas that of $\mathcal{E}M$ are odd. This is an example of the parity reversal operation $\Pi$, that acts as $\Pi y^a = \theta^a$, and changes the Grassmann even grading of the coordinates $y^a$ to the  odd one of $\theta^a$. In some contexts, it is useful to think of these odd coordinates as covectors~$\theta^a=dy^a$. \\

In general, the \textit{parity reversal} $\Pi$ is defined  on a $\mathbb{Z}_2$-graded vector space $V=V_0 \oplus V_1$ by
\begin{equation*}
	(\Pi V)_0 = V_1\, ,\qquad  (\Pi V)_1 = V_0 \, , 
\end{equation*}
where subscripts $0,1$ denote  even and odd subspaces, respectively. In particular, if~$V$ is purely even, \textit{id est} $V=V_0$, then  polynomial functions on $ \Pi V$ can be identified with elements of the exterior algebra of the dual $V^*$, that is 
\begin{equation*}
	\mathbb{R}[\Pi V] \cong \Lambda V^*\, . 
\end{equation*}
Consequently, by applying the parity reversal fiberwise to the vector bundle $\mathcal{V}M$ and considering functions on the total space, we end up with the section space of $\mathcal{E}M$. This observation will be important when we discuss super vectors on $\mathcal{M}$ and their relation to derivations on $\Pi \mathcal{V}M$.

Given data  such as a particular superfield on a supermanifold, we may pick a representative~$(\mathcal{E}M,\pi)$, and then consider bundle
 invariant expressions (respecting~\nn{glue}) built therefrom. 
 In turn we can
  explicitly compute the superdiffeomorphism transformation of such expressions in local coordinates $(x,\theta)$ to verify whether they are supermanifold invariants. For instance any superfield 
 $F\in \Gamma(\mathcal{E}M)$ has a decomposition
\begin{equation*}
	F(x,\theta)= F_0(x)+\hat{F}(x,\theta)\, ,
\end{equation*}
where
\begin{equation*}
	\hat{F}=\sum_{a}\theta^{a}\hat{F}_{a}(\theta)=:O(\theta)\, .
\end{equation*}
Here the  notation 
$O(\theta^k)$ denotes terms of polynomial degree at least $k\in\{0,\ldots,m\}$ in the odd coordinates $\theta$. In these coordinates the invariantly defined body of $F$ is given by   $F_0(x)\in C^{\infty}M$ while the \textit{soul} $\hat{F}$  depends on the choice of splitting $\pi$. Superdiffeomorphisms themselves  are locally represented by a set of superfields
\begin{equation}
	\tilde{x}^i=\tilde{x}^i(x,\theta)\, , \quad  \tilde{\theta}^a= \tilde{\theta}^a(x,\theta)\, . \label{superdiff}
\end{equation}
Note that the new coordinates $(\tilde{x},\tilde{\theta})$ may have  more general dependence on the original  ones than that of the bundle gluings~\nn{glue}. 
Also, by dint of invertibility, $\tilde{x}$ must be an even polynomial of $O(\theta^0)=:O(1)$, and $\tilde{\theta}$ is an odd polynomial of $O(\theta)$. \\

It follows from the structure of superdiffeomorphisms described in~\nn{superdiff}, that the algebra of superfields~$\mathcal{A}$, and hence the supermanifold $\mathcal{M}$ itself, has a 
%natural 
decreasing filtration:
\begin{equation}\label{filter}
	\mathcal{A}=\mathcal{A}_0 \supset \mathcal{A}_1 \supset \cdots \supset \mathcal{A}_m \, ,
\end{equation}
where $m$ is the odd dimension of $\mathcal{M}$, and in any choice of coordinates $\mathcal{A}_k = O(\theta^k)$. Observe that given a superfield $F \in \mathcal{A}_k$, then $F_k \in \mathcal{A}_k/ \mathcal{A}_{k+1}$ is invariantly defined. 
The body~$F_0$ of $F \in \mathcal{A}$ is a particular example.
%In fact superfields $F \in \mathcal{A}_m$ 
%correspond to bundle densities~\cite{}. \edz{Find ref}
 A choice of splitting $(\mathcal{E}M,\pi)$ further induces a $\mathbb{Z}_{m+1}$-grading associated with this filtration. Indeed, the choice of a splitting is equivalent to such a~$\mathbb{Z}_{m+1}$-grading~\cite{koszul1994connections}.
 % In particular $\mathcal{A}_m$ coincides with top grade superfields on a representative. \\

\medskip

\textit{Super vectors} will be central to 
our study of dynamics; these are defined as sections of the \textit{tangent sheaf} $\hh T\mathcal{M}$. The latter is  the subsheaf of graded derivations of the sheaf of endomorphisms of $\mathcal{A}$. Super vector fields form a rank $n+m$, free $\mathcal{A}M$-module. 
They also 
%Super vectors 
form a Lie superalgebra under the \textit{supercommutator} (inherited from the usual composition of endomorphisms) of super vector fields denoted by $[\pdot,\pdot\}$. We will also need superdifferential forms. \textit{Super covectors} are sections of the \textit{cotangent sheaf}~$T^*\! \mathcal{M}$. The latter is defined as the pointwise $\mathcal{A}M$-dual of the tangent sheaf $T\mathcal{M}$. Sections of $T^* \! \mathcal{M}$ are called \textit{superdifferential one-forms}; these also form a rank $n+m$, free~$\mathcal{A}M$-module. Higher forms are defined below.

To describe super vectors and covectors/one-forms on a representative vector bundle~$(\mathcal{E}M,\pi)$, we  denote the tangent sheaf
%, given the splitting choice $\pi$, 
by $T_\pi \! \hh \mathcal{M}$. Note in particular that this does not equal the space $T\mathcal{E}M$. Similarly we denote the cotangent sheaf by $T^*_\pi \! \hh \mathcal{M}$. On a trivializing patch with coordinates $(x,\theta)$, super vectors and covectors are spanned by $\{\frac{\partial}{\partial x}, \frac{\partial}{\partial \theta}\}$, and their duals $\{dx, d\theta\}$, respectively, with coefficients taking values in sections of $\mathcal{E}M$. A super vector $v \in T_\pi \! \hh \mathcal{M}$ and a super covector $\alpha \in T^*_\pi \! \hh \mathcal{M}$ are then locally expressed as
\begin{align*}
v &= v^i(x,\theta)  \frac{\partial}{\partial x^i} + v^a(x,\theta) \frac{\partial}{\partial \theta^a}\, , \\
\alpha &= \alpha_i(x,\theta) dx^i + \alpha_a(x,\theta) d\theta^a\, .  	
\end{align*} 
Vectors $\frac{\partial}{\partial x}$ along the base are Grassmann even, and vectors $\frac{\partial}{\partial \theta}$ along the fiber are Grassmann odd. Note that on the body $f(x)\in C^{\infty} M$ of some superfield, vectors $\frac{\partial}{\partial x}$ act as derivations in the standard way. Also note that the vector $v^a\frac{\partial}{\partial \theta^a}$ acts as interior multiplication on the fiber viewed as the exterior algebra $\Lambda V^*$. The coordinate one-form $dx$  is  assigned odd Grassmann parity while $d\theta$ is even  so that, for $\alpha$ and $v$ Grassmann even,
\begin{equation*}
\alpha(v)= \alpha_i v^i +\alpha_a v^a\in \Gamma({\mathcal E}M)\, .
\end{equation*}

For even super vectors $v$, there is a well-defined  notion of a body $v_0$. Namely, for any function $f_0$ on the body, 
$$
v_0 f_0:= (vf)_0\, ,
$$  
where $f$ is any superfunction with body $f_0$. 

There also exists a general notion of superflows along super vectors, see for example~\cite{monterde1993existence,garnier2013integration}. We only require a simple version of this:
Given an even supervector $v\in \Gamma(T\mathcal{M})$ and a chart $(U,\mathcal{A}U)$, for short times $t\in (-\delta,\delta)\subset{\mathbb R}$, we can define a supermanifold morphism between charts $$\exp(t v):(U,\mathcal{A}U)\to (U_t,\mathcal{A}{U_t})\,,$$
by requiring the following properties: $(i)$ $\exp(0 v)=\operatorname{Id}$, (ii) $\exp\big((t+t')v\big)=\exp(tv)\circ \exp(t'v)$, and $(iii)$ given any  superfunction~$f$ and any
element $X$ of the dual ${\mathcal A}^*$ to  the space of superfunctions, 
$$
\frac{dX(f_t)}{dt}= X(v f_t)\, ,
 $$
 where $f_t:= \exp(-tv)^* f$ is defined by pulling back the  second map ${\mathcal A}U_t\to {\mathcal AU}$ of the supermanifold morphism. 
 Notice that the map $U\to U_t$ is defined by the standard exponential map
 $\exp(t v_0)$. Short time existence and uniqueness of $\exp(tv)$ is established in~\cite{garnier2013integration}. Formally, $\exp(tv)$ acts on functions according to Taylor expansion in powers of derivations.
%
%
%
%superfunction~$f$ on  $(U,\mathcal{A}{U})$, its pullback  by $\exp(-t v)$, denoted $f_t$ obeys 
%$$
%\frac{df_t}{dt} =
%%\exp(-t v)^* (v f)\, .
%v f_t\, .
%$$
%
%
% and (ii) given any superfunction $f$, the superfunction $f_t$ obtained 
%
%
%
%
%
%by requiring
%$$
%v \big(\exp( tv) f\big) =\frac{d(\exp( tv) f)}{dt}\, ,\quad  \exp(t v) f|_{t=0} = f\, , 
%$$
%for any superfunction $f$.
%
%
%$$
%\dot f _t= v f_t
%$$
%
%
%
%
%
%
In local coordinates $v = v^i(x,\theta)  \frac{\partial}{\partial x^i} + v^a(x,\theta) \frac{\partial}{\partial \theta^a}$, the open set 
$U_t:=\exp (tv_0^i(x)\frac{\partial}{\partial x^i})(U)$ and $\mathcal{A}U \ni f(x^i,\theta^a) \mapsto f(x^i+tv^i,\theta^a+t v^a)$. For suitable $v$, this (short time) map extends to all of $\mathcal{M}$; main properties of the usual exponential map hold for the super case too because it defines a one-parameter family of superdiffeomorphisms~\cite{jetzer1999completely}.\smallskip

On a representative vector bundle $(\mathcal{E}M,\pi)$, there are invariant notions of vertical super vectors and horizontal super covectors. Sections of the vertical subbundle~$\text{Ver} \hh T_\pi \! \hh \mathcal{M} := \text{ker}\hh d\pi\subset T_\pi \! \hh \mathcal{M}$ of the tangent bundle are termed \textit{vertical super vectors}. The fibers of $\text{Ver} \hh T_\pi \! \hh \mathcal{M}$ are isomorphic to that of~$\Pi\mathcal{V}^*M$. Furthermore, a super covector $\alpha$ satisfying~$\alpha(v)=0$ for any vertical super vector $v$, is termed a \textit{horizontal super one-form}. In fact we have an exact sequence
\begin{equation*}
	0 \rightarrow \Gamma(\mathcal{E}M\otimes \text{Ver} \hh T_\pi \! \hh \mathcal{M}) \rightarrow \Gamma(T_\pi \! \hh \mathcal{M}) \xrightarrow{d\pi} \Gamma(\mathcal{E}M\otimes TM) \rightarrow 0\, .
\end{equation*}
Super covectors decompose according to the dual exact sequence. \\

A {\it superaffine connection} $\nabla$ on a supermanifold is a map
$$
\nabla:\Gamma(T{\mathcal M})
\to 
\Gamma(T^*{\mathcal M}\otimes T{\mathcal M})
$$
subject to superanalogs of the usual properties:
$$
\nabla_{f U} V = f \nabla_U V\, ,\qquad
\nabla_U (f V) = (U f) V + f \nabla_U V\, ,
$$
for any (even) superfield $f$ and supervectors $U$ and $V$. 
A seminal result of Koszul shows that a superaffine connection determines a splitting~\cite{koszul1994connections}.
For this let $F\in {\mathcal A_k}$ (see Equation~\nn{filter}).
Then we search for an even  super vector~$X$ such that
$$
X F = k F + G
$$
for some $G \in {\mathcal A_{k+1}}$. Note that for ${\mathbb R}^{n|m}$ the ``Eulerian'' vector 
$X= \theta^a \frac\partial{\partial \theta_a}$ obeys the above stipulation. Koszul shows that upon additionally demanding 
$$
\nabla_X X = X\, ,
$$
the vector $X$ both exists and is unique. Then one can study the space of odd super vectors $U$ such that 
$$
\nabla_X U = -U\, .
$$
The exponential map of $\nabla$ in  these directions determines vertical fibers of the splitting associated to $\nabla$. Eigenvalues of the vector $X$  yield a ${\mathbb Z}_{m+1}$-grading on sections of  the corresponding vector bundle. We term the latter the {\it Koszul bundle of $\nabla$}, or a {\it Koszul splitting}.

\medskip

Arbitrary rank tensor fields on $\mathcal{M}$ are defined by considering tensor products of $T\mathcal{M}$ and $T^* \! \mathcal{M}$ over $\mathcal{A}M$. In particular \textit{superdifferential forms} are defined by graded skew tensor products of $T^* \! \mathcal{M}$ whose section space is denoted by $\Omega \mathcal{M}$. Note that while superdifferential forms can only depend polynomially on the Grassmann odd differentials~$dx$, the same is not true for the even differentials $d\theta$. In principle, the precise functional dependence on the $d\theta$ depends on the physical problem of interest, see for example~\cite{fuchscohomology, bastianelli2009detours}. Here we focus on formal power series. \\

The space of superdifferential forms is graded by form degree: 
\begin{equation}\label{Omega}\Omega \mathcal{M}= \bigoplus_{k \in \mathbb{N}} \Omega^k \mathcal{M}\, . 
 \end{equation}
In local coordinates, the section spaces $\Omega^k \mathcal{M}$ are eigenspaces of the Euler operator $dx^i \frac{\partial}{\partial dx^i}+d\theta^a\frac{\partial}{\partial d\theta^a}$ (some authors denote $\frac{\partial}{\partial dx^i}$ by $\iota_{\frac{\partial}{\partial x^i}}$ and $\frac{\partial}{\partial d\theta^i}$ by $\iota_{\frac{\partial}{\partial \theta^i}}$). There is also a $\mathbb{Z}_2$-grading by the total Grassmann parity.

\subsection{Differential Calculus.} For a representative vector bundle $(\mathcal{E}M,\pi)$, we denote the sheaf of superdifferential forms by $\Omega_{\pi} \! \hh \mathcal{M}$. The grading determined by the choice $(\mathcal{E}M,\pi)$ allows us to consider separately the operators $dx^i \frac{\partial}{\partial dx^i}$ and $d\theta^a\frac{\partial}{\partial d\theta^a}$, with respective eigenvalues $p$ and $q$. This gives a further decomposition on top of that of Equation~\nn{Omega}, 
$$\Omega^k_{\pi} \! \hh \mathcal{M} =\bigoplus_{p+q=k, \ p,q\geq0} \Omega^{p,q}_{\pi} \!  \mathcal{M}\, .$$ In local coordinates $\Omega^{p,q}_{\pi} \!  \mathcal{M}$ is the space of $(p+q)$-forms with $p$ bosonic and $q$ fermionic generators $dx$ and $d\theta$, respectively. \\

The exterior derivative, interior product, and Lie derivative of superdifferential forms can be defined along similar lines to their bosonic counterparts. 
 One may define these in local coordinates for a representative vector bundle and then verify that this defines superdiffeomorphism invariant operations. In particular the exterior derivative $d_T:\Omega^k_{\pi} \! \hh \mathcal{M} \to \Omega^{k+1}_{\pi} \! \hh \mathcal{M}$ is defined by
\begin{equation}\label{dpdh}d_T := dx^i \frac{\partial}{\partial x^i}+ d\theta^a\frac{\partial}{\partial \theta^a}\, .  \end{equation}
It is a nilpotent operator, {\it videlicet}  $d_T^2=0$. In calculations we often denote the respective summands on the right hand side above by $d$ and~$\hat{d}$. The choice of splitting thus leads to   maps $\Omega^{p,q}_{\pi} \!  \mathcal{M} \stackrel d\to \Omega^{p+1,q}_{\pi} \!  \mathcal{M}$ and $\Omega^{p,q}_{\pi} \!  \mathcal{M} \stackrel{\hat d}\to \Omega^{p,q+1}_{\pi} \!  \mathcal{M}$ for the corresponding representative $\mathcal{E}M$. \\

Given an even super vector field $v\in \Gamma(T_\pi \! \hh \mathcal{M})$, the {\it interior product} maps $\Omega^k_{\pi} \! \hh \mathcal{M}$ to~$\Omega^{k-1}_{\pi} \! \hh \mathcal{M}$ according to 
\begin{equation*}
	\iota_v:= v^i \frac{\partial}{\partial dx^i} +v^a\frac{\partial}{\partial d\theta^a}\, .
\end{equation*} 
Once again this defines a pair of maps $\iota_{v^i \frac{\partial}{\partial x^i}}$ and $\iota_{v^a \frac{\partial}{\partial \theta^a}}$ with domains $\Omega^{p,q}_{\pi} \!  \mathcal{M}$. The Lie derivative with respect to a super vector $v$ acting on superdifferential forms is defined, {\it \`a la} Cartan's magic formula, by
$$\mathcal{L}_v:=d_T \hh \iota_v+\iota_v \hh d_T\, .$$  

\smallskip
Finally, for any supermanifold  ${\mathcal M}$ , there is a canonical, top  degree polyvector 
$$
{\mathscr X}\in \Gamma(\Lambda^m T {\mathcal M})\, , 
$$
that we term  the {\it Eulerian top field}: Given a splitting and  corresponding representative vector bundle $\mathcal{E}M$, there is a canonical 
 fiber-density valued, top polyvector
%Notice that these contraction operations descends onto the representative  and in particular to  With  \\	
%Now on the representative $\mathcal{E}M$, consider the fiber density valued top polyvector:	
% a representative $(E,\pi)$ with local coordinates $(x,\theta)$ such that vertical vectors on $E$ read $\{\frac{\partial}{\partial \theta^a}\}$. With this splitting, we can consider the fiber density valued top polyvector
\begin{equation} \label{toppolyvec}
	\epsilon:=\frac{1}{m!}\hh\epsilon^{a_1\cdots a_m} \frac{\partial}{\partial \theta^{a_1}} \cdots \frac{\partial}{\partial \theta^{a_m}} \in \Gamma\big(  (\Lambda^m\mathcal{V}_\pi M)^{-1}\otimes \Lambda^m \Pi \mathcal{V}^*_\pi M\big)\, ,
\end{equation}	  
where $\epsilon^{a_1\cdots a_m}$ denotes the totally anti-symmetric Levi-Civita symbol. The above is not canonical to the supermanifold ${\mathcal M}$ itself. Nor is the fiberwise Euler operator
$$\theta^a\frac{\partial}{\partial \theta^a} \in \Gamma(\Pi \mathcal{V}_\pi M\otimes \Pi \mathcal{V}_\pi^*M)\, ,$$
(whose eigenvalues correspond to  the ${\mathbb Z}_{m+1}$ grading {\it \`a la} Koszul). However, combining these  ingredients---taking  
 the top skew power of the fiber Euler operator---does give an invariant object
\begin{equation}  \label{eulertop}
	\X:=\frac{1}{m!} \theta^{a_1} \cdots \hh \theta^{a_m}\frac{\partial}{\partial \theta^{a_1}} \cdots \frac{\partial}{\partial \theta^{a_m}} \in \Gamma( \Lambda^m\Pi\mathcal{V}\otimes \Lambda^m \Pi \mathcal{V}^* )
\subset \Gamma(\Lambda^m T {\mathcal M})	\, .
\end{equation}
%We will use the two polyvectors defined above in the following sections. 
%
%On a particular representative~$(\mathcal{E}M,\pi)$, we can construct bundle invariant expressions respecting (\ref{glue}), and then check the full superdiffeomorphism invariance of these expressions in local coordinates $(x,\theta)$ explicitly. For example
%\begin{equation} \label{eulertop}
%		\X:=\frac{1}{m!} \theta^{a_1} \cdots \theta^{a_m}\frac{\partial}{\partial \theta^{a_1}} \cdots \frac{\partial}{\partial \theta^{a_m}} \in \Gamma(\Lambda^m \Pi \mathcal{V}^* \otimes \Lambda^m\Pi\mathcal{V})\, .
%	\end{equation}
%is a superdiffeomorphism invariant  polyvector field. (Vector fields, super tangent and cotangent bundles are defined below.) We term the vector field $\X$ the {\it Eulerian top field}.

\subsection{Integral Calculus.} Let us begin with the supermanifold $\mathbb{R}^{n|m}$. A superfield $F \in \mathcal{A} \mathbb{R}^{n|m}$ is called \textit{super integrable} if its coefficients in a $\theta$-expansion are   $\mathbb{R}^n$-integrable. Its \textit{Berezin integral} is given by~\cite{Berezin} 
\begin{equation*}
	\int_{\mathbb{R}^{n|m}} F  := \int_{\mathbb{R}^n}  d^nx \frac{\partial}{\partial \theta^m} \cdots \frac{\partial}{\partial \theta^1} F(x,\theta) \, .
\end{equation*}
In the case $n=m$, the supermanifold $\mathbb{R}^{n|n}$ can be viewed as the vector bundle $\Lambda \mathbb{R}^n\to {\mathbb R}^n$. A superfield $F$ then amounts to a (possibly inhomogeneous) differential form $F \in \Omega\mathbb{R}^n$. The Berezin integral computes the integral over $\mathbb{R}^n$ of its top form component. \\

The Berezin integral depends on an $\mathbb{R}^{n|m}$-integration measure $\mu=d^nx \frac{\partial}{\partial \theta^m} \cdots \frac{\partial}{\partial \theta^1}$. Under a superdiffeomorphism $\varphi: (x,\theta)\mapsto (\tilde{x},\tilde{\theta})$, this transforms by a (right) factor of the \textit{superdeterminant} defined by 
\begin{equation*}
	\operatorname{sdet}(\varphi) =  \frac{\operatorname{det}(\varphi_{00}-\varphi_{01}\varphi_{11}^{-1}\varphi_{10}^T)}{\operatorname{det}(\varphi_{11})}=\frac{\operatorname{det}(\varphi_{00})}{\operatorname{det}(\varphi_{11}-\varphi_{10}^T\varphi_{00}^{-1}\varphi_{01})}\, , 
\end{equation*}
where the above matrices are $\varphi_{00}:= \frac{\partial x}{\partial \tilde{x}}$, $\varphi_{01}:= \frac{\partial x}{\partial \tilde{\theta}}$, $\varphi_{10}:= \frac{\partial \theta}{\partial \tilde{x}}$, and $\varphi_{11}:= \frac{\partial \theta}{\partial \tilde{\theta}}$. \\

On a general supermanifold $\mathcal{M}$, one can integrate sections of a distinguished line bundle $\text{Ber}\mathcal{M}$ called the \textit{Berezinian} of $\mathcal{M}$~\cite{rogers2007supermanifolds}. On a representative $(\mathcal{E}M,\pi)$, the  Berezinian admits the decomposition
\begin{equation*}
	\text{Ber}_{\pi} \mathcal{M}= \Lambda^n_\pi M \otimes \Lambda^m \Pi\mathcal{V}^*_\pi M\, .
\end{equation*}

Locally, a section of $\text{Ber}_{\pi} \mathcal{M}$ is generated by $d^nx \frac{\partial}{\partial \theta^m} \cdots \frac{\partial}{\partial \theta^1}$
in concordance with the Berezin measure $\mu$ on $\mathbb{R}^{n|m}$. Thus, under a superdiffeomorphism,  a section of  the Berezinian transforms by a factor of the superdeterminant of the Jacobian.
Berezin integrals can then be performed
by trivializing sections of the Berezinian
over
a partition of unity; see for example~\cite{manin1997introduction, rogers2007supermanifolds}.

%
%
% Independence of the superintegral from superdiffeomorphisms and the choice of a partition of unity can be found in \cite{Man97, Rog07}. 

%where $\varphi:= \frac{\partial (x,\theta)}{\partial(\tilde{x},\tilde{\theta})}$ is just the super analogue of the Jacobian, and 0 and 1 indices denote even and odd generators; for instance $[\varphi_{01}]^i_{\ a} = \frac{\partial x^i}{\partial \tilde{\theta}^a}$. \\

\subsection{Hermitean Superfields}
\label{Hermitean}

\medskip
Given a choice of splitting, there is a notion  of a hermitean superfield (see for example~\cite{schmitt1990supergeometry}) defined as follows. 
First one considers the space of complex-valued superfields
$$
F\in \Gamma({\mathcal E}M) \oplus i \Gamma({\mathcal E}M)=:\Gamma_{\mathbb C}({\mathcal E}M) \, .
$$
Then we  define an (antilinear) involution $\dagger :\Gamma_{\mathbb C}({\mathcal E}M) \to \Gamma_{\mathbb C}({\mathcal E}M)$ via the map
$$
F^\dagger=\big(\oplus_{k=0}^m F_k)^\dagger
:=\oplus_{k=0}^m (-1)^{\lfloor\frac k2 \rfloor} F^*_k\, ,
$$
where the direct sum above is that of the ${\mathbb Z}_{k+1}$ grading and 
$*$ denotes complex conjugation. 
In local (real) Grassmann coordinates $\theta^a$, the above amounts to defining
$$
(\theta^a)^\dagger = \theta^a\: \mbox{ and } \: (\theta^a \theta^b)^\dagger := (\theta^b)^\dagger (\theta^a)^\dagger = - \theta^a \theta^b\, . 
$$
A complex-valued vector field that obeys
$$
F=F^\dagger
$$
is termed a {\it hermitean superfield}. The above involution induces corresponding actions on super tensors where the grading is obtained by counting  $\theta$'s, $d\theta$'s and $\frac\partial{\partial\theta^a}$'s. For example, $(d\theta^a\wedge  d\theta^b)^\dagger = -d\theta^a\wedge  d\theta^b$. This ensures that a hermitean super $k$-form integrated along a $k$-dimensional manifold obeys a reality condition.
%
%----------------------
%
%The above involution induces corresponding actions on super tensors where the grading is obtained by counting  $\theta$'s and $\frac\partial{\partial\theta}$'s. For extension of this anti-involution to differentials $d\theta$'s, we can study actions of physical systems. 
\begin{example}
Consider the 
spinning particle model  
 of Berezin and Marinov~\cite{Berezin:1976eg}: 
$$
	S[\theta(t)]=  i \int dt (\delta_{ab}\theta^a \dot{\theta^b} +\epsilon_{abc}v^a \theta^b \theta^c) \, ,
$$	
where $\theta^a\in {\mathbb R}^{0|3}$
and $v^a\in  {\mathbb R}^{3}$. Since $(\theta^a \dot\theta^b)^\dagger = -\theta^a \dot\theta^b$, the above action is ``real''. Hence the one-form
$
i\big(\delta_{ab}\theta^a d{\theta^b} + \epsilon_{abc}v^a \theta^b\theta^c dt)
$
is hermitean.
 \hfill$\blacksquare$
\end{example}

\smallskip

Note that in physics the above notion of hermiticity is called a real superfield.
Observe however that  the product of two hermitean superfields is not necessarily hermitean (for example $(\theta^1 \theta^2)^\dagger = -\theta^1 \theta^2$), so the subset of hermitean superfields in $\Gamma_{\mathbb C}({\mathcal E}M)$ is not closed under multiplication. One can also consider supermanifolds given by a sheaf of complex exterior algebras $\Lambda^m {\mathbb C}$ over a base manifold $M$ and extra data of an involution~$\dagger$~\cite{schmitt1990supergeometry}. We prefer to work with the above, real supermanifold-based notion of hermiticity. 
Hermitean superfields will play an important {\it r\^ole} when defining an inner product in Section~\ref{inner}. 
%Since we  will rely on the data of a (possibly only local defined) flat connection, the above definition of hermiticity suffices. 

\subsection{Symplectic Supermanifolds}\label{supergeometerslandhere} A \textit{symplectic supermanifold} $(\mathcal{M},\Omega)$ is an $(n,m)$-supermanifold~$\mathcal{M}$ equipped with a maximally non-degenerate, closed, Grassmann-even, superdifferential two-form $\Omega$. 
By maximally non-degenerate we mean that the subbundle of $T\mathcal{M}$ defined by the super-vector solution space of
\begin{equation*}
\iota_v \Omega=0\, , 
\end{equation*}
 has rank zero or one, as a module over the ring of superfunctions on~${\mathcal M}$. Equivalently,  the subbundle $\operatorname{ker}\flat \subset T\mathcal{M}$ defined by the kernel of the musical map
\begin{align*}
	\flat: T\mathcal{M} &\to T^*\mathcal{M} \, ,\\[-1mm]
\rotatebox{90}{$\scriptscriptstyle \in$}\:	&\quad\quad
\rotatebox{90}{$\scriptscriptstyle \in$}\\[-1mm]	
		v \;&\mapsto\; \iota_v \Omega 
\end{align*}
is Grassmann even and has rank either zero or one. 
For physical applications we will need supermanifolds equipped with a hermitean symplectic form. Again these are classified by the rank of the musical map $\flat$ (but now over ${\mathbb C}$). To  
 distinguish the two cases, following Section \ref{sec3}, we term supermanifolds equipped with a hermitean symplectic form with even or odd dimensional body manifolds, \textit{super phase-spaces} and \textit{super phase-spacetimes}, {and typically denote them by $({\mathcal M},\Omega)$ and
 $({\mathcal Z},\Omega)$, 
 respectively. 
Exactly as in the bosonic case, a hermitean supersymplectic form $\Omega$ on a super phase-space is non-degenerate, whereas on a super phase-spacetime it admits a one dimensional nontrivial (Grassmann even) kernel. \\ %Therefore the data of super laboratories and super dynamics can be conveniently encoded in super phase-spaces and super phase-spacetimes, respectively. \\

\begin{example}[Rothstein structures]\label{Rothstein}
Rothstein~\cite{rothstein1991structure} shows that real (non-degenerate) super symplectic manifolds correspond in a natural way to the data  $({\mathcal V}M,\eta,\nabla,\omega)$ of a vector bundle~${\mathcal V}M$, 
a bundle metric $\eta$ with a compatible bundle connection $\nabla$ as well as a symplectic form $\omega$ on the base.
 A careful account of Rothstein structures necessitates a discussion of the Ehresmann description of bundle connections on $T{\mathcal M}$. These details are not needed in a local coordinate description. We focus on a special case due to Sasaki~\cite{sasaki1958differential}: Let~$(M,g)$ be an even dimensional  Riemannian manifold and ${\mathcal M}=(M,\Omega M)$.
 In local coordinates~$x^{\mu}$ for  $M$, the fibers of ${\mathcal M}$ are generated by (odd) covectors $\theta^{\mu}=dx^{\mu}$. Superfields on~${\mathcal M}$ are differential forms  on $M$.  
 
 Let us define an operator mapping $\Omega {\mathcal M}\to \Omega{\mathcal M}$ by
 \begin{equation*}
		{\nabla}:=d- \Gamma^{\mu}_{\ \nu} \theta^{\nu} \frac{\partial}{\partial \theta^{\mu}}\, ,
	\end{equation*}
	where  $\Gamma^\mu{}_\nu$ denotes  the (one-form valued) Levi-Civita connection coefficients of $g_{\mu\nu}$ and~$d:=dx^\mu\frac{\partial}{\partial x^\mu}$.
 Accordingly we can covariantize the exterior derivative operator
	\begin{equation*}
		d_T:=d+ d\theta^\mu \frac\partial{\partial \theta^\mu}={\nabla}+(\nabla \theta^{\mu}) \frac{\partial} {\partial \theta^{\mu}}\, ,
	\end{equation*}
where $\nabla\theta^\mu :=  d\theta^{\mu}+\Gamma^{\mu}_{\ \nu} \theta^{\nu}$.
Now let $\omega_0$ be any symplectic form on $M$. Then the above data can be packaged as a supersymplectic two-form
$$
\Omega=\omega_0 + g_{\mu\nu}\nabla\theta^{\mu}\nabla\theta^{\nu}-R_{\mu\nu}\theta^{\mu}\theta^{\nu}\, ,
$$
where indices are manipulated using $g_{\mu\nu}$ and the (two-form valued) Riemann-tensor $R^\rho{}_\sigma:= d \Gamma^\rho{}_\sigma+ \Gamma^\rho{}_{\kappa}\Gamma^{\kappa}_{\sigma}$.
Non-degeneracy and closed-ness of $\Omega$ are easily verified.
In particular, locally we have that $\omega = d\lambda$ and in turn
$$
\Omega = d_T \big(\lambda + \theta_\mu \nabla \theta^\mu\big)\, .
$$
Note that Rothstein has a stronger result constructing local Darboux models for super phase-spaces~\cite{rothstein1991structure}. Also note that $\omega_0 + i(g_{\mu\nu}\nabla\theta^{\mu}\nabla\theta^{\nu}-R_{\mu\nu}\theta^{\mu}\theta^{\nu})$ 
defines a hermitean symplectic form and, in turn, a super phase-space.\hfill$\blacksquare$
\end{example}

It is useful to further analyze the closure condition on the supersymplectic form~$\Omega$.
On a representative vector bundle $(\mathcal{E}M,\pi)$, we have the decomposition
\begin{equation*}
\Omega=\omega+A+\eta \, ,
\end{equation*}
where $\omega \in \Omega^{2,0}_{\pi} \!  \mathcal{M}$, $A \in \Omega^{1,1}_{\pi} \!  \mathcal{M}$ and $\eta \in \Omega^{0,2}_{\pi} \!  \mathcal{M}$, respectively. The body $\omega_0$ of~$\Omega$ is a two-form on $M$. We will primarily be interested in the generic case where $\omega_0$ itself is a symplectic form on $M$. Therefore in the case that $\Omega$ is degenerate, any degeneracy is  encoded in the symplectic body manifold $(M,\omega_0)$ as $\omega_0$ itself must be maximally non-degenerate. Because~$\Omega$ is an even supermatrix, it is invertible if and only if both~$\omega$ and~$\eta$ are invertible. Equivalently the symmetric bilinear form $\eta$ on the Grassmann distribution is always invertible.
Decomposing $d_T= d+\hat{d}$  in concord with Equation~\nn{dpdh},  closedness~$0=d_T\Omega$ amounts to  a quartet of equations:
\begin{align}
&0=d\omega \in \Omega^{3,0}_{\pi} \!  \mathcal{M}\, , 
\qquad 0=\hat{d}\omega+dA \in \Omega^{2,1}_{\pi} \!  \mathcal{M}\, , \quad \nonumber \\[-2.7mm]
 \label{quartet1}\\[-2.7mm]
&0=\hat{d} A +d\eta \in \Omega^{1,2}_{\pi} \!  \mathcal{M}\, , 
\qquad 0=\hat{d}\eta \in \Omega^{0,3}_{\pi} \!  \mathcal{M}\, .\nonumber
\end{align}
In turn, closedness of $\Omega$ implies the following Hodge-like decomposition (see~\cite{Cast}).
%~\cite{EL,Lcomplex,Bry}
\begin{theorem} \label{decomposition}
Let $(\mathcal{M},\Omega)$ be a symplectic supermanifold and $(\mathcal{E}M,\pi)$ be a representative vector bundle such that $\Omega=\omega+A+\eta$ where $\omega \in \Omega^{2,0}_{\pi} \!  \mathcal{M}$, $A \in \Omega^{1,1}_{\pi} \!  \mathcal{M}$ and $\eta \in \Omega^{0,2}_{\pi} \!  \mathcal{M}$. Then \begin{equation}
	\Omega=\omega_0+\frac{1}{1+\hat{d}^{-1}d}\, \hat{d}(\beta+\gamma) \, ,  \label{sympdec}
\end{equation}
where the body 	$\omega_0\in \Omega^2M$ of $\omega$ is $d$-closed, $\hat{d}^{-1}$ is any partial inverse of $\hat{d}$, and~$\beta \in \Omega^{1,0}_{\pi} \!  \mathcal{M}$ and $\gamma \in \Omega^{0,1}_{\pi} \!  \mathcal{M}$ are arbitrary Grassmann-odd one-forms. In the above, the operator $1/(1+\hat{d}^{-1}d)$ acting on a section of $\Omega^{p,q}_{\pi} \!  \mathcal{M}$ stands for  the formal power series $\sum_{\ell=0}(-\hat{d}^{-1}d)^{\ell}$ where $
\hat d^{\hh-1}$  is defined to  be the zero operator when acting on $\Omega^{k,0}_\pi{\mathcal M}$.
\end{theorem}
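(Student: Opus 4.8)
The plan is to read the claimed identity \nn{sympdec} backwards. Since $\hat d^{-1}d$ strictly lowers the $d\theta$-degree and annihilates $\Omega^{\bullet,0}_\pi\mathcal{M}$ (where $\hat d^{-1}$ is declared to vanish), it is nilpotent on forms of bounded fermionic degree; here $\Omega-\omega_0$ occupies only the three-step ladder $q\in\{0,1,2\}$, so $(\hat d^{-1}d)^3(\Omega-\omega_0)=0$ and $1+\hat d^{-1}d$ is invertible with a terminating Neumann series. Thus \nn{sympdec} is equivalent to the polynomial identity
$$
(1+\hat d^{-1}d)(\Omega-\omega_0)=\hat d\beta+\hat d\gamma\, .
$$
First I would prove the theorem by exhibiting explicit one-forms making this hold, namely $\beta:=\hat d^{-1}A\in\Omega^{1,0}_\pi\mathcal{M}$ and $\gamma:=\hat d^{-1}\eta\in\Omega^{0,1}_\pi\mathcal{M}$ (both automatically Grassmann-odd, as required, since $\hat d^{-1}$ is an odd operator).

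The key technical input is a fermionic Poincar\'e lemma for $\hat d=d\theta^a\partial/\partial\theta^a$. Introducing the contraction $\hat h:=\theta^a\partial/\partial(d\theta^a)$, a direct computation (in which the cross terms cancel precisely because $d\theta$ is Grassmann even while $\theta$ is Grassmann odd) yields the super-Euler identity $\hat d\hat h+\hat h\hat d=N_\theta+N_{d\theta}$, with $N_\theta$ and $N_{d\theta}$ counting $\theta$'s and $d\theta$'s. Setting $\hat d^{-1}:=\hat h/(N_\theta+N_{d\theta})$ on the subspace $N_\theta+N_{d\theta}>0$ and $\hat d^{-1}:=0$ on forms pulled back from the body $M$ (consistent with the stated vanishing on $\Omega^{k,0}_\pi\mathcal{M}$, since $\hat h$ already kills $d\theta$-free forms) produces the homotopy identity $\hat d\hat d^{-1}+\hat d^{-1}\hat d=1-\Pi_0$, where $\Pi_0$ is the projector onto forms with no $\theta$ and no $d\theta$.

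With this in hand the computation is short. Writing $\omega=\omega_0+\hat\omega$, the body map commutes with $d$ (which does not touch $\theta$), so the first equation of \nn{quartet1} gives $d\omega_0=0$, the asserted closedness of $\omega_0$. Using $d\omega_0=0$ and $\hat d\omega_0=0$, the equations of \nn{quartet1} then give $d\hat\omega=0$, $dA=-\hat d\hat\omega$ and $d\eta=-\hat dA$, whence $d(\Omega-\omega_0)=-\hat d(\hat\omega+A)$. Applying $\hat d^{-1}$, invoking the homotopy identity, and noting that $\Pi_0$ annihilates both $\hat\omega$ (positive $\theta$-degree) and $A$ (positive $d\theta$-degree) while $\hat d^{-1}\hat\omega=0$, I obtain $\hat d^{-1}d(\Omega-\omega_0)=-(\hat\omega+A)+\hat d\hat d^{-1}A$. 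Adding $\Omega-\omega_0=\hat\omega+A+\eta$ telescopes the result to $\eta+\hat d\hat d^{-1}A$, which is exactly $\hat d\gamma+\hat d\beta$ for the choices above; here $\hat d\hat d^{-1}\eta=\eta$ because $\hat d\eta=0$ and $\Pi_0\eta=0$.

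I expect the main obstacle to be bookkeeping rather than conceptual: pinning down the correct signs and parities in the super-Euler identity for $\hat d$, where the mixture of odd $\theta$ and even $d\theta$ is exactly where sign errors creep in, and verifying in each bidegree that the projector $\Pi_0$ and the convention $\hat d^{-1}|_{\Omega^{\bullet,0}_\pi\mathcal{M}}=0$ are mutually consistent, so that the telescoping is exact rather than merely approximate. The resummation in \nn{sympdec} is then harmless, being a finite sum on the fermionic ladder occupied by $\Omega$.
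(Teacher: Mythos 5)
Your proof is correct and establishes exactly the stated decomposition, but by a genuinely different route from the paper's. Both arguments feed off the same quartet of closure conditions~\nn{quartet1}; the divergence is in how the fibrewise $\hat d$-cohomology is handled. The paper proves abstractly, via a Young-tableau decomposition (Lemma~\ref{biglemma} and Corollary~\ref{corsol}), that $\hat d$-closed forms are $\hat d$-exact, then solves the quartet from the bottom up ($\eta=\hat d\gamma$, $A=-\hat d^{-1}d\eta+\hat d\beta$, $\hat\omega=-\hat d^{-1}dA$) and only at the end resums into Equation~\nn{sympdec}. You instead construct an explicit contracting homotopy $\hat h=\theta^a\partial/\partial(d\theta^a)$ obeying $\hat d\hat h+\hat h\hat d=N_\theta+N_{d\theta}$, set $\hat d^{-1}=\hat h/(N_\theta+N_{d\theta})$, and verify the already-resummed identity in a single telescoping computation with the concrete choices $\beta=\hat d^{-1}A$ and $\gamma=\hat d^{-1}\eta$. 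The telescoping checks out: $(1+\hat d^{-1}d)(\Omega-\omega_0)=\eta+\hat d\hat d^{-1}A=\hat d\gamma+\hat d\beta$, using that $\Pi_0$ annihilates $\hat\omega$, $A$ and $\eta$ and that $\hat d^{-1}\hat\omega=0$ on $\Omega^{2,0}_\pi\mathcal{M}$. Your homotopy identity $\hat d\hat d^{-1}+\hat d^{-1}\hat d=1-\Pi_0$ is a constructive substitute for Lemma~\ref{biglemma}, and it even repairs a small imprecision there: forms with no $\theta$'s and no $d\theta$'s are $\hat d$-closed but not $\hat d$-exact, which your projector $\Pi_0$ accounts for (immaterial in the application, since $dA$, $d\eta$ and $\eta$ all have positive vertical degree). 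The trade-off is this: the paper's bottom-up solution exhibits $\beta$ and $\gamma$ as carrying the full ambiguity in solving the quartet, which is what justifies the word ``arbitrary'' in the statement, whereas you pin down one explicit representative pair --- all that the existence claim actually requires --- and in exchange obtain closed formulas for $(\beta,\gamma)$ in terms of $(A,\eta)$ and a shorter, sign-checkable derivation.
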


The proof of the above theorem relies on the following standard (see for example~\cite{HT})  lemma and its corollary.
\begin{lemma} \label{biglemma}
Any $\hat{d}$-closed form is $\hat{d}$-exact. \label{grclosed-exact}
\end{lemma}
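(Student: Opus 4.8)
The plan is to prove the lemma by exhibiting an explicit contracting homotopy for $\hat{d}$, in the spirit of the algebraic Poincar\'e lemma of~\cite{HT}. Recall that on a representative $(\mathcal{E}M,\pi)$ one has $\hat{d} = d\theta^a \frac{\partial}{\partial \theta^a}$, which raises the fermionic form degree $q$ by one, and that $\hat{d}^2=0$ follows from $d_T^2=0$ by reading off the $\Omega^{p,q+2}_\pi\mathcal{M}$ component. The natural candidate homotopy is the operator
\[
K := \theta^a \frac{\partial}{\partial d\theta^a}\colon \Omega^{p,q}_\pi\mathcal{M} \to \Omega^{p,q-1}_\pi\mathcal{M}\, ,
\]
which contracts one $d\theta$ and inserts one $\theta$; the whole argument then reduces to computing the anticommutator $\hat{d} K + K \hat{d}$ and inverting it.

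First I would establish the key identity
\[
\hat{d}\, K + K\, \hat{d} \;=\; \theta^a\frac{\partial}{\partial \theta^a} + d\theta^a\frac{\partial}{\partial d\theta^a} \;=:\; \mathcal{N}\, ,
\]
the ``internal'' Euler operator counting the total number of $\theta$'s and $d\theta$'s. This is a direct computation using that $\theta^a$ and $\frac{\partial}{\partial\theta^a}$ are Grassmann odd while $d\theta^a$ and $\frac{\partial}{\partial d\theta^a}$ are Grassmann even: the super-Leibniz rules $\frac{\partial}{\partial\theta^a}\theta^b = \delta^b_a - \theta^b\frac{\partial}{\partial\theta^a}$ and $\frac{\partial}{\partial d\theta^b}d\theta^a = \delta^a_b + d\theta^a\frac{\partial}{\partial d\theta^b}$ produce the two counting terms, while the remaining ``cross'' terms cancel precisely because $d\theta^a\theta^b = \theta^b d\theta^a$ and $\frac{\partial}{\partial\theta^a}\frac{\partial}{\partial d\theta^b} = \frac{\partial}{\partial d\theta^b}\frac{\partial}{\partial\theta^a}$ (the even generators commute through everything), the opposite signs above making them drop out. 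Since $\hat{d}$ and $K$ both commute with $\mathcal{N}$, the operator $\mathcal{N}$ is diagonalisable with non-negative integer eigenvalues and is invertible on each eigenspace of positive eigenvalue.

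Finally I would conclude as follows. If $\hat{d}\alpha = 0$ and $\alpha$ has positive internal degree, then decomposing into $\mathcal{N}$-eigenspaces and applying the identity eigenspace-by-eigenspace gives $\alpha = \mathcal{N}^{-1}(\hat{d} K + K\hat{d})\alpha = \hat{d}\big(\mathcal{N}^{-1}K\alpha\big)$, so $\alpha$ is $\hat{d}$-exact; this simultaneously produces the explicit partial inverse $\hat{d}^{-1} := \mathcal{N}^{-1}K$ used in Theorem~\ref{decomposition}. The one genuine obstacle --- and the place where the statement must be read with care --- is the kernel of $\mathcal{N}$, namely the purely bosonic forms in $\Omega^{\bullet,0}_\pi\mathcal{M}$ with $\theta$-independent coefficients: these are $\hat{d}$-closed but not $\hat{d}$-exact, since there is no room below $q=0$. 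Thus the lemma holds on forms of positive $d\theta$-degree, the $\hat{d}$-cohomology being carried entirely by the body forms $\Omega^\bullet M$; this is exactly the content encoded by setting $\hat{d}^{-1}$ to zero on $\Omega^{k,0}_\pi\mathcal{M}$ in the theorem. Equivalently, one may recognise $(\Omega^{p,\bullet}_\pi\mathcal{M},\hat{d})$, graded by $\theta$-degree, as the Koszul complex of the regular sequence $\{d\theta^a\}$ in the formal power series ring they generate, whose homology is concentrated in degree zero, with $K$ its standard contracting homotopy.
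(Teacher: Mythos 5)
Your proof is correct, but it takes a genuinely different route from the paper's. The paper argues representation-theoretically: it fixes the component $f_{p,q}$ with $q$ Grassmann generators and $p$ factors of $d\theta$, decomposes the coefficient tensor (antisymmetric in the $\theta$-indices, symmetric in the $d\theta$-indices) into the two hook-shaped Young tableaux via the Pieri rule, observes that $\hat{d}$-closedness kills one summand, and then checks that $\hat{d}$ applied to the $(q+1,p-1)$ space reproduces exactly the surviving summand. You instead exhibit the standard contracting homotopy $K=\theta^a\frac{\partial}{\partial d\theta^a}$ and verify the anticommutator identity $\hat{d}K+K\hat{d}=\mathcal{N}$ with $\mathcal{N}$ the internal Euler operator; your sign bookkeeping ($\theta$, $\partial/\partial\theta$ odd; $d\theta$, $\partial/\partial d\theta$ even, so the cross terms cancel) is right, and the conclusion $\alpha=\hat{d}(\mathcal{N}^{-1}K\alpha)$ on positive $\mathcal{N}$-eigenvalue is immediate. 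Your approach buys two things the paper's does not make explicit: a closed formula $\hat{d}^{-1}=\mathcal{N}^{-1}K$ for the partial inverse invoked in Corollary~\ref{corsol} and Theorem~\ref{decomposition} (the paper only asserts existence by finite-dimensionality of the fiberwise problem), and an explicit identification of where the lemma as literally stated fails, namely on $\theta$-independent elements of $\Omega^{\bullet,0}_{\pi}\mathcal{M}$, the kernel of $\mathcal{N}$. The paper handles this caveat only implicitly (its step ``consider $f_{p-1,q+1}$'' presupposes $p\geq 1$, and Theorem~\ref{decomposition} decrees $\hat{d}^{-1}=0$ on $\Omega^{k,0}_{\pi}\mathcal{M}$), so flagging it is a genuine improvement rather than a defect; in the degrees where the lemma is actually applied ($dA$, $d\eta$, $\eta$ all carry positive $d\theta$-degree) the two arguments agree. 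The paper's tableaux argument, for its part, gives slightly finer information --- it identifies the image and kernel of $\hat{d}$ as specific $GL(m)$-irreducibles --- but that extra structure is never used downstream.
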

\begin{proof} 
It suffices to  establish the claim for a $\hat{d}$-closed form $f \in \Omega^{r,p}_{\pi} \!  \mathcal{M}$. Thanks to the $\mathbb{Z}_{m+1}$-grading on $\mathcal{E}M$, we may   focus on the coefficient of $f_{p,q}$ with $q$ Grassmann generators. 
Because $\hat d$ is a fiber-wise operator, we can employ
local coordinates for which
\begin{equation*}
f_{p,q}=
	f(x,dx)_{a_1\cdots a_qb_1\cdots b_p}\theta^{a_1}\cdots\theta^{a_q}d\theta^{b_1}\cdots d\theta^{b_p} \, .
\end{equation*}
Here the coefficients are totally anti-symmetric in the $a$ indices and totally symmetric in the $b$ indices. In Young tableaux notation, we can decompose $f$ as 
\begin{equation*}
\ytableausetup{aligntableaux=center,smalltableaux,boxsize=1em} \ytableaushort{{\sss a_1}, {\sss a_2}, {\none[\sss\vdots]}, {\sss a_q} } \: \otimes \: \ytableaushort{{\sss b_1} {\sss b_2} {\none[\sss\cdots]} {\sss b_p}} = \ytableaushort{{\sss b_1} {\sss b_2} {\none[\sss\cdots]} {\sss b_p},{\sss a_1},{\none[\sss \vdots]}, {\sss a_q} } \:\oplus\: \ytableaushort{{\sss a_1} {\sss b_1} {\none[\sss\cdots]} {\sss b_p},{\sss a_2},{\none[\sss\vdots]}, {\sss a_q} }\ \ .
\end{equation*}
Closedness of $f$ says that right tableau in the last equality vanishes. 
%Therefore $f$ has the symmetry structure given by the left tableau after the last equality. \\

Now consider $f_{p-1,q+1} \in  \Omega^{r,p-1}_{\pi} \!  \mathcal{M}$ with $q+1$ Grassmann generators.
%\begin{equation*}
%	f_{p-1,q+1}=f(x,dx)_{a_1\cdots a_{q+1}b_1\cdots b_{p-1}}\theta^{a_1}\cdots\theta^{a_{q+1}}d\theta^{b_1}\cdots d\theta^{b_{p-1}} \, .
%\end{equation*}
Then the exact form $\hat{d}f_{p-1,q+1} \in \Omega^{r,p}_{\pi} \!  \mathcal{M}$ with $q$ Grassmann generators has the tensor structure
\begin{equation*}
\ytableausetup{aligntableaux=center,smalltableaux,boxsize=1.2em} 
	 \ytableaushort{{\sss a_1} {\sss b_1} {\none[\sss \cdots ]} {\sss b_{p\!-\!1}},{\sss a_2},{\none[\sss\vdots]}, {\sss \sss a_{q\!+\!1}} } \ \ ,
\end{equation*}
which matches that of any $\hat{d}$-closed $f$.
%Comparing this with the symmetry structure of the $d_V$-closed $f$, we see that they agree. In particular the components of $f$ and $g$ match as
%	\begin{equation*}
	%		f(z,dz)_{[\alpha_1,..,\alpha_q],(\beta_1,..,\beta_p)}\propto g(z,dz)_{[\alpha_1,..,(\alpha_{q+1}],\beta_1,..,\beta_{p-1})}\, ,
	%	\end{equation*}	
%	where square brackets denote total anti-symmetrization and parentheses denote total symmetrization.  
\end{proof}
\begin{corollary} \label{corsol}
Let $f \in\Omega^{r,p}_{\pi} \!  \mathcal{M}$ and $g \in \Omega^{r,p+1}_{\pi} \!  \mathcal{M}$ be $\hat{d}$-closed. Then any solution to equation $$\hat{d} f= g$$ has the form $f=\hat{d}^{-1}g + \hat{d}h$ where $\hat{d}^{-1}$ is a partial inverse of $\hat{d}$, and $h \in \Omega^{r,p-1}_{\pi} \!  \mathcal{M}$ is an arbitrary form. 
\end{corollary}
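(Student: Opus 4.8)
The plan is to read the corollary as the standard ``particular-plus-homogeneous'' description of the solution set of the inhomogeneous equation $\hat{d}f = g$, with both existence of a particular solution and the parametrization of the homogeneous solutions supplied by Lemma~\ref{biglemma}. The relevant solvability hypothesis is that $g$ be $\hat{d}$-closed; this is exactly what is needed, and it is automatic via $\hat{d}^2=0$ whenever any solution exists. First I would produce a distinguished particular solution: since $g \in \Omega^{r,p+1}_{\pi}\mathcal{M}$ is $\hat{d}$-closed, Lemma~\ref{biglemma} gives that $g$ is $\hat{d}$-exact, so $\hat{d}f_0 = g$ is solvable. I would then fix the partial inverse $\hat{d}^{-1}$ to be a linear right inverse of $\hat{d}$ on the space of $\hat{d}$-closed $(r,p+1)$-forms---well defined because the lemma identifies $\operatorname{im}\hat{d}$ with the closed forms in positive fiber degree, and set to zero on $\Omega^{k,0}_{\pi}\mathcal{M}$ in accordance with the convention of Theorem~\ref{decomposition}---so that $f_0 := \hat{d}^{-1}g$ satisfies $\hat{d}f_0 = g$.

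Next I would characterize all solutions. If $f \in \Omega^{r,p}_{\pi}\mathcal{M}$ is any solution of $\hat{d}f = g$, then $\hat{d}(f - f_0) = g - g = 0$, so $f - f_0$ is a $\hat{d}$-closed $(r,p)$-form. Applying Lemma~\ref{biglemma} a second time, now at fiber degree $p$, this closed form is $\hat{d}$-exact, whence $f - f_0 = \hat{d}h$ for some $h \in \Omega^{r,p-1}_{\pi}\mathcal{M}$. Conversely, every $f = \hat{d}^{-1}g + \hat{d}h$ with $h$ arbitrary is indeed a solution, since $\hat{d}\hat{d}^{-1}g = g$ (as $g$ is closed) and $\hat{d}^2 h = 0$. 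This yields precisely the asserted form $f = \hat{d}^{-1}g + \hat{d}h$ and exhausts the solution set.

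The one point I would treat with care---and the main obstacle---is the well-definedness of the partial inverse $\hat{d}^{-1}$ together with its compatibility with the $\mathbb{Z}_{m+1}$-grading. Because $\hat{d}$ acts fiberwise, everything reduces coefficientwise to the representation-theoretic content of Lemma~\ref{biglemma}: the Young-symmetrizer decomposition used there shows that, on each graded piece, $\hat{d}$ is injective modulo its kernel and surjects onto the closed forms, so a linear right inverse exists bidegree by bidegree. I would simply fix one such choice of $\hat{d}^{-1}$; any ambiguity in that choice shifts $f_0$ by a $\hat{d}$-closed, hence $\hat{d}$-exact, form and is therefore absorbed into the arbitrary $\hat{d}h$ term, so the stated parametrization is independent of the choice. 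The degenerate case $p=0$, where $\Omega^{r,p-1}_{\pi}\mathcal{M}=0$ and the solution is unique, is handled automatically by the same formula.
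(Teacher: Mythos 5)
Your proposal is correct and follows essentially the same route as the paper's proof: a particular solution $\hat{d}^{-1}g$ exists because $\hat{d}$ is a fiberwise finite-dimensional linear map whose image is characterized by Lemma~\ref{biglemma}, and the difference of any two solutions is $\hat{d}$-closed, hence $\hat{d}$-exact by the same lemma. Your extra remarks on fixing the partial inverse and absorbing its ambiguity into the $\hat{d}h$ term are a sound elaboration of what the paper leaves implicit.
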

\begin{proof}
Existence of partial inverses follows because $\hat d$ is a fiber-wise operator, 
so we are dealing with a finite dimensional, non-zero linear map. 
Moreover $g$ is in the image of $\hat d$ by Lemma~\ref{biglemma}.
Thus
%By definition, $\hat{d}( \hat{d}^{-1}g)= g$. 
%So 
$\hat{d}^{-1}g$ solves the displayed equation. 
Moreover the difference of two solutions
is necessarily $\hat d$-exact. The result now follows by again employing Lemma~\ref{biglemma}.

\end{proof}
\smallskip
%Now we show how to obtain the decomposition in Equation (\ref{sympdec}).
\begin{proof}[Proof of Theorem \ref{decomposition}]
Writing $\omega=\omega_0+\hat{\omega}$, the first equation in $(\ref{quartet1})$ implies the body~$\omega_0$ of $\omega$ is $d$-closed. Also $\hat d \omega_0$ necessarily vanishes so  the second equation says
\begin{equation*}
	\hat{d}\hh\hat{\omega}+dA=0 \, .
\end{equation*}
Hence $dA$ is $\hat{d}$-closed, so using Corollary \ref{corsol} we have
\begin{equation*}
	\hat{\omega}=-\hat{d}^{-1}dA \, . 	
\end{equation*}	
(There is no possibility to add a $\hat{d}$-exact piece because $\hat{\omega} \in \Omega^{2,0}_{\pi} \!  \mathcal{M}$.) Now noting that $d\eta$ is $\hat{d}$-closed, we can solve the third equation using Corollary \ref{corsol},
\begin{equation*}
	A=-\hat{d}^{-1}d\eta+\hat{d}\beta\, ,
\end{equation*} 
for some Grassmann-odd $\beta \in \Omega^{1,0}_{\pi} \!  \mathcal{M}$. Finally using Lemma \ref{biglemma}, the fourth equation can be solved as
\begin{equation*}
	\eta=\hat{d}\gamma \, ,
\end{equation*}
where $\gamma \in \Omega^{0,1}_{\pi} \!  \mathcal{M}$ is an arbitrary Grassmann-odd form. Orchestrating the above gives
$$\Omega=\omega_0+(1-\hat{d}^{\hh -1}d)\hh\hat{d}\beta
+\big(1-\hat{d}^{\hh-1}d 
+(\hat d^{\hh-1} d)^2\big)\hh\hat{d}
\gamma\, .$$
The operator $\hat d^{\hh -1} d$ is, respectively,  two and three-step nilpotent when acting on~$\hat d \beta$ and~$\hat d \gamma$. The result displayed in the theorem is a succinct rewriting of the above display based on this latter fact.
\end{proof}

\subsection{Supersymplectic Inner Products}\label{inner}
It is no longer obvious how to employ the simple mechanism in Equation~\nn{silly} where states were expressed as---manifestly positive---squares of functions, because  the algebra of superfunctions is Grassmann. 
One might require positivity of the body of superfields, but this imposes 
no conditions
on higher filtered components.
To resolve this difficulty, we observe the following: Odd generators on a supermanifold  can be viewed as generalized one-forms.
Moreover, on a Riemannian manifold, the Hodge pairing yields an inner product on differential forms. Therefore we aim  to employ the supersymplectic form to mimic the Hodge construction. (Note that a supergravity motivated Hodge construction on supermanifolds is also studied in~\cite{Cast}.)
One further important observation is that there is a direct correspondence between the  Grassmann analog of the Moyal star product and the Hodge construction. Ironically therefore, we will employ  star products (whose origins are quantum mechanical) to describe classical measurement. \\

In fact, even for a pair of compactly  supported functions $f$ and $g$ on ${\mathbb R}^{2n}\ni \xi^i$, with its standard symplectic form $\omega_{\rm std}=\frac{1}{2}\hh \omega_{ij}\hh d\xi^i\wedge d\xi^j $  ($\omega_{ij}$ is a constant matrix and $\omega^{ij}$ is its inverse), and accompanying Moyal star product
$$
f\star g = f \exp \left(\frac{1}{2}\hh  \overset{\leftarrow}\partial_{\xi ^i}\hh {\omega^{ij}\hh \overset{\rightarrow}\partial_{\xi ^j}}\right)g\, ,
$$
one has that \cite{curtright2013concise}
$$
\int_{{\mathbb R}^{2n}} (\omega_{\rm std})^{\wedge n} fg = 
\int_{{\mathbb R}^{2n}} (\omega_{\rm std})^{\wedge n} f\star g\, . 
$$
The right hand side of the above can also be viewed as a quantum mechanical trace of the  operators $\hat f$ and $\hat g$ obtained by the Moyal quantization of the corresponding classical functions (see~\cite{fedosov1994simple}). In summary, our aim is to generalize the right hand side of the above display to symplectic supermanifolds.

\subsubsection{Supersymplectic volume form}
Note that on any super phase-space $(\mathcal{M},\Omega)$, the symplectic form $\Omega$ is non-degenerate and defines a natural section $\operatorname{Ber}(\Omega)$ of the Berez\-inian~$\text{Ber}\mathcal{M}$. On a representative $(\mathcal{E}M,\pi)$, $\operatorname{Ber}(\Omega)$ is locally trivialized as 

\begin{equation}\label{berez}
dx^1\cdots dx^{2n} \frac{\partial}{\partial \theta^m} \cdots \frac{\partial}{\partial \theta^1} \circ \text{sdet}^{1/2}(\Omega) \in\Gamma( \text{Ber}_{\pi} \mathcal{M})\, , \end{equation}
where  
\begin{equation*}
	\operatorname{sdet}(\Omega) = \frac{\operatorname{det}(\omega-A\eta^{-1}A^T)}{\operatorname{det}(\eta)}=\frac{\operatorname{det}(\omega)}{\operatorname{det}(\eta-A^T\omega^{-1}A)} 
%	 \in \Gamma(\mathcal{E}M) 
	   \, , 
\end{equation*}
and $\Omega=\omega+A+\eta$ with  $\omega \in \Omega^{2,0}_{\pi} \!  \mathcal{M}$, $A \in \Omega^{1,1}_{\pi} \!  \mathcal{M}$ and $\eta \in \Omega^{0,2}_{\pi} \!  \mathcal{M}$.
Note that the integral
$
	\int_\mathcal{M} \operatorname{Ber}(\Omega) 
$
 vanishes for the standard supersymplectic form on any patch in $\mathbb{R}^{n|m}$, so clearly must be modified to obtain a notion of supersymplectic volumes.
 For that we 
 introduce a canonical  superfield~$\Theta \in \mathcal{A}_m$ on a symplectic supermanifold~$(\mathcal{M},\Omega)$ that  plays the {\it r\^ole} of a vertical volume form.

Our construction relies on a set of contraction operations between arbitrary tensor powers of bilinear forms and suitable sets of vectors on $\mathcal{M}$. Let $B\in \Gamma({\otimes^2}T^*\mathcal{M})$ be any covariant bilinear form on $\mathcal{M}$ and $k$ a  positive integer. Then define
$B^k\in \Gamma({\otimes^{2k}}T^*\mathcal{M})$ by
$$
B^k(X_1,\ldots,X_k; Y_1,\ldots,Y_k):=\frac{1}{k!} \hh B(X_1,Y_1) \hh\cdots B(X_k,Y_k) \, , 
%\label{contraction}
$$
where
$(X_1,\cdots,X_k), (Y_1,\cdots,Y_k) \in \Gamma(T\mathcal{M})\times \cdots \times \Gamma(T\mathcal{M})$
(note that $B^k$ differs from $B^{\otimes k}$ only by a permutation of its arguments).
The tensors $B^k$  descend to any representative vector bundle $\mathcal{E}M$
and act on density-valued tensors in the obvious way. 
Observe that, applied to the 
 top polyvector (\ref{toppolyvec}), we then obtain a fiber density on $\mathcal{E}M$ given by
$$
\Omega^m( \epsilon, \epsilon) = \frac{1}{m!} \hh\epsilon^{a_1\cdots a_m} \Omega_{a_1b_1}\cdots\hh\hh \Omega_{a_mb_m} \epsilon^{b_1\cdots b_m} = \det \eta\, .
$$
When fed the Eulerian top vector~\nn{eulertop} and  top polyvector,  $\Omega^m$ produces 
$$
\Omega^m(\epsilon, \X) = \det \eta \ \theta^1\cdots \hh \theta^m \, . 
$$
Hence we have
	\begin{equation}\label{soccer}
	\Theta(\Omega):= \Omega^m(\epsilon, \X) / \Omega^m(\epsilon, \epsilon)^{1/2} = \sqrt{\det \eta} \:\hh \theta^1\cdots\hh \theta^m\, .	
\end{equation}
It can be explicitly checked  using Equation \nn{superdiff} that the above is a superfield (defined independently of the choice of splitting), indeed
$$\Theta\in {\mathcal A}_m\, .$$
We dub the canonically defined superfield $\Theta$ the {\it vertical volume field}.
We can now define the {\it symplectic supervolume} of $({\mathcal M},\Omega)$ by
$$
{\rm Vol}_{\Omega}({\mathcal M}):=\int_{\mathcal M} \operatorname{Ber}(\Omega)\hh  \Theta(\Omega)\, ,
$$
when this integral exists.
For a patch ${\mathcal U}$ in $({\mathbb R}^{n|m},\Omega_{\rm std})$ over a compact set $U\subset {\mathbb R}^n$, the above returns
the standard Euclidean volume of $U$. Note that using~\eqref{berez} and~\eqref{soccer}, for any superfield $F$, in a choice of splitting $\Omega=\omega+A+\eta$, one has
\begin{equation}\label{reduced}
	\int_\mathcal{M} \text{Ber}(\Omega) \Theta(\Omega) F=\int_M \omega_0^{\wedge \frac{\text{dim} M}{2}} F_0\, .
\end{equation}
We also require the following invariance property of the above measure with respect to supersymplectomorphisms.
\begin{lemma}\label{preserve}
	Let $(\mathcal{M},\Omega)$ be a super phase-space. Suppose $X\in \Gamma(T\mathcal{M})$ obeys $\mathcal{L}_X\Omega=0$. Then, for any compactly supported $F \in \mathcal{A}$ 
	\begin{equation*}
	 \int_{\mathcal M} \operatorname{Ber}(\Omega)\hh  \Theta(\Omega)\, {\mathcal L}_X F=0\, .
	\end{equation*}
\end{lemma}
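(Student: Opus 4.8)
The plan is to use the reduction formula \eqref{reduced} to push the entire Berezin integral down to the body $M$, where the statement collapses to the classical fact that a symplectic vector field integrates the Lie derivative of a compactly supported density to zero. Throughout I take $X$ to be Grassmann even, which is the case relevant to the super Liouville dynamics of Section~\ref{sec7} and is what makes the body vector field $X_0$ available; since $(\mathcal{M},\Omega)$ is a super phase-space its body $M$ is even dimensional, say $\dim M = 2n$, and $\omega_0$ is a genuine symplectic form on $M$. First I would apply \eqref{reduced} with the superfield taken to be $\mathcal{L}_X F$, which is again compactly supported because $\mathcal{L}_X$ is a local first-order operator. This gives
\begin{equation*}
\int_{\mathcal M} \operatorname{Ber}(\Omega)\,\Theta(\Omega)\,\mathcal{L}_X F=\int_M \omega_0^{\wedge n}\,(\mathcal{L}_X F)_0\,,
\end{equation*}
so that everything reduces to understanding the body $(\mathcal{L}_X F)_0$ and the body $\omega_0$ of the supersymplectic form.

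Since $F$ is a superfunction, Cartan's formula degenerates to $\mathcal{L}_X F = X F$, and the defining property $(XF)_0 = X_0 F_0$ of the body of an even super vector identifies $(\mathcal{L}_X F)_0 = \mathcal{L}_{X_0} F_0$. Next I would show that the hypothesis forces $X_0$ to be symplectic for $(M,\omega_0)$, that is $\mathcal{L}_{X_0}\omega_0 = 0$. Using closedness of $\Omega$ one has $\mathcal{L}_X\Omega = d_T\,\iota_X\Omega$, and taking bodies I would use that $d_T$ restricts to $d$ on bodies (the piece $\hat{d}$ raises the $\theta$-differential degree and drops out) to get $(\mathcal{L}_X\Omega)_0 = d\,(\iota_X\Omega)_0$. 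I expect the computation of $(\iota_X\Omega)_0$ to be the one genuinely subtle point: writing $\Omega = \omega + A + \eta$ as in \eqref{quartet1}, the part of $\iota_X\Omega$ coming from contracting the fiber components $X^a$ of $X$ against the mixed term $A$ a priori produces a pure-$dx$ contribution that could survive to the body, but it is weighted by $X^a$, which is Grassmann \emph{odd} since $X$ is even, and hence has vanishing body. Hence only $\iota_X\omega$ contributes and $(\iota_X\Omega)_0 = \iota_{X_0}\omega_0$, so that $(\mathcal{L}_X\Omega)_0 = d\,\iota_{X_0}\omega_0 = \mathcal{L}_{X_0}\omega_0$, the last equality using that $\omega_0$ is $d$-closed by Theorem~\ref{decomposition}. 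The vanishing of $\mathcal{L}_X\Omega$ thus yields $\mathcal{L}_{X_0}\omega_0 = 0$. Equivalently, one may argue that the body of the flow $\exp(tX)$ is the body flow $\exp(tX_0)$ and that the body map is natural under pullback, so that preservation of $\Omega$ descends directly to preservation of $\omega_0$.

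Finally I would run the standard argument on $(M,\omega_0)$. From $\mathcal{L}_{X_0}\omega_0 = 0$ one gets $\mathcal{L}_{X_0}(\omega_0^{\wedge n}) = n\,(\mathcal{L}_{X_0}\omega_0)\wedge \omega_0^{\wedge(n-1)} = 0$, whence $(\mathcal{L}_{X_0}F_0)\,\omega_0^{\wedge n} = \mathcal{L}_{X_0}\bigl(F_0\,\omega_0^{\wedge n}\bigr)$. Because $F_0\,\omega_0^{\wedge n}$ is a top form on $M$ it is automatically $d$-closed, so Cartan's magic formula collapses its Lie derivative to the exact form $d\,\iota_{X_0}\bigl(F_0\,\omega_0^{\wedge n}\bigr)$. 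As $F_0$ has compact support, Stokes' theorem makes the integral of this exact top form vanish, which is precisely the assertion of the lemma.
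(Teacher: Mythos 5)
Your proof is correct and follows essentially the same route as the paper's: reduce the Berezin integral to the body via Equation~\nn{reduced}, observe that the body of $\mathcal{L}_X\Omega$ is $\mathcal{L}_{X_0}\omega_0$ so the hypothesis makes $X_0$ symplectic for $(M,\omega_0)$, and conclude by compact support. You merely spell out the intermediate steps (why $(\iota_X\Omega)_0=\iota_{X_0}\omega_0$, and the final Stokes argument) that the paper leaves implicit.
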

\begin{proof}
	Making a choice of splitting, we decompose $\Omega=\omega+A+\eta$ in the usual way. Similarly, $X=x+\chi$ where $x$ is a derivation along the base. The body of $\mathcal{L}_X \Omega$ must vanish separately and is given by  
	$$\mathcal{L}_{x_0} \omega_0\, ,$$ 
	where $\omega_0\in \Omega^{2}M$ and $x_0\in \Gamma(TM)$.
	 But using~\eqref{reduced}, we deduce that
	  \begin{equation*}
	 \int_{\mathcal M} \operatorname{Ber}(\Omega)\hh  \Theta(\Omega)\, {\mathcal L}_X F=\int_{M} \omega_0^{\wedge\frac{\dim M}{2}} \hh  \, {\mathcal L}_{x_0} F_0\, ,
	\end{equation*}
	where $F_0$ is the body of $F$. The result now follows from the compact support of $F_0$.
\end{proof}

Finally note that the above construction, as well as the corresponding analog of Lemma~\ref{preserve}, applies to the case where $\Omega$ is a hermitean supersymplectic form. 

%\edz{Do we want Lemma saying ${\mathcal L}_X \Omega=0$ implies $\ \int_{\mathcal M} \operatorname{Ber}(\Omega)\hh  \Theta(\Omega)\, {\mathcal L}_X F=0$??\\[2mm] }

\subsubsection{Super stars}
The supersymplectic generalization of the Moyal star product can be employed to quantize 	
 the algebra of  superfunctions\cite{
 %AndrewJames, 
 berezin1980feynman, fairlie1989infinite, fradkin1991quantization, hirshfeld2002deformation, hirshfeld2004cliffordization, galaviz2008weyl}.
 Indeed given the data of a suitable  superaffine connection, it is possible to define a super Moyal-$\filledstar$  product~\cite{Bor,Hir1,Hir2}.
However what we  require  is a fiberwise inner product in order to extract classical probabilities from superfunctions.
A simple example is illustrative.

\begin{example}\label{fermistates}
Consider the super manifold ${\mathbb R}^{0|2}$ equipped with a hermitean supersymplectic form
$$
\Omega = 
i\big(d\theta^1 \wedge d \theta^1+d\theta^2\wedge d\theta^2)
$$
and a
 hermitean superfield
$$\Lambda \mathbb{R}^2\ni
 f_0 + f_1 \theta^1 + f_2  \theta^2 -i f_{12} \theta^1 \theta^2=F\, ,$$
 where
  $f_0,f_{1},f_2,f_{12}\in {\mathbb R}$.
We now define the map/representation
$$
\sigma(F)=\begin{pmatrix}
f_0 + f_{12}& \bar f \\ f & f_0- f_{12}
\end{pmatrix}= \sigma(F)^\dagger\, ,
$$
where $f:=f_1+if_2$ and the above dagger  is the standard involution on complex $2\times 2$ matrices.
Given $F,G\in {\mathcal A}$ we can then define a star product by demanding
$$
\sigma(F \filmstar G):=\sigma(F)\, \sigma(G) \, ,
$$
in which case we have that
$$
 \filmstar :=  \mathbb{1}\, +
 \stackrel\leftarrow
 {\frac\partial{\partial\theta^1}} \hh
  \stackrel\rightarrow
 {\frac\partial{\partial\theta^1}}
\, +\,  \stackrel\leftarrow
 {\frac\partial{\partial\theta^2}} \hh
   \stackrel\rightarrow
 {\frac\partial{\partial\theta^2}}
\, +\,  \stackrel\leftarrow
 {\frac\partial{\partial\theta^2}} \stackrel\leftarrow
 {\frac\partial{\partial\theta^1}} \hh
\stackrel\rightarrow{\frac\partial{\partial\theta^1}}
  \stackrel\rightarrow {\frac\partial{\partial\theta^2}}\hh=\hh\exp\Big( \frac1i\stackrel\leftarrow{\partial}_A\Omega^{AB} 
  \stackrel\rightarrow \partial_B\!\Big)\, .
%  \!\vec\partial_{\theta^2}
% +
% \cev{\partial}_{\theta_2}\, \hh\vec\partial_{\theta_2} \!\!\big)
% +\frac{1}{4}\hh\cev{\partial}_{\theta^{\phantom{1}}}\!\cev\partial_{\bar \theta}\,\,
% \vec{\partial}_{\theta^{\phantom{1}}} \! \vec\partial_{\bar \theta}
% =\exp\Big(\frac{1}{2}\big(\cev{\partial}_{\theta^{\phantom{1}}} \!\vec\partial_{\bar \theta}
% +
% \cev{\partial}_{\bar\theta}\, \hh\vec\partial_{\theta^{\phantom{1}}} \!\!\big)\Big)\, .
$$
Note that $\theta^1\filmstar \theta^1=1=\theta^2\filmstar \theta^2$, $\theta^1\filmstar \theta^2=\theta^1\theta^2=-\theta^2\filmstar \theta^1$. Indeed, this star product encodes the multiplication rule for matrices, and thus is necessarily associative.
The (super) Moyal product for general hermitean superfields $F,G$ yields
\begin{equation*}
	F \filledstar G = f_0g_0 + f_1g_1+f_2g_2+ f_{12}g_{12}  + O(\theta^1\!,\theta^2)\,.
\end{equation*}
In turn 
\begin{align*}
\int_{\mathcal{M}} \text{Ber}(\Omega)\hh \Theta(\Omega)\hh F\filmstar G&= \partial_{ \theta^2}\partial_{\theta^1}\big( \theta^1\theta^2 \big(
f_0g_0 + f_1g_1+f_2g_2+ f_{12}g_{12}  + O(\theta^1\!,\theta^2)
\big)\\ &=
f_0g_0 + f_1g_1+f_2g_2+ f_{12}g_{12} =:( F,G)_\Omega \,. \hspace{2cm}
\end{align*} 
The above is a {\it bona fide} inner product.
Hence, both obviously and importantly for constructing states,
\begin{equation*}
\int_{\mathcal{M}} \text{Ber}(\Omega)\hh \Theta(\Omega)\hh F\filmstar F= f_0^2 + |f|^2+ f_{12}^2\geq 0\,. \hspace{2cm}
\end{equation*}  
%
%Consider $(\mathcal{M},\Omega)$ where $\mathcal{M}=(\text{pt}, \Lambda \mathbb{R}^2)$, and $\Omega=\frac{1}{2}\delta_{ij}d\theta^id\theta^j$. Let $F \in \mathcal{A}$, so that 
%$$
%F= f_0 + f_1 \theta^1 + f_2 \theta^2 + i f_{12}\theta^1\theta^2\, . 
%$$
%Note that we introduce a factor of $i$ so that the superfield $\theta^1\theta^2$ is real. Similarly we can introduce complex coordinates 
%$$ 
%\theta =\frac{1}{\sqrt{2}} \left( \theta^1 + i \theta^2 \right)\, , \quad \bar{\theta}= \frac{1}{\sqrt{2}} \left(\theta^1 - i \theta^2\right)\, ,
%$$
%where now the transformed symplectic form reads $\Omega=2d\theta d\bar{\theta}$. Note that, in these coordinates, the only non-trivial Poisson bracket is given by
%$$
%\{\theta,\bar{\theta}\}_{\text{PB}}=1\, .
%$$ 
%Moreover the superfield becomes 
%$$
%F=f_0 + \bar{f}\theta + f\bar{\theta}  - f_{12}\theta \bar{\theta}\, ,\qquad  f= \frac{1}{\sqrt{2}} \left( f_1 + i f_2 \right)\, .
%$$ 
%Note that $F$ is Hermitian, \textit{id est}, $\bar{F}=F$. Then, Moyal product of $F,G \in \mathcal{A}$ reads
%\begin{equation*}
%	F \filledstar G = \left(f_0g_0 + \frac{1}{2}(\bar{f}g+f\bar{g})+\frac{1}{4} f_{12}g_{12} \right) + O(\theta^1)
%\end{equation*}
%Taking $F=G$, and integrating over $\text{pt}$, we obtain 
%\begin{equation*}
%\int_{\text{pt}} \text{Ber}(\Omega) \Theta F \filledstar F = (F \filledstar F)_0 = f_0^2 + |f|^2 + f_{12}^2
%\end{equation*}
%
\hfill$\blacksquare$
\end{example}\medskip
The above example demonstrates that, on a purely fermionic vector space equipped with a non-degenerate, hermitean,  symplectic form $\Omega$, the Moyal product of hermitean superfields integrated against the {\it supersymplectic volume density}  $\operatorname{Ber}(\Omega)\Theta(\Omega)$ yields an inner product. This is precisely the Hodge pairing (coming from the symmetric bilinear form $-i\hh\Omega|_{\Lambda\mathbb{R}^m}$) of $F$ and $G$ viewed as generalized differential forms. This construction can be applied \textit{mutatis mutandis} to any purely fermionic vector space ${\mathbb R}^{0|m}$; see Section~\ref{probably}. Of course instantaneous super laboratories (see Section~\ref{superlabs}) are more general supermanifolds than purely fermi vector spaces. To generalize the above construction we need the following technical  lemma.  \begin{lemma} \label{invariantlemma}
	Let $(\mathcal{M},\Omega)$ be a super phasespace, $U$ and $V$ be supervectors, and $(\mathcal{E}M,\pi)$ be a representative vector bundle where $\Omega=\omega+A+\eta$ with $\omega \in \Omega^{2,0}_{\pi} \!  \mathcal{M}$, $A \in \Omega^{1,1}_{\pi} \!  \mathcal{M}$ and $\eta \in \Omega^{0,2}_{\pi} \!  \mathcal{M}$. Then the pairing 
	\begin{equation*}
	\Theta(\Omega) \hh \eta(U,V)  \in \mathcal{A}
	\end{equation*}
defines a   superfunction  on ${\mathcal M}$.
\end{lemma}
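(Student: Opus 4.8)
The plan is to exploit the fact that, by Equation~\nn{soccer}, the vertical volume field is proportional to the top odd monomial, $\Theta(\Omega)=\sqrt{\det\eta}\,\theta^1\cdots\theta^m$, and is a canonically defined element of $\mathcal{A}_m$. Multiplication by $\theta^1\cdots\theta^m$ annihilates every positive power of the odd coordinates, so for an arbitrary superfield $G$ one has $\Theta(\Omega)\,G=\Theta(\Omega)\,G_0$, where $G_0$ is the body. Applying this with $G=\eta(U,V)$ gives $\Theta(\Omega)\,\eta(U,V)=\Theta(\Omega)\,(\eta(U,V))_0$. Since $\Theta(\Omega)$ is already a genuine superfield, it therefore suffices to prove that the body $(\eta(U,V))_0$ is a splitting-independent function on $M$; its product with the canonical $\Theta(\Omega)$ is then automatically a superfunction.

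First I would check invariance under the bundle automorphisms~\nn{glue}. Writing $\eta_{ab}=\Omega(\tfrac{\partial}{\partial\theta^a},\tfrac{\partial}{\partial\theta^b})$ and $\eta(U,V)=\eta_{ab}U^aV^b$, a transformation $\tilde\theta^a=g^a_{\ b}\theta^b$ with $\tilde x=\tilde x(x)$ leaves $\tfrac{\partial}{\partial\theta}$ vertical, so $\eta_{ab}$ transforms as a fiber tensor while the fiber components $U^a,V^a$ transform contravariantly; the contraction $\eta_{ab}U^aV^b$ is then manifestly $GL(m,\mathbb{R})$-invariant, and so is its body.

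The crux is invariance under a general superdiffeomorphism~\nn{superdiff}, under which the decomposition $\Omega=\omega+A+\eta$ genuinely mixes and $\eta(U,V)$ is no longer invariant as a full superfield. Here I would expand $\tfrac{\partial}{\partial\tilde\theta^a}=\tfrac{\partial x^i}{\partial\tilde\theta^a}\tfrac{\partial}{\partial x^i}+\tfrac{\partial\theta^b}{\partial\tilde\theta^a}\tfrac{\partial}{\partial\theta^b}$ and observe that, because $x,\tilde x$ are Grassmann even while $\theta,\tilde\theta$ are odd, the off-diagonal block $\tfrac{\partial x^i}{\partial\tilde\theta^a}$ is an odd superfield and hence has vanishing body. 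Consequently, in $(\tilde\eta_{ab})_0=\big(\Omega(\tfrac{\partial}{\partial\tilde\theta^a},\tfrac{\partial}{\partial\tilde\theta^b})\big)_0$ the contributions of $\omega\in\Omega^{2,0}_\pi\mathcal{M}$ (which contracts two horizontal legs) and of $A\in\Omega^{1,1}_\pi\mathcal{M}$ (which contracts one horizontal leg) both vanish at the body, being proportional to the body of $\tfrac{\partial x}{\partial\tilde\theta}$. Only the $\eta$-contribution survives, so $(\tilde\eta_{ab})_0=(J^{-1})^c_{\ a}(J^{-1})^d_{\ b}(\eta_{cd})_0$ where $J$ is the body of $\tfrac{\partial\tilde\theta}{\partial\theta}$; together with $(\tilde U^a)_0=J^a_{\ c}(U^c)_0$ this reduces the computation to the $GL(m,\mathbb{R})$ case of the previous paragraph and yields $(\tilde\eta(U,V))_0=(\eta(U,V))_0$. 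I expect this body-level vanishing of the horizontal mixing terms --- forced purely by the Grassmann parity of the Jacobian blocks --- to be the main obstacle, in the sense that it is the one step where the special structure of supergeometry, rather than a routine tensorial check, is essential.

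Finally I would assemble the pieces: $(\eta(U,V))_0$ is splitting-independent, $\Theta(\Omega)$ is canonical, so $\Theta(\Omega)\,\eta(U,V)=\Theta(\Omega)\,(\eta(U,V))_0\in\mathcal{A}$. As an alternative one could instead realize $\Theta(\Omega)\,\eta(U,V)$ directly as a weight-balanced contraction of the invariant building blocks $\Omega$, the Eulerian top field $\X$, and $U,V$ (mirroring the definition of $\Theta$ in Equation~\nn{soccer}), which would make invariance manifest without transformation laws; the body-projection argument above, however, most transparently isolates where parity does the work.
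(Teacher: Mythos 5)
Your proposal is correct and follows essentially the same route as the paper's proof: both arguments hinge on the observations that multiplication by $\Theta(\Omega)$ (containing the top monomial $\theta^1\cdots\theta^m$) annihilates everything but the body, and that the mixed Jacobian block $\partial x^i/\partial\tilde\theta^a$ is odd, hence $O(\theta)$, so only the fiber--fiber block survives and the contraction reduces to an ordinary $GL(m,\mathbb{R})$-invariant pairing. Your reformulation as ``it suffices to check invariance of the body $(\eta(U,V))_0$'' is just a cleaner packaging of the paper's statement that terms with $D=j$ or $C=i$ vanish against $\Theta(\Omega)$.
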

\begin{proof}
We need to establish independence of the choice of splitting $(\mathcal{E}M,\pi)$. 
On  a trivializing chart $(X^A)=(x^i,\theta^a)$ for $\mathcal{E}M$,  
on which $U=u^i \frac{\partial}{\partial x^i} + \mu^a \frac{\partial}{\partial \theta^a}$ and 
$V=v^i \frac{\partial}{\partial x^i} + \nu^a \frac{\partial}{\partial \theta^a}$,
the pairing reads
\begin{equation*}
\sqrt{\det \eta} \hh\hh \theta^1 \cdots \theta^m  \eta_{ab} \mu^a \nu^b\, . 
\end{equation*}
The transformation of vectors and the bilinear form under an arbitrary superdiffeomorphism $X^A \mapsto \tilde{X}^A(X)$ is controlled by the 
super Jacobian 
\begin{equation*}
	J^A{}_B = \frac{\partial X^A}{\partial \tilde{X}^B}\, , \qquad (J^{-1})^A{}_B=\frac{\partial \tilde{X}^A}{\partial X^B}\, , 
\end{equation*}
{\it videlicet},
\begin{equation*}
	\nu^a \mapsto \tilde{\nu}^a = J^a{}_B V^B\, , \qquad \eta_{ab} \mapsto \tilde{\eta}_{ab}= (J^{-1})^C{}_a \Omega_{CD} (J^{-1})^D{}_b\, . 
\end{equation*}
The invariance of $\Theta(\Omega) \hh \eta(U,V)$ follows from that of $\Theta(\Omega)$, noting that $J^i{}_a = O(\theta^1)$, and $J^a{}_b = O(\theta^0)$. In particular
\begin{equation*}
\Theta(\Omega) J^a{}_D (J^{-1})^C{}_a =\Theta(\Omega) \delta_d{}^c\, , 
\end{equation*} 
where $D=(j,d)$ and $C=(i,c)$ (and any terms with $D=j$ or $C=i$ vanish). 
\end{proof}
This lemma brings us one step closer to a vertical Moyal product on super phase-spaces. 
Firstly note that the above result extends directly to hermitean supersymplectic forms.
To proceed further we introduce---by way of a technical assumption---the additional data of a flat, torsion-free superaffine connection $\nabla$ preserving the supersymplectic form, \textit{id est}
\begin{equation*}
\nabla\Omega=0=[\nabla,\nabla \}\, . 
\end{equation*}
In principle it might be that such a connection only exists locally, 
but this suffices for  a local  theory of measurement. 
Moreover, recall from the discussion in Section~\ref{showkeys} that, {\it a l\'a} Koszul, this also gives the data of a splitting, and in turn a canonical involution~$\dagger$ as defined in Section~\ref{Hermitean}.
Indeed, given $\nabla$, it is canonical to choose the splitting given by the  corresponding  Koszul bundle ${\mathcal E}^{\!\nabla} \!M:=(\mathcal{E}M,\pi_\nabla)$. In that case we will often label the splitting choice by $\pi_\nabla$ or simply $\nabla$.
%We will later discuss the dependence of inner products on the choice of $\nabla$. \edz{Or do we???} \\

\smallskip
Now consider $\nabla^k F, \nabla^k G \in \Gamma(\otimes^kT^*\mathcal{M})$ where $F,G\in \mathcal{A}$, and $k$ is a positive integer. Using the Koszul splitting $\mathcal{E}M$, we may decompose these as
\begin{equation}\label{vertform}
\nabla^kF=f_k+\varphi_k\, , \qquad \nabla^kG=g_k+\gamma_k\, , 
\end{equation}
where $f_k,g_k \in \Omega^k_{\pi} \!  \mathcal{M} \setminus \Omega^{0,k}_{\pi} \!  \mathcal{M}$ and  $\varphi_k, \gamma_k \in \Omega^{0,k}_{\pi} \!  \mathcal{M}$. Then the proof of Lemma \ref{invariantlemma} can be employed to show that 
\begin{equation}\label{omegapower}
\Theta(\Omega)\hh  \Omega^{-k}(\varphi_k , \gamma_k):= \Theta(\Omega) \left( F  \cev{\nabla}{}^k \hh\eta^{-k}\hh \vec{\nabla}^k G \right) \in \mathcal{A}
\end{equation}
is also a well-defined superfield. Here we used the notation
\begin{equation}\label{etaprod}
F  \cev{\nabla}{}^k \hh\eta^{-k}\hh \vec{\nabla}^k G = 	\left(\nabla_{a_1} \cdots \nabla_{a_k}F \right) \eta^{a_1b_1} \cdots \eta^{a_kb_k}\left( \nabla_{b_1} \cdots \nabla_{b_k}G\right)\, .
\end{equation}
The above establishes the following inner product result.

\begin{lemma}\label{superphasespacestar}
Let $(M,\Omega,\nabla)$ be an $(n,m)$-dimensional super phase-space equipped with a flat, torsion-free superaffine connection preserving $\Omega$. Let $(\mathcal{E}M,\pi_\nabla)$ be the Koszul bundle with $\Omega=\omega+A+\eta$ where $\omega \in \Omega^{2,0}_{\nabla} \!  \mathcal{M}$, $A \in \Omega^{1,1}_{\nabla} \!  \mathcal{M}$ and $\eta \in \Omega^{0,2}_{\nabla} \!  \mathcal{M}$. Then for any~$F,G \in \mathcal{A}$, 
\begin{equation*}
	(F,G)_{\Omega,\nabla} := \sum_{k=0}^m c_k \int_{\mathcal{M}} \text{\rm Ber}(\Omega) \Theta(\Omega) \left( F  \cev{\nabla}{}^k \hh\eta^{-k}\hh \vec{\nabla}^k G \right)
\end{equation*}
is an inner product when $c_k \in \mathbb{R}_{>0}$ and $\eta$ is a bundle metric. 
\end{lemma}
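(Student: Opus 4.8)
The plan is to collapse the whole super-integral down to an ordinary integral over the body $M$ by means of the reduction formula~\nn{reduced}, whereupon each summand turns into a classical weighted $L^2$-Hodge pairing on the exterior powers of the fibre, and then to read off the three inner-product axioms from that diagonal form. Concretely, applying~\nn{reduced} with $F$ there replaced by $F\cev{\nabla}{}^k\eta^{-k}\vec{\nabla}^kG$ gives
$$
\int_{\mathcal M}\text{Ber}(\Omega)\,\Theta(\Omega)\,\big(F\cev{\nabla}{}^k\eta^{-k}\vec{\nabla}^kG\big)=\int_M\omega_0^{\wedge \frac{\dim M}{2}}\,\big(F\cev{\nabla}{}^k\eta^{-k}\vec{\nabla}^kG\big)_0\,,
$$
so the whole problem reduces to understanding the body of the covariant contraction~\nn{etaprod}.

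I would first dispose of well-definedness and bilinearity. Each integrand $\Theta(\Omega)\,\Omega^{-k}(\varphi_k,\gamma_k)$ is a genuine, splitting-independent superfunction by~\nn{omegapower} (which rests on Lemma~\ref{invariantlemma}), so every term is a well-defined real number for suitably integrable $F,G$, and the definition is independent of the chosen representative. Bilinearity over $\mathbb{R}$ is immediate, since $\nabla$ is a derivation, $\eta^{-k}$ a fixed tensor, and both the contraction~\nn{etaprod} and the Berezin integral are linear. For the core body computation I would exploit this very splitting/coordinate independence: because the value is invariant, I may evaluate it in local flat affine coordinates furnished by $[\nabla,\nabla\}=0$ together with torsion-freeness, in which $\nabla_a$ becomes simply $\partial/\partial\theta^a$ and $\eta^{-k}$ is covariantly constant with body $\eta_0^{-k}$. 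Decomposing $F=\sum_{j=0}^m F_{(j)}$ and $G=\sum_{j=0}^m G_{(j)}$ by the Koszul $\mathbb{Z}_{m+1}$-grading, the operator $\partial_{\theta^{a_1}}\cdots\partial_{\theta^{a_k}}$ annihilates every component of $\theta$-degree below $k$ and leaves surplus $\theta$'s on those of degree above $k$; since the body map is an algebra homomorphism, only the degree-$k$ pieces survive and no cross terms appear. Hence the body equals a strictly positive combinatorial multiple of the degree-$k$ Hodge pairing $\langle F_{(k)},G_{(k)}\rangle_{\eta_0}$, and
$$
(F,G)_{\Omega,\nabla}=\sum_{k=0}^m c_k'\int_M\omega_0^{\wedge \frac{\dim M}{2}}\,\langle F_{(k)},G_{(k)}\rangle_{\eta_0}\,,\qquad c_k'>0\,.
$$

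Symmetry $(F,G)_{\Omega,\nabla}=(G,F)_{\Omega,\nabla}$ is now transparent from this reduced form, since $\eta$ is the \emph{symmetric} bilinear form on the Grassmann distribution ($\eta^{ab}=\eta^{ba}$); this is cleaner than tracking Grassmann signs in the unreduced super expression. Positive definiteness follows at once: when $\eta$ is a genuine bundle metric the induced pairing on each $\Lambda^k\mathcal V^*M$ is positive definite, $\omega_0^{\wedge \frac{\dim M}{2}}$ is the symplectic volume form defining a positive measure, and the $c_k'$ are positive, so $(F,F)_{\Omega,\nabla}=\sum_k c_k'\int_M\omega_0^{\wedge \frac{\dim M}{2}}\,\langle F_{(k)},F_{(k)}\rangle_{\eta_0}\ge 0$, with equality forcing every $F_{(k)}=0$, i.e. $F=0$.

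I expect the main obstacle to be the body computation itself: precisely verifying that the connection corrections hidden in $\nabla_a$ beyond $\partial/\partial\theta^a$ either vanish (in flat coordinates) or raise the $\theta$-filtration and are therefore killed by the body map, and checking that the surviving combinatorial constants are strictly positive so that they may be absorbed into the $c_k$. The reduction to flat coordinates afforded by flatness and torsion-freeness is what makes this routine, and as a sanity check the whole argument specializes correctly to the purely fermionic Moyal computation of Example~\ref{fermistates} upon taking $c_k=1/k!$, where the sum $\sum_k\tfrac1{k!}\,\cev{\nabla}{}^k\eta^{-k}\vec{\nabla}{}^k$ is exactly the covariant super-Moyal star product and the pairing collapses to the manifestly positive sum of squared fibrewise norms.
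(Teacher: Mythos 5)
Your proposal is correct and follows essentially the same route as the paper, which itself gives no formal proof beyond the preceding establishment of well-definedness of each term via Lemma~\ref{invariantlemma} and Equation~\nn{omegapower}; the body reduction you carry out via Equation~\nn{reduced} is precisely the identity the paper records later in Section~\ref{probably}, where the pairing is rewritten as $\int_M\omega_0^{\wedge \frac{\dim M}{2}}\big(F_0G_0+\sum_k(\eta^{-k}(\nabla^kF,\nabla^kG))_0\big)$. Your degree-by-degree diagonalization in flat coordinates, positivity of the surviving combinatorial constants, and the resulting positive-definiteness argument are all sound and usefully fill in what the paper leaves implicit.
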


Lemma~\ref{invariantlemma} also implies that the action of the inverse $\eta^{-1}$, seen as a bidifferential operator acting on superfields $F,G \in \mathcal{A}$ defined by 
\begin{equation*}
\Theta(\Omega) \hh F \hh \eta^{-1} \hh G  \in \mathcal{A}\, ,
\end{equation*}
is also invariant.  Moreover, given the data $({\mathcal M},\Omega,\nabla)$, we can define a vertical Moyal star product $\filmstar_{\sss\nabla}$ canonical to this data as well by
\begin{equation}\label{GretaGarbo}
	F\filmstar_{\sss\nabla} G:=\left (\exp \eta^{-1}\right) (F,G)\, ,
\end{equation}
 where the action of $\eta^{-k}$ in the (finite) expansion of $\exp \eta^{-1}$ is defined by decomposing~$F,G$ using~\eqref{vertform} and then employing \eqref{omegapower}.
 This amounts to particular---possibly negative---choices of the coefficients $c_k$ appearing in Lemma~\ref{superphasespacestar}.
While the above display is not an invariantly defined superfield, we do have that
\begin{equation*}
	\Theta(\Omega) F\filmstar_{\sss \nabla} G \in \mathcal{A}\, .
\end{equation*} 
Because $\nabla$ is flat and torsion-free,  the usual associativity property of the star product holds
\begin{equation*}
	(F\filmstar_{\sss \nabla} G)\filmstar_{\sss \nabla} H= F\filmstar_{\sss \nabla} (G \filmstar_{\sss \nabla} H)\, .
\end{equation*}
If in addition to associativity, one desires a star product subject to a positivity condition and an inner product, then the above construction must be extended to hermitean supersymplectic forms.

\subsection{Supersymplectic Probabilities}\label{probably}
Motivated by the silly Equation~\nn{silly},
we wish to define probability states as ``star-squares''. More precisely, 
on a super phase-space $({\mathcal M},\Omega,\nabla)$ 
equipped with a flat, torsion-free superaffine connection preserving $\Omega$,
we can construct a probability cone 

\begin{equation}
\label{Mcone}
{\mathcal C}
=\Big\{
% \int_{\mathcal M} \text{Ber}(\Omega) \Theta(\Omega)\Psi \star_{\sss \nabla} \cdot \: \big| 
\Psi\in 
\Gamma({\mathcal E}^{\!\nabla}\!M)
 \, \Big|\,  \Psi= \Phi\star_{\sss \nabla}\Phi \mbox{ and } \Phi \in {\mathcal A}\Big\}\, .
\end{equation}
To explain this assertion, we continue  our
running Example~\ref{fermistates}.
\begin{example}\label{conifer}
Let the hermitean superfield $F$ be given as in Example~\ref{fermistates}.
Then 
\begin{equation}\label{Iamequationized}
\frac{F\filmstar F}{||F||^2}=\mu_0+\mu_1 \theta^1+\mu_2\theta^2-i\mu_{12}\theta^{1}\theta^2 \, ,\end{equation}
where 
$$||F||^2:=f_0^2+f_1^2+f_2^2+f_{12}^2\,\,
\mbox{ and }\mu_0=1\, ,\:\mu_1=\frac{2f_0f_1}{||F||^2}\, ,\:\mu_2=\frac{2f_0f_2}{||F||^2}\, ,\:\mu_{12}=\frac{2f_0f_{12}}{||F||^2}\, .$$
Employing standard spherical coordinates $(r,\theta,\varphi,\chi)$ on ${\mathbb R}^4\ni(f_0,f_1,f_2,f_{12})$
yields
$$
\mu_1=\sin(2\theta)\cos\varphi\, ,\quad
\mu_2=\sin(2\theta)\sin\varphi\cos\chi\, ,\quad
\mu_3=\sin(2\theta)\sin\varphi\sin\chi\, ,
$$
and hence $\mu_1^2+\mu^2_2+\mu_{12}^2 = \sin^2 (2\theta)$ with $\theta\in [0,\pi]$. It follows that the vector $\vec \mu = (\mu_1,\mu_2,\mu_{12})$ lies in the unit ball in ${\mathbb R}^3$ and labels rays $[\mu_0:\mu_1:\mu_2:\mu_{12}]=[1:\vec \mu]\in {\mathbb R}^4/{\mathbb R}_{>0}$. 
This describes the probability cone. 

Now consider an observable  $$X=x_0+x_1 \theta^1 +x_2 \theta^2 - i x_{12} \theta^1 \theta^2=X^\dagger\, ,$$
given by any hermitean superfield. Then
(in this example we suppress the Koszul connection since implicitly we are using $\nabla=d$ throughout) the corresponding expectation with respect to the state $\Psi=F\filmstar F$  is 
\begin{equation}\label{langX}
\langle X\dangle_{\Psi,\Omega}=\frac{\int_{\mathcal{M}} \text{Ber}(\Omega)\hh \Theta(\Omega)\hh F\filmstar X\filmstar F}{\int_{\mathcal{M}} \text{Ber}(\Omega)\hh \Theta(\Omega)\hh F\filmstar F}=
x_0 + \vec \mu\hh \cdot\hh \vec x\in {\mathbb R}\, .
\end{equation}
Here $\vec x = (x_1,x_2,x_{12})$ and $\vec \mu$ is as above. Finally, one might ask how to express this result in terms of probabilities taking values in $[0,1]$ and summing to unity (after all, the components of $\vec\mu$  are elements of $[-1,1]$). 
For this, special observables/random variables, typically called indicator functions, can be employed as follows.

First, from Equation~\nn{Iamequationized} we see that any normalized state takes the form 
$$
\Psi=1+ \nu_1\theta^1 + \nu_2 \theta^2 -i \nu_{12} \theta^1 \theta^2\, ,
$$
where $\vec \nu$ is in the unit ball. This defines a convex  cone in ${\mathbb R}^4$ over a three-ball. 
Such a cone is is not finitely generated. Nonetheless we may define a four-state discrete system 
 whose states are subset of the set of all 
probabilities for four independent events.
There are of course many ways to do this, for example, we can embed the three-ball in a regular tetrahedron which in turn defines a larger convex cone with four generators. For that consider the following four hermitean superfields
\begin{align*}
\psi_1&=1+\sqrt 8 \hh \theta^1+ i \theta^1\theta^2\, ,
\qquad\qquad\quad \:
\psi_2=1-\sqrt2\theta^1+\sqrt 6 \theta^2
+ i \theta^1\theta^2\, ,\\
\psi_3&=1-\sqrt2\theta^1-\sqrt 6 \theta^2
+ i \theta^1\theta^2\, ,\qquad
\psi_4 =1-3i\theta_1\theta_2\, .
\end{align*}
Then it can be checked that any normalized state $\Psi$ can be expressed as
$$
\Psi = p_1 \psi_1+p_2 \psi_2+p_3 \psi_3+p_4 \psi_4\, ,
$$
so long as
\begin{equation}\label{constraint}
p_1+p_2+p_3+p_4=1\: \mbox{ and }\:
6 ({p_1} -{p_2})^{2}+2 (p_1+p_2-2p_4 )^{2}+ (4p_3
 -1)^{2}\leq 1\, .
\end{equation}
The two relations above constrain each of the four {\it probabilities} $p_i$ ($i=1,\ldots,4$) to be separately in the interval $[0,\frac12]$. (Note that it is not difficult to write down an explicit map between the four probabilities $p_i$,  corresponding to this particular choice of random events, and the coefficients $(f_0,f_1,f_2,f_{12})$ of the underlying state function $F$.)

To call $p_i$ probabilities we ought identify the corresponding events. 
The four observables $X_i:=\frac{\psi_i+2}{12}$ 
obey 
$(\psi_i,X_j)=
\delta_{ij}$ with respect to the inner product introduced below in Equation~\nn{fiberber}. More importantly they resolve unity, meaning
$$
X_1+X_2+X_3+X_4=1\, ,
$$
and have expectations
$$
\langle X_i \rangle_\Psi= p_i\, .
$$
In this sense they are indicator functions.\hfill$\blacksquare$
\end{example}

Next we construct the Moyal product for the super manifold ${\mathbb R}^{0|m}$ which will later appear as fibers for more general super phase-spaces.
The construction of this product follows the previous example but now relies on a Clifford algebra (see~\cite{woit2017quantum}). The latter is generated by $\{\gamma^a|a=1,\ldots,m\}$  where 
$$
\{ \gamma^a,\gamma^b\} = 2 \delta^{ab} \operatorname{Id}\, .
$$
For our purposes it suffices to view the generators $\gamma^a$ as a set of $2^{\lfloor \frac m2\rfloor}\times 2^{\lfloor \frac m2\rfloor}$, hermitean, trace-free, matrices  and $\operatorname{Id}$ is represented by the identity matrix (such a representation certainly exists, see for example~\cite{VP}). 
Now we need to define an invertible  map $\sigma$ from superfields to the Clifford algebra.
Consider  the hermitean superfield
$$
F= f_0 + f_a \theta^a +\frac{ 1}{2!i} \hh f_{ab} \theta^a \theta^b 
+\frac{ 1}{3!i}\hh  f_{abc} \theta^a \theta^b \theta^c
+\frac{ 1}{4!}\hh  f_{abcd} \theta^a \theta^b \theta^c\theta^d + \frac{ 1}{5!}\hh  f_{abcde} \theta^a \theta^b \theta^c \theta^d \theta^e+\cdots \, .
$$
Then we set
\begin{equation}\label{sigCliff}
\sigma(1)=\operatorname{Id}\, ,\quad
\sigma(\theta^a) = \gamma^a\, ,\quad
\sigma(\theta^a \theta^b) \stackrel{a\neq b} = \gamma^a \gamma^b\, ,\quad
\sigma(\theta^a \theta^b\theta^c) \stackrel{a\neq b\neq c} = \gamma^a \gamma^b\gamma^c\, ,\: 
 \ldots\, ,
\end{equation}
and define $\sigma(F)$ by linearity. In turn we obtain a star product on superfields $F$ and $G$ by setting
$$
\sigma(F\filmstar G) := \sigma(F) \sigma(G)\, ,
$$
where the right hand side is matrix multiplication.
Note that 
$$
\theta^a \filmstar \theta^b + \theta^b \filmstar \theta^a= 2\delta^{ab}\, .
$$
Moreover, if $F$ is a hermitean superfield, one has that
$$
\sigma(F)^\dagger = \sigma(F)\, ,
$$
where $\dagger$ here is  complex conjugation composed with matrix transposition.
Hence the matrix trace, $\operatorname{tr} \sigma(F)^2$, is a non-negative real number.
The following result shows that this star product can be used to construct a probability cone on the space of hermitean superfields on  ${\mathbb R}^{0|m}$.

\begin{proposition} 
Let ${ C}$ be the space of superfields $\Psi$ on  ${\mathbb R}^{0|m}$ such that
$$
\Psi = F\star F\, ,
$$
where $F$ is any hermitean superfield. With respect to the  norm defined by 
$$||F||^2:=\frac{1}{2^{\lfloor \frac m2\rfloor}} \operatorname{tr} \sigma(F)^2\, $$ the set ${ C}$ is a probability cone.
\end{proposition}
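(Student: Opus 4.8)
The plan is to move the whole problem into the matrix algebra through the isomorphism $\sigma$, where the claim reduces to the standard fact that the positive semi-definite hermitean matrices form a convex cone. First I would collect the properties of $\sigma$ already recorded above: it is an invertible linear map satisfying $\sigma(F\star G)=\sigma(F)\sigma(G)$ by the very definition of $\star$, and it sends hermitean superfields to hermitean matrices with $\operatorname{tr}\sigma(F)^2\geq0$. For a hermitean $F$ the symbol $\sigma(F\star F)=\sigma(F)^2$ is then the square of a hermitean matrix, hence positive semi-definite; in particular $\Psi=F\star F$ is again a hermitean superfield, so $C$ sits inside the real vector space $\mathcal V$ of hermitean superfields, and $\sigma$ identifies $C$ with the set $\{H^2 : H=H^\dagger\}$ of squares of hermitean matrices, which is exactly the cone of positive semi-definite hermitean matrices.

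I would then check that $||\cdot||$ is a genuine norm on $\mathcal V$. Using hermiticity, $||F||^2=\tfrac{1}{2^{\lfloor m/2\rfloor}}\operatorname{tr}\sigma(F)^2=\tfrac{1}{2^{\lfloor m/2\rfloor}}\operatorname{tr}\big(\sigma(F)^\dagger\sigma(F)\big)$, which is a positive multiple of the squared Frobenius norm of $\sigma(F)$. Since $\sigma$ is injective and the Frobenius norm is a norm, $||\cdot||$ is positive definite and vanishes precisely when $F=0$.

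The cone axioms now follow from the matrix picture. Positive homogeneity is immediate: for $\lambda>0$ the superfield $\sqrt\lambda\,F$ is hermitean and $(\sqrt\lambda F)\star(\sqrt\lambda F)=\lambda(F\star F)$ by bilinearity of $\star$, so $\lambda\Psi\in C$. For closure under addition, write $\Psi_i=F_i\star F_i$ and set $P:=\sigma(\Psi_1+\Psi_2)=\sigma(F_1)^2+\sigma(F_2)^2$; as a sum of positive semi-definite matrices $P$ is positive semi-definite, so by the spectral theorem it has a positive semi-definite, hence hermitean, square root $H=P^{1/2}$. A dimension count shows $\sigma$ restricts to a bijection between $\mathcal V$ and the hermitean matrices (the hermiticity condition cuts the $2^m$ complex superfield components down to a real form of real dimension $2^m$, matching the real dimension of the hermitean matrices in the faithful representation), so $F_3:=\sigma^{-1}(H)$ is a hermitean superfield. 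Then $\sigma(F_3\star F_3)=H^2=P=\sigma(\Psi_1+\Psi_2)$ and injectivity of $\sigma$ give $\Psi_1+\Psi_2=F_3\star F_3\in C$. Closure under addition and positive scaling makes $C$ convex, and since $\sigma(F)^2=0$ forces $\sigma(F)=0$ and hence $F=0$, the origin arises only from $F=0$; discarding it yields the probability cone $(C\setminus\{0\}<\mathcal V,||\cdot||)$.

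The step needing the most care is closure under addition: its genuine content is the existence of a hermitean square root of the positive semi-definite matrix $P$ \emph{together with} the fact that this square root is the image under $\sigma$ of a hermitean superfield, which rests on $\sigma$ being an involution-respecting bijection. Related to this, for odd $m$ one must take the faithful representation of the Clifford algebra---the direct sum of its two inequivalent irreducibles---rather than a single irreducible factor, since otherwise $\sigma$ fails to be invertible and the correspondence between superfields and matrices breaks down.
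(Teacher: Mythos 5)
Your proof is correct and follows essentially the same route as the paper, which reduces the claim to the fact that a sum of squares of hermitean matrices is again the square of a hermitean matrix and then simply cites ``standard linear algebra''; you supply the details (spectral-theorem square root, pullback along $\sigma$) that the paper leaves implicit. The one imprecision is your dimension count for odd $m$: with the faithful (reducible) representation, $\sigma$ is injective but \emph{not} onto all hermitean matrices, so instead of bijectivity you should note that the positive square root of a positive element of the Clifford $*$-subalgebra again lies in that subalgebra (e.g.\ by functional calculus), after which your argument goes through unchanged.
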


\begin{proof}
The star-squared property is clearly preserved by scalar multiplication by a positive real number, so
to show that ${C} $ is a probability cone we need to establish the convexity property
	$$C+{C} \subseteq {C}\, .$$ 
	This means the sum of a pair  $\filmstar$-squares must itself be a $\filmstar$-square. Since $\filmstar$  is defined by the matrix product of hermitean matrices, it suffices to show that the sum of squares of two hermitan matrices
%$\alpha$ and $\beta$
 is itself the square of another hermitean matrix. %$\gamma$.
The result then follows by standard linear algebra.	
\end{proof}

We can also construct an inner product  between hermitean superfields
$$
(F,G):=\frac{1}{D} \operatorname{tr}
\big( \sigma(F) \sigma(G)\big) = \frac{1}{D} \operatorname{tr} \sigma(F\filmstar G)=
f_0 g_0+
\sum_{k=1}^m \big(\prod_{i=1}^k\sum_{a_i=1}^m\big) \frac1{k!}\hh f_{a_1\cdots a_k} 
g_{a_1\cdots a_k}\,  , 
$$
where $D:=2^{\lfloor \frac m2\rfloor}$. This normalization ensures that the unit superfield $F=1$ has unit norm.
The above construction  relies on a
hermitean symplectic form (or fiberwise metric)
$$
\Omega = i \delta_{ab} d\theta^a  \wedge d \theta^b\, .
$$ 
The star product can be expressed as $\filmstar=\exp\Big( \frac1i\stackrel\leftarrow{\partial}_A\Omega^{AB} 
  \stackrel\rightarrow \partial_B\!\Big)$. Moreover it is easy to check (this follows almost directly from the computation given in Example~\ref{fermistates}) that the above inner product can be obtained from the superintegral
\begin{equation}\label{fiberber}
( F,G)_\Omega=\int_{{\mathbb R}^{0|m}} \text{Ber}(\Omega)\hh \Theta(\Omega)\hh F\filmstar G\, .
\end{equation}

\medskip

It  is interesting to  develop a bound analogous to that given for the vector~$\vec \mu$ in 
Example~\ref{Iamequationized}. 
A good reference for general theory of convex cones applied to probabilities is~\cite{Schneider}.

\begin{proposition}
Let $F$ be a hermitean superfield on    ${\mathbb R}^{0|m}$. Then 
$$
{\mathscr U}
:=\frac{F\filmstar F}{||F||^2}-1\, ,
$$
where $||F||^2 := (F,F)_\Omega$, obeys the bound
$$||{\mathscr U}||^2\leq (
2^{\sss\frac12\lfloor \frac m2\rfloor}
+1)^2- \frac 15\, .
$$
\end{proposition}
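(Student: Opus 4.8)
The plan is to transport the entire statement into the Clifford–matrix picture furnished by the representation $\sigma$, where it reduces to an elementary inequality for hermitean matrices. Write $D:=2^{\lfloor \frac m2\rfloor}$ and $M:=\sigma(F)$. Since $F$ is hermitean, $M$ is a hermitean $D\times D$ matrix, and we may assume $M\neq 0$ (otherwise $\|F\|=0$ and ${\mathscr U}$ is undefined). By construction $\sigma(F\filmstar F)=M^2$ and $\sigma(1)=\operatorname{Id}$, while $\|F\|^2=(F,F)_\Omega=\tfrac1D\operatorname{tr}(M^2)$, so the normalized state $\Psi:=F\filmstar F/\|F\|^2$ satisfies $\sigma(\Psi)=D\,M^2/\operatorname{tr}(M^2)$ and hence $\sigma({\mathscr U})=\sigma(\Psi)-\operatorname{Id}$.

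The first thing I would record is the exact identity $(\Psi,1)=1$: indeed $(\Psi,1)=\tfrac1D\operatorname{tr}\big(\sigma(\Psi)\operatorname{Id}\big)=\operatorname{tr}(M^2)/\operatorname{tr}(M^2)=1$, and likewise $\|1\|^2=\tfrac1D\operatorname{tr}(\operatorname{Id})=1$. Because the inner product is symmetric and both $\Psi$ and $1$ are hermitean, expanding the square collapses the cross term,
$$
\|{\mathscr U}\|^2=\|\Psi\|^2-2(\Psi,1)+\|1\|^2=\|\Psi\|^2-1=\frac{D\,\operatorname{tr}(M^4)}{\big(\operatorname{tr}(M^2)\big)^2}-1\,.
$$
Diagonalizing $M$ with real eigenvalues $\lambda_1,\dots,\lambda_D$ and setting $x_i:=\lambda_i^2\geq 0$, one has $\operatorname{tr}(M^4)=\sum_i x_i^2$ and $\operatorname{tr}(M^2)=\sum_i x_i$, so nonnegativity of the cross terms gives $\sum_i x_i^2\leq(\sum_i x_i)^2$, i.e. $D\,\operatorname{tr}(M^4)/(\operatorname{tr}(M^2))^2\leq D$. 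Therefore $\|{\mathscr U}\|^2\leq D-1=2^{\lfloor \frac m2\rfloor}-1$.

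Finally, writing $a:=2^{\frac12\lfloor \frac m2\rfloor}=\sqrt D\geq 1$, the asserted bound reads $(a+1)^2-\tfrac15=a^2+2a+\tfrac45$, which exceeds $a^2-1=D-1$ since $2a+\tfrac95>0$; hence the stated inequality follows (in fact from a strictly stronger one). I expect no serious obstacle here: the computation is essentially forced, and the one point meriting care is the odd-$m$ case, where $\sigma$ fails to be injective—but since the argument uses only that $M=\sigma(F)$ is a nonzero hermitean matrix, the eigenvalue inequality applies verbatim and $\|{\mathscr U}\|^2\le D-1$ holds regardless of parity. The genuinely useful step is the identity $(\Psi,1)=1$, which reduces the whole estimate to the scalar inequality $\sum_i x_i^2\le(\sum_i x_i)^2$.
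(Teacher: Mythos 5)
Your proof is correct, and it takes a genuinely different and in fact sharper route than the paper's. The paper splits the normalized state function into body and soul, introduces a trigonometric parametrization $\check f_0=\cos\theta$, $\mathscr F=\sin\theta\,\check{\mathscr F}$, expands $\mathscr U\filmstar\mathscr U$ term by term, and bounds the pieces $\|\check{\mathscr F}^{\filmstar 2}\|^2\leq D$ and $|(\check{\mathscr F}^{\filmstar 3})_0|\leq\sqrt D$ separately, which is where the loose constant $(\sqrt D+1)^2-\tfrac15$ comes from. You instead exploit the exact identity $(\Psi,1)=1=\|1\|^2$ to collapse everything to $\|\mathscr U\|^2=\|\Psi\|^2-1=D\operatorname{tr}(M^4)/(\operatorname{tr}M^2)^2-1$, and then a single application of $\sum_i x_i^2\leq(\sum_i x_i)^2$ for $x_i=\lambda_i^2\geq0$ gives $\|\mathscr U\|^2\leq D-1$. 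This is strictly stronger than the stated bound (since $D-1<D+2\sqrt D+\tfrac45$) and is sharp: equality holds when $\sigma(F)$ has rank one, e.g.\ $F=\tfrac12(1-i\theta^1\theta^2)$ for $m=2$, consistent with the unit-ball picture of Example~\ref{conifer} where $\|\mathscr U\|^2=|\vec\mu|^2\leq1=D-1$. Note that both your argument and the paper's rest on the identifications $(F,G)_\Omega=\tfrac1D\operatorname{tr}\bigl(\sigma(F)\sigma(G)\bigr)$ and $\sigma(F\filmstar G)=\sigma(F)\sigma(G)$; for odd $m$ the representation $\sigma$ is not injective (e.g.\ $\gamma^1\gamma^2\gamma^3\propto\operatorname{Id}$ for $m=3$), so the trace formula and the Berezin-integral definition of $(\pdot,\pdot)_\Omega$ need not coincide and $M=0$ does not force $F=0$. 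You flag this correctly, but it is a gap in the paper's setup shared equally by its own proof, not a defect of your argument; modulo that shared assumption, your proof is complete and yields the better constant.
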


\begin{proof}
Note that ${\mathscr U}$ is the soul  of $\check \Psi
:=\frac{F\filmstar F}{||F||^2}$ and plays the {\it r\^ole} of $\vec \mu$ in our earlier example.
Decomposing the normalized state function
$$
\frac{F}{||F||}=\check f_0+{\mathscr F}
$$
into its body and soul, we have that
$
\check f_0^2 + ||{\mathscr F}||^2=1
$
because $(1,\theta^{a_1}\cdots \theta^{a_k})_\Omega=0$ (the trace of any product of distinct  gamma matrices vanishes).
Thus we may write 
$$
\check f_0= \cos \theta\, \mbox{ and }
{\mathscr F}= \sin  \theta \check
{\mathscr F}
$$
for some angle $\theta$, where $||\check
{\mathscr F}
||=1$.
In turn
$$
\check \Psi = 
1
-\sin^2\theta + \sin (2\theta) \check
{\mathscr F}+ \sin^2 \theta \check
{\mathscr F}^{\filmstar 2}\, .
$$
Hence
$$
{\mathscr U}\filmstar {\mathscr U} =
 \sin^2(2\theta) \check
{\mathscr F}^{\filmstar 2} 
-2 \sin(2\theta) \sin^2\theta 
(\check
{\mathscr F}- \check
{\mathscr F}^{\filmstar 3})
 + \sin^4 \theta\hh  (1- \check
{\mathscr F}^{\filmstar 2})^{\filmstar 2}\, ,
$$
so
\begin{align*}
||
{\mathscr U}||^2 &=
\sin^4\theta
+(\sin^2 (2\theta)-2\sin^4 \theta) ||\check
{\mathscr F}||^2 
+2 \sin(2\theta) \sin^2\theta \, \big(\check{\mathscr F}^{\filmstar 3}\big)_0
+\sin^4 \theta\hh
||\check
{\mathscr F}^{\filmstar 2}||^2 
\\
&=
\sin^2 (2\theta)+\sin^4 \theta\hh  (
 \hh
||\check
{\mathscr F}^{\filmstar 2}||^2-1 )
+2 \sin(2\theta) \sin^2\theta \, \big(\check{\mathscr F}^{\filmstar 3}\big)_0
\, .
\end{align*}
Since $\check{\mathscr F}$ is represented by a hermitean matrix, 
$$||\check{\mathscr F}^{\filmstar 2}||^2
=\frac{1}{D} \operatorname{tr} \big(\sigma(\check {\mathscr F})^4\big)
\leq
\frac{1}{D} \Big(\operatorname{tr} \big(\sigma(\check {\mathscr F})^2\big)\Big)^2=D\, $$
and
$$
\Big| \big(\check{\mathscr F}^{\filmstar 3}\big)_0\Big|
=\Big|\frac{1}{D} \operatorname{tr} \big(\sigma(\check {\mathscr F})^3\big)\Big|
\leq
\sqrt D\, .$$
 Hence
$$||{\mathscr U}||^2\leq \big| D \sin^4 (\theta)
+\sin^2(2\theta)-\sin^4\theta
\big| +2 \sqrt D 
\leq (\sqrt D+1)^2- \frac 15\, .
$$
\end{proof}

%\color{red}
%Hence
%$
%{\mathscr U}\filmstar {\mathscr U} =
% \sin^2(2\theta) \check
%{\mathscr F}^{\filmstar 2} 
%-2 \sin(2\theta) \sin^2\theta 
%\check
%{\mathscr F}
% + \sin^4 \theta\hh  (1- \check
%{\mathscr F}^{\filmstar 2})^{\filmstar 2}
%$,
%so
%\begin{align*}
%||
%{\mathscr U}||^2 &=
%\sin^4\theta
%+(\sin^2 (2\theta)-2\sin^4 \theta) ||\check
%{\mathscr F}||^2 
%+\sin^4 \theta\hh
%||\check
%{\mathscr F}^{\filmstar 2}||^2 
%\\
%&=
%\sin^2 (2\theta)+\sin^4 \theta\hh  (
% \hh
%||\check
%{\mathscr F}^{\filmstar 2}||^2-1 )\, .
%\end{align*}
%Since $\check{\mathscr F}$ is represented by a Hermitian matrix, 
%$$||\check{\mathscr F}^{\filmstar 2}||^2
%=\frac{1}{D} \operatorname{tr} \big(\sigma(\check {\mathscr F})^4\big)
%\leq
%\frac{1}{D} \Big(\operatorname{tr} \big(\sigma(\check {\mathscr F})^2\big)\Big)^2=D\, .$$
% Hence
%$$||{\mathscr U}||^2\leq D \sin^4 (\theta)
%+\sin^2(2\theta)-\sin^4\theta\leq D + \frac 45\, .
%$$
%\color{black}
%
%
%\begin{proposition}
%CONE OVER ``TOPOLOGICAL SPHERE'' SPHERE NOT FINITELY GENERATED???
%\end{proposition}
%

%\nn{sigCliff}

We have just shown how to produce a probability cone along fibers but
still need to show that Equation~\nn{Mcone} actually defines a probability cone, so we now
return to the data of 
 a super phase-space $({\mathcal M},\Omega,\nabla)$ 
equipped with a flat, torsion-free superaffine connection preserving $\Omega$. 
%%We wish to 
%%construct a bundle of cones whose fibers are as given in the above discussion.
%
%%\nn{Mcone}, FRAMES!!!!
The Koszul bundle ${\mathcal E}^{\nabla\!}M=\Lambda {\mathcal V}M$ of $\nabla$ is equipped with a fiberwise inner product coming from the bundle metric $\eta$, for which we can consider the corresponding bundle of frames. This is an $O(m)$ principal bundle over the base $M$. In turn there is (at least locally) an associated Clifford bundle
${\mathcal C}M$  over $M$ with {\it Clifford map}~$\gamma$
$$
\Gamma({\mathcal V}M)\ni v\longmapsto \gamma(v)\in \Gamma({\mathcal C}M)\, ,
$$
where
$$
\gamma(v) \gamma(u) + \gamma(u)\gamma(v) = 2 \eta(v,u)\operatorname{Id}\, ,
$$
for any pair of sections $v$ and $u$. Here $\operatorname{Id}$ is the unit section of ${\mathcal C}M$. 
 The Clifford map extends to  act on 
tensor powers of vertical vectors according to
$$
\gamma(v_1 \otimes \cdots \otimes v_k)= \gamma(v_1)\cdots \gamma(v_k) \in \Gamma({\mathcal C}M)\, .
$$
In turn this defines $
\gamma(p) \in \Gamma({\mathcal C M})
$ for any
 section $p\in \Gamma(\Lambda^k{\mathcal V}M)$.
 Indeed any section of~${\mathcal C}M$
 can uniquely decomposed as a sum $\sum_{k=0}^m \gamma(p_k)$ where $p_k\in \Gamma(\Lambda^k{\mathcal V}M)$. Note that $\gamma(f):=f\operatorname{Id}$ when $f\in C^\infty M=\Gamma(\Lambda^0M)$.

\smallskip

As discussed earlier, because $\nabla$ is flat and torsion-free, given a superfield $F$
%, the vertical part of $\nabla^k F$, denoted $(\nabla^k F)_{\rm v}$, defines a section of $\Omega^{0,k}_\nabla{\mathcal M}$. In turn 
we may employ the inverse of $\eta \in \Omega^{0,2}_{\nabla} \!  \mathcal{M}$ to  make a section 
$\eta^{-k} (\nabla^kF)\:\in \Gamma(\Lambda^k{\mathcal V}M)$. (Note that this only uses the vertical part of $\nabla^k F$.) In turn, fed to~$\gamma$, we get  a section 
$$
\gamma(\nabla^k F):=\gamma\big(\eta^{-k} (\nabla^k F)\big) \in
\Gamma({\mathcal C}M)\, . 
$$

We may now define an invertible map from superfields $F$ to Clifford bundle sections
$$
\sigma(F)=\gamma\big (  (\exp \eta^{-1} \nabla F )_0 \big)  
=\operatorname{Id} F_0 + \gamma(\nabla F) + \frac1{2!}\gamma(\nabla^2 F) + \cdots + \frac1{m!} \gamma(\nabla^m F)\, .
$$
Thus  Clifford multiplication can be used to 
define a vertical star product 
via
$$
\sigma(F\filmstar_{\nabla} G)=\sigma(F) \sigma(G)\, .
$$
It is not difficult to check  (by applying our
earlier discussion of ${\mathbb R}^{0|m}$ along fibers)
that
the Clifford definition of the star product agrees with the one given earlier, {\it id est},
$$
F\filmstar_{\nabla} G=\left (\exp \eta^{-1}\right) (F,G)\, .
$$
The Berezinian integral then gives us a
manifestly positive inner product between 
hermitean superfields $F$ and $G$ on ${\mathcal M}$,
\begin{equation*}
(F,G)_{\Omega,\nabla}:=
	\int_\mathcal{M} \text{Ber}(\Omega) \Theta(\Omega) F\filmstar_{\nabla} G=\int_M \omega_0^{\wedge \frac{\text{dim} M}{2}}
	\Big(
	F_0 G_0
	+
	\sum_{k=1}^m\big(
		\eta^{-k}(\nabla^k F,\nabla^k G)\big)_0
	\Big)\, ,
\end{equation*}
where $\big(\eta^{-k}(\nabla^k F,\nabla^k G)\big)_0$ is  the function on $M$ 
defined by the body of Equation~\nn{etaprod}.

\smallskip

The probability cone ${\mathcal C}$ of Equation~\nn{Mcone} is now defined. That this is indeed a probability cone in fact follows directly from the fiberwise examples given above.
In the language of Definition~\ref{mylozengeisblack}, 
the vector bundle ${\mathcal V}Z$ is ${\mathcal E}^{\sss \nabla} M$. The vector space $V$ is the space~$\Lambda {\mathbb R}^{m}$ while $C$ is the space of elements of the form $\sigma^{-1}\big(\sigma(F)^2\big)$ where $F\in \Lambda {\mathbb R}^{m}$. Pre-states are superfunctions, {\it id est} sections $\Psi$ of ${\mathcal E}^{\sss \nabla} M$ expressible as a $\filmstar_{\sss \nabla}$-square $\Phi\filmstar_{\sss \nabla} \Phi$. When $\int_{\mathcal M} \operatorname{Ber}(\Omega) \Theta(\Omega)
\Phi
\filmstar_{\sss \nabla} \Phi
<\infty$,  the pre-state $\Psi$ is a {\it bona fide} state.

To summarize, on a super phase-space $(M,\Omega)$ equipped with a flat, torsion-free connection~$\nabla$, the expectation of an observable given by a hermitean superfield $X$ with respect to  
a state $\Psi\in {\mathcal C}$ defined by the state function $\Phi$
 is given by
$$
\langle
X
\dangle_\Psi:=\frac{\int_{\mathcal M} \operatorname{Ber}(\Omega) \Theta(\Omega)\, \Phi \filmstar_\nabla \!X\filmstar_\nabla \!\Phi
}{\int_{\mathcal M}\operatorname{Ber}(\Omega) \Theta(\Omega) \, \Phi \filmstar_\nabla  \Phi}\, .
$$
We are now ready to describe dynamics.

\section{Discrete Dynamical  Systems}
\label{sec7}

\subsection{Super Dynamics} Analogous to the bosonic case, dynamics on a symplectic supermanifold shall be %associated with flows 
% generated by vector fields. Here we will be interested in flows 
 generated by Grassmann even super vectors. 
 %Firstly, 
 On a symplectic supermanifold $(\mathcal{M},\Omega)$, a \textit{symplectic super vector} is any super vector $\Rho$ that satisfies
\begin{equation}
	\mathcal{L}_{\Rho}\Omega =0\, .  \label{superrho}		
\end{equation}
In analogy with the bosonic case, we make the following definition. 

\begin{definition}
	A \textit{dynamical super phase-space} is a triple~$(\mathcal{M},\Omega,L)$ where $(\mathcal{M},\Omega)$ is a symplectic supermanifold and $L\subset T\mathcal{M}$ is a Grassmann-even line subbundle whose sections are symplectic super vectors.
\hfill$\blacklozenge$\end{definition}

Given an even superfield $H$ on a 
 super phase-space, one can define the corresponding {\it Hamiltonian super vector field} $X_H$ by $$\flat X_H = d_TH\, .$$ Clearly  $X_H$ is a symplectic super vector. Once again (see Example \ref{oddsympdynamics}), this data can be succinctly  repackaged in terms of  a super phase-spacetime:
\begin{example}
Let $(\mathcal{M},\Omega)$ be a super phase-space such that $\Omega = d_T \Lambda$ and let $H$ be an even superfield. Calling 
$$\mathcal{Z}=\mathcal{M}
\times \stackunder{$\mathbb{R}$}{$\stackrel{\rotatebox{90}{$\scriptscriptstyle \in$}}{t}$} \stackrel{\pi}{\to} \mathcal{M}\, , $$ 
we consider the super 1-form
\begin{equation*}
A= \pi^*\Lambda + H dt \in \Omega^1\mathcal{Z}\, . 
\end{equation*}
The pair $(\mathcal{Z},d_TA)$ defines a super phase-spacetime. Moreover the supervector $$i^t_* \Omega^{-1}(d_TH,\pdot)+\partial _t,$$ where $i^t$ is the inclusion $\mathcal{M} \to \mathcal{M} \times \{t\}$, defines a line bundle whose sections are symplectic supervectors and thus in turn defines a dynamical super phase-space. \hfill$\blacksquare$
\end{example}
Quite generally, the supersymplectic form of a super phase-spacetime $(\mathcal{Z},\Omega)$  encodes a line bundle $L = \operatorname{ker} \Omega$, sections of which determine generators of dynamics on~$\mathcal{Z}$. Note that the proof of Proposition \ref{odd-dynamical} applies directly here as well. When the line bundle $L$ is orientable, there is a notion of future directed super dynamics 
% (globally)
%\edz{discussion of causal structures?} 
%on super phase-spacetimes. \\

The body  $(Z,\omega_0)$ of $(\mathcal{Z},\Omega)$ is itself a symplectic manifold.  When $\mathcal{Z}$ is a super phase-spacetime it follows, for any $\rho$ subject to $\iota_\rho \omega_0=0$, that 
$(Z,\omega_0,\operatorname{span} \rho)$ 
 is a dynamical phase-space.
The following proposition shows how     generators  of dynamics~$\rho_0$ on $(Z,\omega_0)$ lift to super generators on  $\mathcal{Z}$.
\begin{proposition}\label{p2P}
	Let $(\mathcal{Z},\Omega, L)$ be a super phase-spacetime, and $(\mathcal{E}Z,\pi)$ be a representative vector bundle where $\Omega=\omega+A+\eta$ with $\omega \in \Omega^{2,0}_{\pi} \!  \mathcal{Z}$, $A \in \Omega^{1,1}_{\pi} \!  \mathcal{Z}$ and $\eta \in \Omega^{0,2}_{\pi} \!  \mathcal{Z}$. 
	Suppose that 
	$$
	\omega_0(\rho_0,\pdot)=0
	$$
	for some $0\neq \rho_0\in  \Gamma(TZ)$.  
	Assuming $\eta$
is	 
%	$\eta_0:=\eta|_{\theta^a=0}$ (and therefore $\eta$ itself) is 
invertible, the general solution of Equation~\nn{superrho} is then
% (locally) 
given  
%(up to addition of any superfunction times $\rho_0$ and overall multiplication by an arbitrary superfunction) 
by:
	\begin{equation*}
		\Rho= \rho+\eta^{-1}A(\rho ,\pdot)\, , \quad \rho=
		\Big(\frac{\operatorname{Id}}{\operatorname{Id}+\omega_0^{-1}\circ\hat \omega}\Big)(\rho_0,\pdot\hh)
%		+G \rho_0
%		(1+f(\theta))\rho_0+v
\, .
	\end{equation*}
	In the above    
%	$\omega_0$	is the body of $\omega$, 
	$\hat \omega :=\omega-\omega_0 -A\circ \eta^{-1}\circ A$ and $\omega_0^{-1}$ is any choice of partial inverse
	of $\omega_0$.
%	 defined on the image of $\omega_0$.
%	Also $\omega_0$ 
% 	
%	
%	
%	 is some superfunction with vanishing body, and $v \in \Gamma(T_{\pi}\mathcal{Z})$ is an~$O(\theta^2)$ vector field such that $\omega_0(v,\pdot)=0$. 
%	%CHECK THAT $v$ IS NEEDED
%	%$\flat (v_0) \in \ker \rho_0$ where $\flat: \Gamma(TM_0) \to \Omega^1M_0$ denotes the musical map of the body $\omega_0$.
%	\edz{fixlater}
\end{proposition}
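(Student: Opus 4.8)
The plan is to reduce the symplectic condition \nn{superrho} to the kernel equation $\iota_\Rho\Omega=0$ and then solve the latter by exploiting the invertibility of $\eta$. First, since $\Omega$ is closed, Cartan's magic formula gives $\mathcal{L}_\Rho\Omega=d_T\iota_\Rho\Omega+\iota_\Rho d_T\Omega=d_T\iota_\Rho\Omega$, so, exactly as in the proof of Proposition~\ref{odd-dynamical} (which the excerpt notes applies verbatim), any super vector obeying $\iota_\Rho\Omega=0$ is automatically symplectic and generates the line bundle $L=\operatorname{ker}\Omega$. I would therefore solve $\iota_\Rho\Omega=0$. Writing $\Rho=\rho+\chi$ with base part $\rho=\Rho^i\tfrac{\partial}{\partial x^i}$ and fibre part $\chi=\Rho^a\tfrac{\partial}{\partial\theta^a}$, and using $\Omega=\omega+A+\eta$, the one-form $\iota_\Rho\Omega$ splits into a horizontal ($dx$) and a vertical ($d\theta$) piece; vanishing of each yields a coupled pair of equations, schematically $\iota_\rho\omega+A(\chi,\pdot)=0$ and $A(\rho,\pdot)+\eta(\chi,\pdot)=0$.

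Because $\eta$ is assumed invertible, the vertical equation solves immediately for the fibre part, $\chi=\eta^{-1}A(\rho,\pdot)$, which already reproduces the stated form $\Rho=\rho+\eta^{-1}A(\rho,\pdot)$. Substituting this back into the horizontal equation eliminates $\chi$ and leaves a purely base equation governed by the Schur complement $\omega-A\eta^{-1}A$. By the definition $\hat\omega:=\omega-\omega_0-A\eta^{-1}A$ this is precisely $\omega_0+\hat\omega$, so $\rho$ must satisfy $(\omega_0+\hat\omega)(\rho,\pdot)=0$. The input here is the super linear algebra of $\iota_\Rho$ on the bigraded components $\Omega^{2,0}_\pi\mathcal{M}$, $\Omega^{1,1}_\pi\mathcal{M}$, $\Omega^{0,2}_\pi\mathcal{M}$; tracking Grassmann signs is tedious but mechanical.

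The crux is solving $(\omega_0+\hat\omega)(\rho,\pdot)=0$ for $\rho$ given only the body datum $\omega_0(\rho_0,\pdot)=0$, in the face of the degeneracy of $\omega_0$ (whose body kernel is the one-dimensional $\operatorname{span}\rho_0$). The key structural observation is that $\hat\omega$ is soul-valued: the coefficients of $A\in\Omega^{1,1}_\pi\mathcal{M}$ are Grassmann odd and the soul of $\omega$ starts at $O(\theta^2)$, so $\hat\omega=O(\theta^2)$ and the endomorphism $\omega_0^{-1}\hat\omega$ is nilpotent for any partial inverse $\omega_0^{-1}$. Hence the resolvent $\tfrac{\operatorname{Id}}{\operatorname{Id}+\omega_0^{-1}\hat\omega}=\sum_{\ell\ge0}(-\omega_0^{-1}\hat\omega)^\ell$ is a terminating sum and $\rho:=\big(\operatorname{Id}+\omega_0^{-1}\hat\omega\big)^{-1}(\rho_0,\pdot)$ is well defined. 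To verify that it solves the base equation, I would apply $\omega_0$ to $(\operatorname{Id}+\omega_0^{-1}\hat\omega)\rho=\rho_0$ and use $\omega_0\rho_0=0$; this collapses the whole claim to the single solvability condition $\hat\omega(\rho,\rho_0)=0$, equivalently $\hat\omega\rho\in\operatorname{im}\omega_0=\operatorname{Ann}(\rho_0)$, which is what guarantees $\omega_0\omega_0^{-1}\hat\omega\rho=\hat\omega\rho$.

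This consistency condition is where the argument is delicate, and it is the step I expect to be the main obstacle. I would establish it in one of two ways: either by invoking the maximal non-degeneracy of $\Omega$ — which forces $\operatorname{ker}(\omega_0+\hat\omega)$ to be a rank-one free module over the superfunctions whose body is $\operatorname{span}\rho_0$, so that a lift of $\rho_0$ necessarily exists and the terminating resolvent must reproduce it — or, more explicitly, by feeding the quartet of closure relations \nn{quartet1} into the antisymmetrised pairing $\hat\omega(\rho,\rho_0)$ and checking that it vanishes order by order in the soul. Finally, independence of the resulting line $L$ from the choice of partial inverse follows because two choices of $\omega_0^{-1}$ differ by a map into $\operatorname{span}\rho_0$, altering $\rho$ only by a superfunction multiple and hence leaving $\operatorname{span}\Rho$ unchanged.
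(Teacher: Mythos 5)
Your argument is essentially the paper's own: reduce $\mathcal{L}_{\Rho}\Omega=0$ to $\iota_{\Rho}\Omega=0$, split $\Rho$ into base and vertical parts, eliminate the vertical part via the invertibility of $\eta$ to arrive at the Schur--complement equation $(\omega_0+\hat\omega)(\rho,\pdot)=0$, and invert the nilpotent (soul-valued) perturbation of $\omega_0$ by the terminating geometric series. The solvability condition $\hat\omega\hh\rho\in\operatorname{im}\omega_0=\operatorname{Ann}(\rho_0)$ that you rightly flag as the delicate step is handled in the paper by the skewness of $\bar\omega=\omega_0+\hat\omega$ (which gives $\hat\omega(\rho_0,\rho_0)=0$ and, by ``inspection'' of the block lower-triangular system, the vanishing at each subsequent order) --- a sharpened form of your first suggested route --- and the paper additionally records that the general solution carries an arbitrary even superfield factor $F$ multiplying $\rho_0$.
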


\begin{proof}
Decomposing 
%\edz{divinity for vectors defined???}
$$\Rho=\rho + \rho_V\, $$
into its body $\rho$ and soul $\rho_V$, Equation~\nn{superrho} produces
$$
\omega(\rho,\pdot)+ A(\pdot ,\rho_V)=0=A(\rho,\pdot)+\eta(\rho_V)\, .
$$
Solving the second of these equations for $\rho_V$ and back-substituting in the first yields
\begin{equation}\label{sappy}
\bar \omega(\rho)= 0\, ,
\end{equation}
where 
$$
\bar \omega := \omega - A\circ \eta^{-1}\circ A\, .
$$
Upon expanding in the ${\mathbb Z}_{m+1}$-grading determined 
%\edz{is there not a name for this} 
by the splitting $\pi$, Equation~\nn{sappy} becomes a block lower triangular matrix system of equations whose diagonal is $\omega_0$, thus let us write
$$
\hat \omega=\bar \omega-\omega_0 \, ,
$$
where $\hat \omega$ is nilpotent.
Moreover, because $\bar \omega$ is skew, $\bar \omega(\rho_0,\pdot)$ annihilates $\rho_0$, so the covector $\big(\omega_0^{-1}\circ\hat \omega\big)(\rho_0,\pdot)$
is well-defined given a partial inverse (note that the image of $\omega_0$ is the kernel of  $\rho_0$ viewed as a linear map $\Gamma(T^*Z)\to C^\infty Z$). Now suppose $F$ is any even superfield. Then an easy inspection of this matrix system of equations
 shows that the general solution to Equation~\nn{sappy} is
$$
\rho = 
%F\hh.\hh 
\Big(\frac{\operatorname{Id}}{\operatorname{Id}+\omega_0^{-1}\circ\hat \omega}\Big)({F}\rho_0,\pdot\hh)%{\color{red}G}\rho_0
%=\F\hh.\hh (\omega_0 \circ \bar\omega^{-1})(\rho_0,\pdot)
\, ,
$$
where $\omega_0^{-1}$ is any choice of partial inverse mapping $\ker \rho_0\subset \Gamma(T^*Z)$ to a choice of subspace of $\Gamma(TZ)\not\ni
\rho_0$.

%{\color{red}I don't think we need both F and G. Solution space is 1d or 2d?}
\end{proof}
\noindent
We have now assembled the super geometric technology required to handle dynamical systems involving both discrete and continuous degrees of freedom.

\subsection{Super Laboratories, States and Measurements} \label{superlabs}

Motivated by our earlier discussion of computer bits and superfields, we now focus on the data of a super phase-spacetime  $({\mathcal Z},\Omega)$.
Our first task is to develop a notion of an instantaneous laboratory along the lines  discussed in Section~\ref{sec2}.
% apply directly to supermanifolds upon a choice of splitting/vector bundle. Shortly, we will discuss how to obtain an inner product (and thus a measure) on a super phase-space. 
First, if~${\mathcal Z}=(Z,{\mathcal A})$ is a supermanifold and $M\hookrightarrow Z$ is a hypersurface in $Z$, we call the induced supermanifold $(M, {\mathcal A}|_M)$
a {\it superhypersurface}. 
Note that an exponential map $\exp(t v)$ on a supermanifold ${\mathcal Z}$ restricts to superhypersurfaces in the natural way.
In particular, this induces the standard exponential  map on the body, giving $M_t =\exp(t v_0) M$.
We may now make the following definition.

%As demonstrated in the example of computer bits, the superfields on a symplectic supermanifold will be identified with states of a dynamical discrete system. As superfields obey a noncommutative product rule, a probabilistic interpretation requires care. Before we move to describe super states in detail, we continue with super laboratories and construction of a positive measure that is not affected by the noncommutative nature of the product of states. 

\begin{definition}
An \textit{instantaneous super laboratory} $({\mathcal M},\Omega^*)$ is any
superhypersurface   
 of a super phase-spacetime $({\mathcal Z},\Omega)$ along which the pullback $\Omega^*$ is non-degenerate.
%
%
%
%symplectic super submanifold \edz{define sub supermanifold somewhere, we also want to be able to remove thetas!!!} of a super phase-spacetime $({\mathcal Z},\Omega)$ for which the pullback $\Omega^*$ is non-degenerate.
\hfill$\blacklozenge$\end{definition}

Next we  choose a suitable set of observables. For that we employ  the algebra formed by the space of (hermitean) superfunctions ${\mathcal A}$ on a super phase-spacetime ${\mathcal Z}=(Z,{\mathcal A})$. Moreover, given a choice of splitting ${\mathcal E}Z$, these belong to the space of sections of the vector bundle~${\mathcal E}Z$. Given a measure, this aligns with the definition given in Section~\ref{sec2}. %Therefore we take the space of {observables} on a super phase-spacetime ${\mathcal Z}$ to be $\Gamma({\mathcal E}Z)$.
%NEXT NEED STATES AND INNER PRODUCT (ALL ON DUAL BUNDLE). MAYBE THE HODGE TO MOYAL EXAMPLE IS A GOOD WAY TO START? HENCE A NEW SUBSECTION ON INNER PRODUCTS...?
States are then identified with sections of a convex cone inside the dual bundle $\mathcal{E}Z^*$. This gives a notion of positivity. The space of  (normalizable) superfunctions in the kernel of~$\mathcal{L}_\Rho$, for any~$\Rho$ that obeys~$\Omega(\Rho,\pdot)=0$, is the  fundamental state space, which we will denote by~$\mathcal{A}_\Omega$. 
%Elements of~$\mathcal{A}_\Omega$ are termed {\it state functions}.

 %For that  we need a bundle norm or even better,  an inner product on $\mathcal{E}Z^*$. 
\begin{lemma}
	Let $(\mathcal{Z},\Omega,\nabla)$ be a super phase-spacetime equipped with a flat, torsion-free superaffine connection preserving $\Omega$. Moreover suppose that $\Rho\in \ker \Omega$.
	 %that $\nabla P$ is in the span over $\mathcal{A}$ of $\Rho$. 
	 Then there exists a canonical  $\filmstar$-product on  $\mathcal{A}_\Omega$.
		\begin{proof}
	First note that $\Omega(\Rho,\pdot)=0$ implies 
	$$\Omega(\nabla \Rho,\cdot)=0\,.$$
	Thus $\nabla \Rho$ must be proportional to $\Rho$, or in other words $\nabla$ preserves the kernel of $\Omega$.\\[-2mm]
	
		Now let $F,G$ be state functions. 
		Hence $d_TF$ and $d_TG$ are both in the kernel of $\iota_P$.
		Then observe that $$\Omega(X,\pdot)=d_T F$$
		has a unique solution modulo addition of any vector field in the kernel of $\Omega$. 
 Hence $\Omega^{-1}(d_TF,d_TG)$ is also well-defined.\\[-2mm]
		
		Now consider the (covariant) tensor 
		$$X:=\nabla \cdots \nabla d_TF\, .$$
		We wish to establish $X(\pdot,\ldots,\Rho,\ldots,\pdot)=0 $ for any $\Rho\in \ker \Omega$. Since $\nabla$ is torsion-free and flat, it suffices to show $X(\pdot,\ldots,\pdot,\Rho)=0$. This follows from a simple inductive argument. The base case for $X=\nabla d_T F$ looks like
		$$\iota_{\Rho} \nabla d_T F=[\iota_P,\nabla] d_TF \propto \iota _\Rho d_TF=0 .$$
		The proportionality statement follows from the assumption that $\nabla \Rho $ is in the span of $\Rho$. The induction step follows similarly.
		
	A canonical $\filmstar$-product is given by 
	$$F\filmstar G:=\left (\exp \Omega^{-1}\right) (F,G)$$	
where $\Omega^{-k}(F,G)$ is defined in the obvious way. That the above defines a star product again follows from usual derivation of the standard super Moyal star product  because~$\nabla$ is torsion-free and flat, and thus super-commuting (see for example \cite{bayen1978deformation, lichnerowicz1982deformations, fedosov1994simple}).
%
%		$\Omega$ is invertible on $T\mathcal{M}/\ker \Omega$. The image of $\Omega$ is isomorphic to $\ker \rho$, where $\rho \in \Ker \Omega$ is the generator of dynamics. Thus $\Omega^{-1}$ is well-defined on $\ker \rho$.   
	\end{proof} 
\end{lemma}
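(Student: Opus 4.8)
The essential difficulty is that on a super phase-spacetime the form $\Omega$ is degenerate, with one-dimensional kernel $\operatorname{span}(\Rho)$, so there is no genuine inverse from which to build a Moyal-type product as in Lemma~\ref{superphasespacestar}. My plan is to show that on the restricted domain $\mathcal{A}_\Omega$ of state functions the only partial inverse that is ever needed is in fact unambiguous, because every tensor to which it is applied annihilates the kernel direction. Concretely, I would define $F\filmstar G := (\exp\Omega^{-1})(F,G)$, with $\Omega^{-1}$ any partial inverse of the musical map $\flat$, and then verify that each term $\Omega^{-k}(\nabla^k F,\nabla^k G)$ descends to a well-defined superfunction, independent of the choice of partial inverse.

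The first key step is to record that $\nabla$ preserves $\ker\Omega$. Since $\nabla\Omega = 0$, covariantly differentiating the defining relation $\iota_\Rho\Omega = 0$ yields $\iota_{\nabla\Rho}\Omega = 0$; as the kernel is a Grassmann-even line, this forces $\nabla\Rho\in\operatorname{span}(\Rho)$. The second key step is that, by the super Cartan formula $\mathcal{L}_\Rho = d_T\iota_\Rho + \iota_\Rho d_T$ applied to a function, the state-function condition $\mathcal{L}_\Rho F = 0$ is equivalent to $\iota_\Rho d_T F = 0$. Recalling $\nabla F = d_T F$ on functions, these two facts give the base case: $\nabla F$ annihilates $\Rho$, so solving $\flat(X) = d_T F$ for $X$ modulo $\ker\Omega$ produces a scalar $\Omega^{-1}(d_T F, d_T G) := (d_T G)(X)$ insensitive to the kernel ambiguity precisely because $d_T G$ also kills $\Rho$.

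For the higher pairings I would run an induction showing that the covariant tensor $X_k := \nabla^k F$ satisfies $\iota_\Rho X_k = 0$ in every slot. Because $\nabla$ is torsion-free and flat its iterated covariant derivatives super-commute, so it suffices to check the single contraction $X_k(\pdot,\ldots,\pdot,\Rho) = 0$. The base case is $\iota_\Rho d_T F = 0$, and the inductive step follows from the commutator identity $[\iota_\Rho,\nabla]\propto\iota_\Rho$, which is exactly the content of $\nabla\Rho\propto\Rho$ established above. This guarantees that contracting with the partial inverse in each slot is unambiguous, so every $\Omega^{-k}(\nabla^k F,\nabla^k G)$ is a genuine superfunction on $\mathcal{Z}$, and hence $\exp\Omega^{-1}$ makes sense on all of $\mathcal{A}_\Omega$.

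Finally I would assemble $F\filmstar G = (\exp\Omega^{-1})(F,G)$ and verify associativity. Here no new degeneracy issues arise: associativity of the super Moyal product is a purely formal consequence of the generators super-commuting, which holds because $\nabla$ is flat and torsion-free, so the standard derivation transfers verbatim. I expect the main obstacle to be the inductive well-definedness argument of the previous paragraph---ensuring the degenerate direction is never seen by any contraction---since this is where all three hypotheses on $\nabla$ (compatibility with $\Omega$, flatness, and vanishing torsion) must be used in concert; once it is secured, both the definition and the associativity of $\filmstar$ are routine.
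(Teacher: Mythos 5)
Your proposal is correct and follows essentially the same route as the paper's own proof: establish that $\nabla$ preserves $\ker\Omega$ from $\nabla\Omega=0$, use $\mathcal{L}_\Rho F=\iota_\Rho d_T F=0$ for the base case, run the same induction on $\iota_\Rho\nabla^k F=0$ using $[\iota_\Rho,\nabla]\propto\iota_\Rho$ and the super-commutativity of the flat, torsion-free $\nabla$, and then define $F\filmstar G=(\exp\Omega^{-1})(F,G)$ with associativity deferred to the standard Moyal argument. The only difference is cosmetic: you spell out the Cartan-formula justification for $\iota_\Rho d_T F=0$, which the paper states without comment.
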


In fact for our  classical measurement theory, we only need a vertical star operation. This is in fact simpler to construct than the super Moyal star given above because, as discussed earlier, in a choice of splitting, $\Omega$ is fiberwise invertible.
\begin{definition}
	Let $(\mathcal{Z},\Omega,\nabla)$ be a super phase-spacetime equipped with a flat, torsion-free superaffine connection preserving $\Omega$ and $F,G$ be superfunctions on $\mathcal{Z}$. Then given a superhypersurface $\Sigma$, we define the inner product of $F$ and $G$ by
 	$$(F,G)_{\Omega,\Sigma}:=\int_\Sigma \text{Ber}(\Omega)\Theta(\Omega)F \filmstar_{\sss \nabla} G\, , $$
	where $F\filmstar_{\sss \nabla}G:=\left (\exp \eta^{-1}\right) (F,G)$
and $\eta^{-k}(F,G)$ is as defined in  Equation~\nn{etaprod}.\noindent	\hfill$\blacklozenge$\end{definition}
\noindent
The above definition is a super phase-spacetime analog of Definition~\ref{superphasespacestar}; its well-defined\-ness (with respect to superdiffeomorphisms) follows from the discussion preceding that definition. Next we study how the above inner product behaves with respect to superdynamics. First we need a technical lemma.

%TO-DO LIST:
%1. Is it well-defined?
%2. Is it preserved under $\Rho$-flows? (flow F,G, $\Sigma$)
%3. positivity

\begin{lemma}
	Let $(\mathcal{Z},\Omega,\nabla)$ be a super phase-spacetime equipped with a flat, torsion-free superaffine connection preserving %$\Omega$ and 
	$\ker \Omega$. Then locally there exists a vector $\Rho \in \ker \Omega$ such that $$\nabla \Rho =0\, .$$
\end{lemma}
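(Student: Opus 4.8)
The plan is to reduce the statement to a first-order linear equation for a rescaling of a local generator of $\ker\Omega$ and then to use flatness to trivialize it. First I would exploit the hypothesis that $\nabla$ preserves $L:=\ker\Omega$. Since $(\mathcal{Z},\Omega)$ is a super phase-spacetime, $L$ is a Grassmann-even line subbundle, i.e.\ a rank-one free module over the superfunctions; so I would choose a local generator $\Rho_0\in\Gamma(L)$ whose body is a non-vanishing vector field on the body $Z$. Because $\nabla$ preserves $L$, we have $\nabla\Rho_0\in\Gamma(T^*\mathcal{Z}\otimes L)$, and rank one forces $\nabla\Rho_0=\alpha\otimes\Rho_0$ for a unique Grassmann-even super one-form $\alpha$, the connection one-form of the induced line connection $\nabla^L$. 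The claim is then equivalent to finding an invertible even superfunction $f$ with $\nabla(f\Rho_0)=0$, which upon applying the Leibniz rule amounts to solving $d_T f=-f\alpha$.

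Next I would use flatness (and torsion-freeness) of $\nabla$ to produce a local parallel frame $\{E_A\}$ of $T\mathcal{Z}$, with $\nabla E_A=0$; equivalently, one passes to affine coordinates in which the connection coefficients vanish, so that $\nabla$ acts on components by $d_T$. Existence of such a frame for a flat, torsion-free superaffine connection follows exactly as in the bosonic case from the integrability of the associated (super) horizontal distribution, using the superflow/exponential machinery recalled in Section~\ref{showkeys}. Writing $\Rho_0=c^A E_A$, the identity $\nabla\Rho_0=\alpha\otimes\Rho_0$ becomes $d_T c^A=\alpha\,c^A$ for every $A$. Since $\Rho_0$ has non-vanishing body, some even component, say $c^{i_0}$, has non-vanishing body and is therefore invertible on a neighbourhood, whence $\alpha=(c^{i_0})^{-1}d_T c^{i_0}=d_T\log c^{i_0}$ is $d_T$-exact.

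Finally I would set $\Rho:=(c^{i_0})^{-1}\Rho_0\in\Gamma(L)$ and verify it is parallel: its components $b^A:=(c^{i_0})^{-1}c^A$ satisfy $d_T b^A=-(c^{i_0})^{-1}\alpha\,c^A+(c^{i_0})^{-1}\alpha\,c^A=0$, so the $b^A$ are locally constant and $\nabla\Rho=(d_T b^A)E_A=0$. As $\Rho$ is a non-zero multiple of $\Rho_0$, it lies in $\ker\Omega$, which proves the lemma. The one genuinely non-trivial ingredient is the existence of the local parallel frame, equivalently that the flat induced connection $\nabla^L$ on the line bundle $L$ admits a local parallel section. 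The main subtlety I expect is bookkeeping the Grassmann parities so that $\alpha$ is even, and checking that its self-wedge $\alpha\wedge\alpha$ vanishes once $\alpha$ is exact (this is automatic, since $d_T g\wedge d_T g=\tfrac12\,d_T^2(g^2)=0$ for even $g$); this is precisely what reconciles flatness of $\nabla^L$ with $d_T\alpha=0$ and hence guarantees solvability of $d_T f=-f\alpha$.
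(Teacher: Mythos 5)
Your proof is correct and shares the overall skeleton of the paper's argument --- use preservation of $\ker\Omega$ to write $\nabla\Rho_0=\alpha\otimes\Rho_0$ for an even connection one-form $\alpha$ of the induced line connection, show $\alpha$ is locally $d_T$-exact, and rescale the generator to kill it --- but the decisive step is handled by a genuinely different mechanism. The paper works entirely with the one-form: flatness together with $\alpha\wedge\alpha=0$ gives $d_T\alpha=0$, local exactness $\alpha=d_TF$ then follows from the (super) Poincar\'e lemma, and $\Rho=e^{-F}\Rho_0$ is parallel. You instead trivialize all of $T\mathcal{Z}$ by a local $\nabla$-parallel frame $\{E_A\}$, express $\Rho_0=c^AE_A$, and read off $\alpha=d_T\log c^{i_0}$ from an even component with invertible body, so exactness arrives with an explicit primitive and the Poincar\'e lemma is never invoked; the rescaling $(c^{i_0})^{-1}$ is then produced concretely rather than as $e^{-F}$ for an abstract potential $F$. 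The trade-off is that your route leans on the existence of local parallel frames for a flat, torsion-free superaffine connection --- standard, but not entirely free on a supermanifold, and you correctly isolate it as the one nontrivial ingredient --- whereas the paper needs only the vanishing of the induced line-bundle curvature plus local exactness of $d_T$-closed super one-forms. Your parity bookkeeping (evenness of $\alpha$, and $\alpha\wedge\alpha=0$ once $\alpha$ is exact) is consistent with the identity $\Upsilon\wedge\Upsilon=0$ used in the paper, and your observation that a generator of $\ker\Omega$ can be chosen with non-vanishing body is justified by the structure of the kernel over the symplectic body $(Z,\omega_0)$. Both arguments are valid; yours is slightly more constructive, the paper's slightly more economical in its inputs.
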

\begin{proof}
		Since $\nabla$ preserves $\ker \Omega$, there exists a (everywhere non-vanishing on any contractible patch) vector $\Rho' \in \ker \Omega$ such that ,
		$$\nabla \Rho' =\Upsilon\otimes \Rho', $$
		for some super one-form $\Upsilon$.
		Using that $\nabla$ is torsion-free and flat, we observe
		$$0=(\nabla \wedge \Upsilon)\otimes \Rho'\, ,$$
		where we used $\Upsilon \wedge \Upsilon=0$. Hence $\Upsilon$ is $\nabla$-closed and in turn $d_T \Upsilon=0$ since $\nabla$ is torsion-free. So locally 
		$$\Upsilon=d_T F\, ,$$ for some even superfunction $F$. Now consider $\Rho:=e^{-F} \Rho'$. It follows that
		$$
		\nabla\Rho=(-d_T F+\Upsilon)\otimes \Rho'=0\, .$$ 
		\vspace{-1cm}
	\end{proof}
	\vspace{1cm}

The inner product between superfunctions satisfies a conservation law:\begin{lemma}
\label{evolve}
Let $(\mathcal{Z},\Omega,\nabla)$ be a super phase-spacetime equipped with a flat, torsion-free superaffine connection preserving $\Omega$.
	%$\ker \Omega$.
 Moreover let $\mathcal{M}_t$ be a smooth 1-parameter family of superhypersurfaces such that $\mathcal{M}_t:=\exp (t\Rho)(\mathcal{M})$, where $\Rho \in \ker \Omega$ and $\nabla \Rho=0$. Then for any pair of superfields $F,G$ their inner product obeys
  $$\frac{d (F,G)_{\Omega,\mathcal{M}_t}}{dt}\Big \vert_{t=0}=(\mathcal{L}_\Rho F,G)_{\Omega,\mathcal{M}}+(F,\mathcal{L}_\Rho G)_{\Omega,\mathcal{M}}\,.$$
 \end{lemma}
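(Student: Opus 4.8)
The plan is to collapse the Berezin integral to the body, differentiate there with the bosonic moving-hypersurface formula of Section~\ref{sec4}, and then commute the resulting Lie derivative through the vertical star product. First I would apply the reduction formula~\eqref{reduced} on each instantaneous super laboratory $\mathcal{M}_t$ (where the pullback of $\Omega$ is non-degenerate, so $\mathcal{M}_t$ is a super phase-space) to write
$$(F,G)_{\Omega,\mathcal{M}_t}=\int_{M_t}\omega_0^{\wedge\frac{\dim M}{2}}\,\big(F\filmstar_{\sss\nabla}G\big)_0\,,$$
where $M_t=\exp(t\rho_0)(M)$ is the body hypersurface and $\rho_0$ is the body of $\Rho$. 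The integrand here is the function $F_0G_0+\sum_{k\geq1}\big(\eta^{-k}(\nabla^kF,\nabla^kG)\big)_0$, the body of Equation~\eqref{etaprod}. Since $\iota_\Rho\Omega=0$ and $d_T\Omega=0$, Cartan's magic formula gives $\mathcal{L}_\Rho\Omega=0$ exactly as in Proposition~\ref{odd-dynamical}; taking bodies yields $\iota_{\rho_0}\omega_0=0$, hence $\mathcal{L}_{\rho_0}\omega_0=0$, so $\rho_0$ is symplectic for $\omega_0$ and the measure $\omega_0^{\wedge \dim M/2}$ is $\rho_0$-invariant.

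With the domain reduced to the bosonic hypersurface, I would differentiate using the moving-surface law already recorded in Section~\ref{sec4}: because $\rho_0$ preserves $\omega_0$,
$$\frac{d}{dt}\Big|_{t=0}\int_{M_t}\omega_0^{\wedge\frac{\dim M}{2}}\alpha=\int_M\omega_0^{\wedge\frac{\dim M}{2}}\,\mathcal{L}_{\rho_0}\alpha$$
for any $\alpha\in C^\infty Z$. Taking $\alpha=(F\filmstar_{\sss\nabla}G)_0$ and using compatibility of the body map with Lie derivatives ($v_0f_0=(vf)_0$, so $\mathcal{L}_{\rho_0}(H)_0=(\mathcal{L}_\Rho H)_0$) turns this into $\int_M\omega_0^{\wedge \dim M/2}\big(\mathcal{L}_\Rho(F\filmstar_{\sss\nabla}G)\big)_0$. (Equivalently one may skip the reduction and pull the integral over $\mathcal{M}_t$ back by the superflow $\exp(t\Rho)$, invoking property~$(iii)$ of the exponential map to convert $\tfrac{d}{dt}$ of the pullback into $\mathcal{L}_\Rho$ and then using the super-Liouville invariance of $\operatorname{Ber}(\Omega)\Theta(\Omega)$ from Lemma~\ref{preserve}; the two routes agree through~\eqref{reduced}.)

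The crux is then to show that $\mathcal{L}_\Rho$ is a derivation of the vertical star product, namely
$$\mathcal{L}_\Rho\big(F\filmstar_{\sss\nabla}G\big)=\big(\mathcal{L}_\Rho F\big)\filmstar_{\sss\nabla}G+F\filmstar_{\sss\nabla}\big(\mathcal{L}_\Rho G\big)\,.$$
Since $F\filmstar_{\sss\nabla}G=(\exp\eta^{-1})(F,G)$ is assembled from the $k$-fold covariant derivatives $\nabla^kF,\nabla^kG$ contracted with $\eta^{-1}$ as in~\eqref{etaprod}, it suffices to check that $\mathcal{L}_\Rho$ commutes with both $\nabla$ and with contraction by $\eta^{-1}$. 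Commutation with $\nabla$ follows from the hypothesis $\nabla\Rho=0$ together with flatness and torsion-freeness, which render $\mathcal{L}_\Rho-\nabla_\Rho$ a tensorial operator built from $\nabla\Rho$ and the curvature, both vanishing; hence $\mathcal{L}_\Rho\nabla^k(\pdot)=\nabla^k\mathcal{L}_\Rho(\pdot)$. Commutation with the $\eta^{-1}$-contractions follows because $\nabla\Rho=0$ makes the flow of $\Rho$ preserve the Koszul splitting and its bigrading, so each component of $\mathcal{L}_\Rho\Omega=0$ vanishes separately; in particular $\mathcal{L}_\Rho\eta=0$ and thus $\mathcal{L}_\Rho\eta^{-1}=0$. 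Granting these two commutations, the ordinary Leibniz rule for $\mathcal{L}_\Rho$ distributes across $F$ and $G$ in~\eqref{etaprod}, giving the displayed derivation identity. I expect this derivation property to be the main obstacle, precisely because $\filmstar_{\sss\nabla}$ is tied to the Koszul splitting and one must be certain $\Rho$ preserves that splitting — which is exactly what $\nabla\Rho=0$ supplies, and which the preceding lemma guarantees can be arranged locally.

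Finally I would combine the pieces: taking bodies of the derivation identity, integrating against $\omega_0^{\wedge \dim M/2}$, and reading~\eqref{reduced} backwards to re-promote each term to a super inner product yields
$$\frac{d}{dt}\Big|_{t=0}(F,G)_{\Omega,\mathcal{M}_t}=(\mathcal{L}_\Rho F,G)_{\Omega,\mathcal{M}}+(F,\mathcal{L}_\Rho G)_{\Omega,\mathcal{M}}\,,$$
as claimed.
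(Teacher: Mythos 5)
Your proposal is correct and follows essentially the same route as the paper's proof: reduce to the body via Equation~\eqref{reduced}, apply the moving-hypersurface formula using $\mathcal{L}_{\rho_0}\omega_0=0$, and use $\nabla\Rho=0$ together with flatness and torsion-freeness to commute $\mathcal{L}_\Rho$ past $\nabla^k$ and the $\eta^{-1}$-contractions. The only (harmless) difference is that you assert the full superfield-level derivation identity $\mathcal{L}_\Rho(F\filmstar_{\sss\nabla}G)=(\mathcal{L}_\Rho F)\filmstar_{\sss\nabla}G+F\filmstar_{\sss\nabla}(\mathcal{L}_\Rho G)$ via preservation of the Koszul splitting, whereas the paper only establishes and only needs the body-level statement $(\mathcal{L}_\Rho\eta)\vert_{\theta=0}=0$.
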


\begin{proof} 
Firstly, recall that given a vector field $v$ and a $2n$-form $\mu$, then 
 $$\frac{d}{dt} \left(\int_{M_t} \mu \right)\Big\vert_{t=0}=\int_M \mathcal{L}_v \mu\,,  $$
where $M_t:=e^{t v}(M) \hookrightarrow Z^{2n+1}$ is a 1-parameter family of hypersurfaces generated by $v$. 

Let us now begin with the right hand side of the claimed conservation law, for which we consider the decomposition of $\Omega=\omega+A+\eta$ corresponding to the Koszul bundle of $\nabla$. Using \eqref{reduced} we have 
$$(\mathcal{L}_\Rho F,G)_{\Omega,\mathcal{M}}+(F,\mathcal{L}_\Rho G)_{\Omega,\mathcal{M}}=\int_M \omega_0^{\wedge n} \left[\mathcal{L}_\Rho F \filmstar_{\sss \nabla} G+F \filmstar_{\sss \nabla}\mathcal{L}_\Rho G\right]_0\,. $$ 
We next expand  $\filmstar_{\sss \nabla}$ order by order; a general term looks like
(see~\nn{GretaGarbo} and~\nn{etaprod}) 
\begin{equation}\label{genterm}
	\eta^{-k}(\mathcal{L}_\Rho\nabla^k F,\nabla^k G)+\eta^{-k}(\nabla^k F,\mathcal{L}_\Rho\nabla^k G)\,,
\end{equation}
where we have used that $\Rho$ is parallel with respect to the torsion-free and flat connection~$\nabla$. Furthermore, $\eta$ and hence $\eta^{-1}$ is preserved up to $\mathcal{O}(\theta^0) $ under flows of $\Rho$. To see this note that,
$\mathcal{L}_\Rho \Omega=0 \Rightarrow \omega_0^{\wedge n} \Theta \mathcal{L}_\Rho \Omega=0 \Rightarrow (\mathcal{L}_{\Rho}\eta)\vert_{\theta=0} =0 $. 
By the Leibniz rule, it now follows that the body of the general term 
 \eqref{genterm} is 
$$\mathcal{L}_{\rho_0}\left[ \eta_0^{-k}\left( (\nabla^k F)_0,(\nabla^k G)_0)\right)\right]\, ,$$ where the vector field $\rho_0$ is the body of $\Rho$.

We now consider the left hand side of the claimed conservation law. Applying Equation
~\eqref{reduced} followed by
 the first display of this proof gives
\begin{equation*}
\frac{d (F,G)_{\Omega,\mathcal{M}_t}}{dt}\Big \vert_{t=0}
=
	\int_M \mathcal{L}_{\rho_0}\left( \omega^{\wedge n}F \filmstar _{\sss \nabla} G\right )_0
	\,. 
\end{equation*}
The proof is completed by 
recalling how the exponential map acts on superhypersurfaces and their base manifolds, and then
using  ${\mathcal L}_{\rho_0} \omega_0=0$ in conjunction with~\nn{GretaGarbo} and~\nn{etaprod}.

%EXPLAIN the above display! Or better, fix the discussion on page 14, then the above will be clear.
%Also the below discussion ought go back to page 14 to establish that acting with the body vector on  the body is the body of the superfield vector field action....
%
%
%FIX!
%
%The one-parameter family of superdiffeomorphism generated by $\Rho$, $\varphi^t_\Rho:=(\varphi^t_0,\psi^t)$ where $\varphi^t_0$ is the family of diffeomorphisms on the body $Z$ and $\psi^t$ is the family of morphisms of algebras on $M_t$. This in turn induces a one-paramter family of diffeomorphisms on the body of $\mathcal{Z}$ generated by the body of $\Rho$.
%
%$$\frac{d \varphi^t}{dt}=\Rho(\varphi^t)$$
%The body of the previous equation is
%$$\frac{d\varphi_0^t}{dt}=\rho_0(\varphi_0^t)$$
\end{proof}

The above result gives an evolution rule for expectations of observables with respect to changes of laboratory.
\begin{corollary}\label{evolvingexpectations}
Let $(\mathcal{Z},\Omega,\nabla)$ be a super phase-spacetime equipped with a flat, torsion-free superaffine connection preserving $\Omega$. Moreover let $\mathcal{M}_t$ be a smooth 1-parameter family of superhypersurfaces such that $\mathcal{M}_t:=\exp (t\Rho)(\mathcal{M})$, where $\Rho \in \ker \Omega$ and $\nabla \Rho=0$. Then the evolution of the expectation of an observable $X \in \mathcal{A}Z$ with respect to a state function $\Phi \in \mathcal{A}Z$ is given by 
\begin{equation*}
	\frac{d}{dt} \langle X \rangle_{\Phi, \mathcal{M}_t} = \langle \mathcal{L}_\Rho X \rangle_{\Phi, \mathcal{M}_t} \, .
\end{equation*}
\end{corollary}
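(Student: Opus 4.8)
The plan is to express the expectation as a quotient of two of the inner products appearing in Lemma~\ref{evolve} and then differentiate with the quotient rule. Writing $N(t):=(\Phi,X\filmstar_{\sss\nabla}\Phi)_{\Omega,\mathcal{M}_t}$ and $D(t):=(\Phi,\Phi)_{\Omega,\mathcal{M}_t}$, associativity of $\filmstar_{\sss\nabla}$ gives $\langle X\rangle_{\Phi,\mathcal{M}_t}=N(t)/D(t)$, in agreement with the expectation formula at the close of Section~\ref{probably}. Two facts make the computation collapse: since $\Rho\in\ker\Omega$ and $d_T\Omega=0$, Cartan's magic formula yields $\mathcal{L}_\Rho\Omega=0$; and since $\Phi$ is a state function it lies in $\mathcal{A}_\Omega$, so $\mathcal{L}_\Rho\Phi=0$.

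First I would differentiate at $t=0$. Applying Lemma~\ref{evolve} with $F=G=\Phi$ gives $D'(0)=(\mathcal{L}_\Rho\Phi,\Phi)_{\Omega,\mathcal{M}}+(\Phi,\mathcal{L}_\Rho\Phi)_{\Omega,\mathcal{M}}=0$, so the normalization is conserved along the flow. Next, applying Lemma~\ref{evolve} with $F=\Phi$ and $G=X\filmstar_{\sss\nabla}\Phi$ gives $N'(0)=(\mathcal{L}_\Rho\Phi,X\filmstar_{\sss\nabla}\Phi)_{\Omega,\mathcal{M}}+(\Phi,\mathcal{L}_\Rho(X\filmstar_{\sss\nabla}\Phi))_{\Omega,\mathcal{M}}$; the first term vanishes, and the Leibniz identity $\mathcal{L}_\Rho(X\filmstar_{\sss\nabla}\Phi)=(\mathcal{L}_\Rho X)\filmstar_{\sss\nabla}\Phi+X\filmstar_{\sss\nabla}(\mathcal{L}_\Rho\Phi)=(\mathcal{L}_\Rho X)\filmstar_{\sss\nabla}\Phi$ collapses the second to $(\Phi,(\mathcal{L}_\Rho X)\filmstar_{\sss\nabla}\Phi)_{\Omega,\mathcal{M}}$. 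The quotient rule with $D'(0)=0$ then yields $\frac{d}{dt}\langle X\rangle_{\Phi,\mathcal{M}_t}\big|_{t=0}=N'(0)/D(0)=\langle\mathcal{L}_\Rho X\rangle_{\Phi,\mathcal{M}}$. Finally, the one-parameter group property $\mathcal{M}_{t+s}=\exp(s\Rho)(\mathcal{M}_t)$ lets me re-base at an arbitrary time, upgrading the $t=0$ identity to the claimed identity for all $t$.

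The hard part will be justifying that $\mathcal{L}_\Rho$ acts as a derivation of the vertical star product $\filmstar_{\sss\nabla}$, since this product is assembled from $\nabla$ and $\eta^{-1}$ rather than from pointwise multiplication. This rests on two inputs already present in the proof of Lemma~\ref{evolve}: because $\nabla\Rho=0$ and $\nabla$ is flat and torsion-free, $\mathcal{L}_\Rho$ commutes with $\nabla$ and so passes freely through the iterated covariant derivatives $\nabla^k$ appearing in \eqref{etaprod}; and because $\mathcal{L}_\Rho\Omega=0$, the fibre metric $\eta$, hence $\eta^{-1}$, is preserved to the order surviving the body projection of \eqref{reduced} that computes the Berezin integral. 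Granting these, the derivation property holds term-by-term in the finite expansion of $\exp\eta^{-1}$ at exactly the level seen by the inner product, which is all the numerator computation requires.
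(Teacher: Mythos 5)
Your proposal is correct and follows essentially the same route as the paper's (very terse) proof: the denominator is conserved because $\mathcal{L}_\Rho\Phi=0$, so only the change in the observable contributes to the numerator. You have simply made explicit what the paper leaves implicit, namely the appeal to Lemma~\ref{evolve} via the quotient rule and the fact that $\mathcal{L}_\Rho$ acts as a derivation of $\filmstar_{\sss\nabla}$ at the level seen by the Berezin integral, using exactly the ingredients ($\nabla\Rho=0$, flatness, and $(\mathcal{L}_\Rho\eta)|_{\theta=0}=0$) already deployed in the proof of that lemma.
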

\begin{proof}
To see this, note that the state function $\Phi$ obeys the Liouville equation implying that total probabilities are preserved. Hence the only contribution to the expectation can come from the change in the observable itself. 
\end{proof}

\section{Example: Evolution of Discrete Probabilities}

We now demonstrate our key ideas with a two-bit system evolving in time. Our basic data is the supermanifold 
$\mathcal{Z}=\mathbb{R}^{1|2}\ni(t,\theta^a)$ with $a=1,2$.
For the flat, torsion-free affine connection we take
$$\nabla = d_T$$
in the coordinate system $(t, \theta^a)$.
Our dynamics shall be described by the  action functional of~\cite{Berezin:1976eg}
\begin{equation*}
	S 
%	= \int_\gamma \xi 
	= \int \tfrac{d\tau}i \big(\eta_{ab}(t)\theta^a \dot{\theta^b} -\frac 12 H(t)\epsilon_{ab}\theta^a\theta^b \dot{t} \big)\, .
\end{equation*}
Assuming $\eta$ is invertible, this
is the most general super phase-spacetime dynamics on~${\mathcal Z}$.
In the above the dot denotes 
%(possibly formal) 
differentiation with respect to the 
worldline parameter $\tau$
and $\epsilon_{ab}$ is the two-dimensional Levi-Civita symbol.
Better said, this action is an integral of a hermitean super one-form 
$$
\xi = i\eta_{ab}(t)\theta^a d\theta^b \color{black}+\color{black}\frac i2H(t)\epsilon_{ab}\theta^a\theta^b dt \, .
$$
Here we used that, for a one-form  $d\tau$ (given by $d_T$ of a bosonic function), one has~$d\tau \hh\dot \theta = d\theta$ (this choice of sign ensures that the variation of the above action agrees with the dynamics of the super symplectic form below).
This gives the symplectic form
$$
\Omega=d_T \xi 
=i\eta_{ab}(t)d\theta^a d\theta^b
-i\big(H(t) \epsilon_{\color{black}ba\color{black}} +\eta'_{ab}(t)\big)\theta^b d\theta^a dt\, .
$$
The data $(\mathbb{R}^{1|2},\Omega)$ defines a super phase-spacetime.
Dynamics is determined by the supervector ({\it conferatur} Proposition~\nn{p2P})  
$$
P=\frac{\partial}{\partial t}+
\theta^a M^b{}_a
\frac{\partial}{\partial \theta^b}\, ,
$$
where
$$
M^b{}_a:= -\frac{1}{2}\eta^{bc}(t)\big(H(t) \epsilon_{\color{black}ac\color{black}} +\eta'_{ca}(t)\big)\, ,
$$
and $\eta^{ab}$ is the inverse of $\eta_{ab}$. Also $\prime$ denotes $t$-differentiation.

Now we consider a state function $\Phi$ given by a hermitean superfield
$$
\Phi = \phi + \phi_a \theta^a - \tfrac i2 \varphi\epsilon_{ab}\theta^a \theta^b\, ,
$$
(where $\epsilon_{12}=1$) subject to 
$$
P\Phi = 0\, .
$$
Here the four functions of $t$ given by $(\phi, \phi_a,\varphi)$ are all real and must obey 
\begin{equation}\label{eoms}
\phi'=0\, ,\quad \phi_a' + M^b{}_a  \phi_b =0\, ,\quad
\varphi' + M^a{}_a \varphi = 0\, .  
\end{equation}
Note that $M^a{}_a =-\frac12 \eta^{ab} \eta'_{ab}$.

We now consider the super laboratory $\Sigma_T$ given by superhypersurface $t=T$.
The pullback of $\Omega$
is then
\begin{equation}\label{pullback}
\Omega^*=i \eta_{ab}(T) d\theta^a d\theta^b\, .
\end{equation}
Before computing the star product, we  solve the Equations~\nn{eoms}. 
For that we impose boundary conditions
for
 $\Phi(T)$ along $\Sigma_T$ given by $\Phi(T)=(\phi^{\rm i}, \phi_a^{\rm i},\varphi^{\rm i})$. Then 
$$
\phi(t) = \phi^{\rm i}\, ,\quad
\phi_a(t) = U(t,T)^b{}_a\, \phi_b^{\rm i}\, ,\quad
\varphi(t)=u(t,T)\varphi(T)\, ,
$$
where 
$
U(t,T)^b{}_a=\Big[\operatorname{P} \exp\big(-\int^t_T M \big)\Big]^b{}_a
%\eta^{bc}(T) \Big[\operatorname{P}\exp\big(\frac12\int_T^t H \eta^{-1}\epsilon\big)\Big]^d{}_c\, 
%\eta_{da}(t)
$, 
$
u(t,T)=\exp\Big({-\int^t_T M^a{}_a} \Big)
=\exp\Big({\frac12 \int^t_T \eta^{ab}\eta^\prime_{ab}} \Big)
$, and~$\operatorname{P}$ denotes  path ordering.
% and we have used that $M^a{}_b$ is in part pure gauge to solve for $U^b{}_a$.

To compute the vertical star product of state functions we first need the Clifford map~$\gamma$. %the Clifford bundle ${\mathcal C}M$. 
 For that we introduce orthonormal frames $e_a{}^{\bar a}$ for the vertical distribution on~$T^*{\mathcal M}$ determined by the Koszul connection, subject to
$$
\eta_{ab}(t) d\theta^a d\theta^b = \delta_{\bar a \bar b} e_a{}^{\bar a}(t)e_b{}^{\bar b}(t) d\theta^a d\theta^b\, .
$$
We  also  define
$
\theta^{\bar a}(t) := e_a{}^{\bar a} (t)\theta^a
$
so that
$$
\Phi = \phi + \phi_{\bar a} \theta^{\bar a} - \tfrac i2 
\bar \varphi \, \epsilon_{\bar a\bar b}\theta^{\bar a} \theta^{\bar b}\, ,
$$
where 
$
\phi_{\bar a} := e^a{}_{\bar a} \phi_a\, ,\quad
\bar \varphi := (\det{}^{-1} e) \hh \varphi
$,
and $\epsilon_{\bar 1\bar 2}=1$.
We may now
  employ the map $\sigma$ of Example~\ref{fermistates} along each fiber.
This gives an invertible map from hermitean superfunctions to hermitean matrices taking values in functions of $t$, 
$$
\sigma(\Phi)=\begin{pmatrix}
\phi +\bar \varphi & \phi_{\bar 1}- i \phi_{\bar 2}\\
\phi_{\bar 1}+i\phi_{\bar 2}& \phi-\bar\varphi
\end{pmatrix}\, ,
$$
such that the star product is computed via matrix multiplication:
$$
\Phi\filmstar_{\sss \nabla} \Phi = 
\sigma^{-1}\big(\sigma(\Phi)^2\big)
=\phi^2 
+\phi_{\bar 1}^2+\phi_{\bar 2}^2
+\bar \varphi^2 
+ 2 \phi (\phi_{\bar 1} \theta^{\bar 1}
+  \phi_{\bar 2} \theta^{\bar 2})
-i \phi \bar \varphi 
 \epsilon_{\bar a\bar b}\theta^{\bar a} \theta^{\bar b}\, .
$$
Hence, once again,  a hermitean superfield $\Psi$ is a vertical star square of a non-zero state function precisely when its body is positive. As discussed earlier (see Example~\ref{conifer}) this condition is both a necessary and sufficient one.

\subsection{Measurements}

We can now construct expectations of observables. For that let $X=x+x_a\theta^a -\frac i2 \chi \epsilon_{ab} \theta^a \theta^b $ be a hermitean superfield, which we may also expand in the adapted Grassmann coordinates~$\theta^{\bar a}(t)$, 
$$
X=x+ x_{\bar a} \theta^{\bar a} - \tfrac i2 
\bar \chi\, \epsilon_{\bar a\bar b}\theta^{\bar a} \theta^{\bar b}\, .
$$
The pullback in Equation~\nn{pullback}
in terms of the adapted $\theta^{\bar a}$ is
$$
\Omega^* = i \delta_{\bar a\bar b} d\theta^{\bar a}d\theta^{\bar b}\, .
$$
Hence
the expectation of the  observable $X$  in the $t=T$ superlaboratory with respect to a state $\Psi = \Phi\filmstar_{\sss \nabla} \Phi $ is given by (see Equation~\nn{langX})
\begin{equation}\label{expect}
\langle X\dangle_{\Psi,T}=
x(T) +
\frac{2\phi (\phi_{\bar 1}x_{\bar 1}+\phi_{\bar 2}x_{\bar 2}+\bar \varphi\bar \chi)}{\phi^2 
+\phi_{\bar 1}^2+\phi_{\bar 2}^2
+\bar \varphi^2 }\Big|_{t=T}
=x(T) +
\frac{2\phi (\phi_a \eta^{ab} x_b+\det^{-1}\! \eta \, \hh\varphi  \chi)}{\phi^2 
+\phi_a\eta^{ab} \phi_b+\det^{-1}\!\eta\,  \hh\varphi^2 }\Big|_{t=T}
\, .
\end{equation}

To simplify the remainder of our discussion, let us assume $\eta$ and $H$ are $t$-independent. 
It follows that $u(T,t)=1$ and 
\begin{equation}\label{letsmove}
U(T,t)=
\cos\big(\tfrac{(t-T)H}{2\surd\det \eta}\big)\operatorname{Id}
+
\sqrt{\det \eta}\, 
\sin\big(\tfrac{(t-T)H}{2\surd\det \eta}
\big)\hh
 \eta^{-1} \epsilon \, .
\end{equation}
 By construction $U^\top \eta \,\hh U = \eta$ so that total probabilities are conserved in the sense that $(\Phi,\Phi)_{\Omega,\Sigma_t}$ is constant,
 in concordance with  Lemma~\ref{evolve}.
 It is then not difficult to see that
 $$
 \left.
 \frac{d\langle X\dangle_{\Psi,t}}{dt}\right|_{t=T}
 =\langle {\mathcal L}_{P}X\dangle_{\Psi,T}\, ,
 $$
 as predicted by Corollary~\ref{evolvingexpectations}
 (note that the 
 supervector $P$ acts on
  the components $(x,x_a,\chi)$ of $X$ according to 
% , by virtue of the constancy assumptions, 
$(x',(x'+Mx)_a,\chi')$).

Finally we can consider the evolution of a set of probabilities. 
For further simplicity we now take the case $\eta=\delta$ (we could always change basis to achieve this).
To discuss standard probabilities, we need to introduce some choice of a polyhedral convex cone
containing the convex cone of states (see~\cite{Heller} for a related construction).
For that recall the 
indicator functions of Example~\ref{conifer}
\begin{align*}
X_1&=\tfrac14+\tfrac{\sqrt 8 \hh \theta^1+ i \theta^1\theta^2}{12}\, ,
\qquad\qquad\quad \:
X_2=\tfrac14
+\tfrac{-\sqrt2\theta^1+\sqrt 6 \theta^2
+ i \theta^1\theta^2}{12}\, ,\\
X_3&=\tfrac14
+\tfrac{-\sqrt2\theta^1-\sqrt 6 \theta^2
+ i \theta^1\theta^2}{12}\, ,\qquad
\;\;X_4 =\tfrac14
+\tfrac{-3i\theta_1\theta_2}{12}\, .
\end{align*}
We next write a state function in terms of probabilities corresponding to expectations of these indicators.
Note that, unlike states, state functions can have vanishing body. From Equation~\nn{expect} we see that any state function with vanishing body $\phi=0$ here has indicator expectations
$$
\langle X_i\rangle_{}=\frac14 \, ,\quad i=1,\ldots, 4\, .
$$
Otherwise, for $\phi\neq 0$,  we may first normalize $\Phi$ along a fixed $t=T$ laboratory such that
its star-square has unit body 
\begin{equation}\label{nrlize}
(\Phi,\Phi)_{T} :=
\big(\Phi{\filmstar_{\sss\nabla}}\Phi\big)_0=	\phi^2 + \phi_a \delta^{ab} \phi_b + \varphi^2=1\, ,
\end{equation}
which simplifies Equation~\nn{expect}. Calling $p_i:=\langle X_i\rangle\in [0,1]$ we then have
\begin{align*}
p_1&=\tfrac14+\phi \left(\tfrac{\sqrt 8 \hh \phi_1-\varphi}{6}\right)\, ,
\qquad\qquad\quad \: 
p_2=\tfrac14
+\phi \left(\tfrac{-\sqrt2\phi_1+\sqrt 6 \phi_2
-\varphi}{6}\right)\, ,\\
p_3&=\tfrac14
+\phi \left(\tfrac{-\sqrt2\phi_1-\sqrt 6 \phi_2
-\varphi}{6}\right)\, ,\qquad
\;\;p_4 =\tfrac14
+\tfrac{\phi\varphi}{2}\, .
\end{align*}
Since $\phi\neq 0$, we may solve $(\phi_1,\phi_2,\varphi)$ in terms of $(\phi,p_1,p_2,p_3)$

$$
	\phi_1=\frac{2p_1-p_2-p_3}{\sqrt{2}\phi},\quad \phi_2=\frac{\sqrt{3}(p_2-p_3)}{\sqrt{2}\phi},\quad \varphi=\frac{3-4(p_1+p_2+p_3)}{2\phi}\, .
$$
Finally we can find $\phi$ by plugging the previous list of expressions for $\phi_a,\varphi$ in \eqref{nrlize}. While the resulting quartic equation itself has four solutions, we choose the root
$$
\phi=\frac{\sqrt{1+2\sqrt{-\Delta_{\rm ellipsoid}
}}}{\sqrt{2}}\, ,$$
where $\frac{\Delta_{\rm ellipsoid}}{6}:=(p_1+p_2+p_3)^2-(p_1+p_2+p_3+p_1p_2+p_2p_3+p_1p_3)+\frac{1}{3}$.
This makes sense only when $\Delta_{\rm ellipsoid}\leq 0$, implying that the supersymplectic probabilities are constrained to lie inside the ellipsoid of  Equation~\eqref{constraint}in $\mathbb{R}_{\geq 0}^3$. 

\newcommand{\cccc}{{\frak c}}
\newcommand{\ssss}{{\frak s}}

It remains to study the evolution of probabilities. Returning to Equation~\nn{letsmove} and  calling 
$\cccc(t) :=\cos\big(\frac{(t-T)H}2\big)$, 
$\ssss(t) :=\sin\big(\frac{(t-T)H}2\big)$, we obtain
$$
\phi_1(t) =\hh\hh \hh\:\:\cccc(t) \phi_1(T)+\hh
 \ssss(t) \phi_2(T)\, ,
 $$
 $$
\phi_2(t)  =-\ssss(t) \phi_1(T)+
 \cccc(t) \phi_2(T)\, .
$$
Thus
%\begin{align}
%	\begin{pmatrix}
%	p_1(t)\\p_2(t)
%\end{pmatrix}= \begin{pmatrix}
%	\mathfrak{c}(t)+\frac{\mathfrak{s}(t)}{\sqrt{3}} & \frac{2 \mathfrak{s}(t)}{\sqrt{3}}\\ -\frac{2 \mathfrak{s}(t)}{\sqrt{3}} & \mathfrak{c}(t)-\frac{\mathfrak{s}(t)}{\sqrt{3}}
%\end{pmatrix} \begin{pmatrix}
%	p_1(T)\\p_2(T)
%\end{pmatrix}+\frac{p_1(T)+p_2(T)+p_3(T)}{3}\begin{pmatrix}
%	1-\mathfrak{c}(t)-\sqrt{3}\mathfrak{s}(t)\\1-\mathfrak{c}(t)+\sqrt{3}\mathfrak{s}(t)
%\end{pmatrix}
%\end{align}
\begin{align*}
	\begin{pmatrix}
		p_1(t)\\[1.5mm]p_2(t)\\[1.5mm] p_3(t)\\[1.5mm] p_4(t)
	\end{pmatrix}=\begin{pmatrix}
		\frac{1+2\cccc(t)}{3}&\frac{1-\cccc(t)+\sqrt{3}\ssss(t)}{3}&\frac{1-\cccc(t)-\sqrt{3}\ssss(t)}{3} &0\\
		\frac{1-\cccc(t)-\sqrt{3}\ssss(t)}{3}& \frac{1+2\cccc(t)}{3}&\frac{1-\cccc(t)+\sqrt{3}\ssss(t)}{3}& 0\\
		\frac{1-\cccc(t)+\sqrt{3}\ssss(t)}{3}& \frac{1-\cccc(t)-\sqrt{3}\ssss(t)}{3}&\frac{1+2\cccc(t)}{3}&0\\[1mm]
		0&0&0&1
	\end{pmatrix}\begin{pmatrix}
		p_1(T)\\[1.5mm]p_2(T)\\[1.5mm] p_3(T)\\[1.5mm]p_4(T)
	\end{pmatrix}\, .
	\end{align*}
	\begin{figure}
	\begin{tikzpicture}
\pgftext{
	\includegraphics[scale=.32]{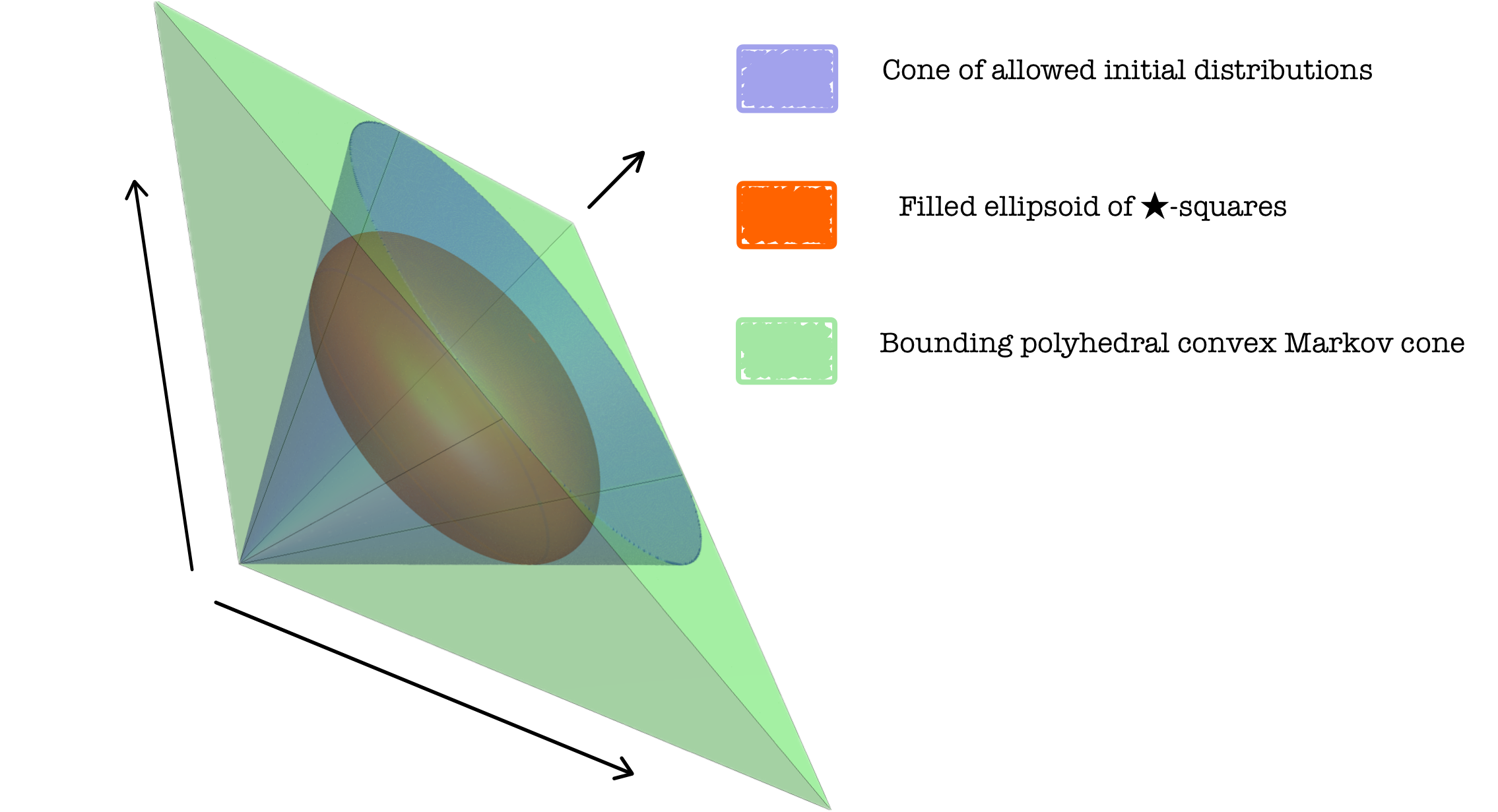}}

	\node[] at (3,2.5) {\scriptsize  $\Delta_{\rm cone}\leq 0$};
	\node[] at (3,1.3) {\scriptsize $\Delta_{\rm ellipsoid}\leq 0$};
	\node[] at (3,0.1) {\scriptsize \quad \quad \quad \quad $p_1,p_2,p_3\geq 0, \: \: p_1+p_2+p_3\leq 1$};
	
	\node[] at (-3,-2.7) {  $p_1$};
	\node[] at (-5.3,.2) {  $p_2$};
	\node[] at (-1.4,2.1) {  $p_3$};

\end{tikzpicture}	\caption{Dynamics corresponds to a path inside the orange ellipsoid}
	\label{fig:ellipsoid}
\end{figure}

	For probabilities to evolve to probabilities, examining the above transition matrix we see that  the initial probabilities must obey
the conical constraint	
	$$\Delta_{\rm cone}:= p_1^2+p_2^2+p_3^2-2(p_1p_2+p_1p_3+p_2p_3)\leq 0\, .$$
	This is automatically satisfied for supersymplectic probability distributions, {\it videlicet} those that satisfy $\Delta_{\rm ellipsoid}\leq 0$ or lie inside the ellipsoid of  Equation~\eqref{constraint}, as depicted in Figure~\ref{fig:ellipsoid}. 
While the above transition matrix is not Markov  (its  entries in any row or column sum to unity without being individually strictly non-negative)  its 
eigenvalues do have magnitudes  bounded by unity. It thus defines a Markov-like stochastic process.

\section*{Acknowledgements}
S.C. acknowledges a U.C. Davis Dean's Summer Fellowship.
A.W. acknowledges support from the Royal Society of New Zealand via Marsden Grant 19-UOA-008.  A.W. was also supported by  the Simons Foundation Collaboration Grant for Mathematicians~ID 686131. 
C.Y. acknowledges support of the Scientific and Technological Research Council of Turkey (TUBITAK) 2214-A Grant No. 1059B141900615.

%
%
%We thus  the corresponding transition viz. the entries in any row or column add up to unity without being strictly non-negative. Like ordinary Markov matrices, these restricted ones also have eigenvalues whose magnitudes are bounded by 1.
%This demonstrates that discrete dynamics of our two-bit system can be viewed as a Markov chain after choosing appropriate indicator functions. 
	
%We observe that the probabilities evolve according a RESTRICTED {\it Markovian} transition matrix viz. the entries in any row or column add up to unity and for a subset of initial conditions ($\Delta\geq 0$ is a sufficient condition for this) probabilities are mapped to probabilities.

%\begin{align*}
%p_1(t)&=
%\tfrac13\big(1+2\cccc(t)\Big) \, p_1(T) 
%- 
%\tfrac23\big(\cccc(t)
%-\tfrac{\surd 3}2\ssss(t)
%\Big)\,  p_2(T)
%...
%%\\
%%OLD\tfrac14+\phi \left(\tfrac{\sqrt 8 \hh \phi_1-\varphi}{6}\right)\, ,
%%\qquad\qquad\quad \: 
%%p_2(t)=\tfrac14
%%+\phi \left(\tfrac{-\sqrt2\phi_1+\sqrt 6 \phi_2
%%-\varphi}{6}\right)\, ,\\
%%p_3(t)&=\tfrac14
%%+\phi \left(\tfrac{-\sqrt2\phi_1-\sqrt 6 \phi_2
%%-\varphi}{6}\right)\, ,\qquad
%%\;\;p_4(t) =\tfrac14
%%+\tfrac{\phi\varphi}{2}\, .
%\end{align*}
%

\newpage

\end{document}